%% ------ Short ----------

%\documentclass[conference,10pt]{IEEEtran}

%% ------ Long -----------

% \documentclass[11pt]{article}
\documentclass[a4paper]{article}
% \usepackage[top=1.3in, bottom=1.3in, left=1.6in, right=1.6in]{geometry}
% \usepackage{geometry}
% \geometry{a4paper, textwidth=5.5in, textheight=9in, marginparsep=7pt, marginparwidth=.4in}

%% -----------------------

%\usepackage{a4wide}
\usepackage{amssymb}
\usepackage{amsmath}
\usepackage{amsthm}
\usepackage[plainpages=false,pdfpagelabels]{hyperref}
\usepackage{xspace}
\usepackage{graphicx}
\usepackage{mdwlist}
\usepackage{color}
\usepackage{soul}
\usepackage{authblk}
\usepackage{algpseudocode}
\usepackage{xparse}
\usepackage{tikz}
\usepackage{3dplot}
%\usepackage[active,tightpage]{preview}  %generates a tightly fitting border around the work
%\PreviewEnvironment{tikzpicture}
%\setlength\PreviewBorder{2mm}
%% \usetikzlibrary{external}
%% \tikzexternalize
%% \tikzexternaldisable
%% \tikzset{external/force remake=true}
%% \tikzset{external/system call={%
%%   pdflatex \tikzexternalcheckshellescape -halt-on-error -interaction=batchmode -jobname "\image" "\texsource"; 
%%   ps2pdf13 "\image".pdf "\image-13".pdf && cp "\image-13".pdf "\image".pdf}}

\usepackage{changebar}
\setcounter{changebargrey}{0} % darkness of change bars (0 = black)

% Put a period after a paragraph title.
\let\originalparagraph\paragraph
\renewcommand{\paragraph}[2][.]{\originalparagraph{#2#1}}

% Margin notes

% Symbol to end definitions.

% -----------------------------------------------------------------------------
% Theorem environments
% -----------------------------------------------------------------------------

\newtheorem{thm}{Theorem}
\newtheorem{lem}[thm]{Lemma}

\newtheorem{obs}[thm]{Observation}
\newtheorem{prop}[thm]{Proposition}
\newtheorem{definition}[thm]{Definition}
\newtheorem{problem}[thm]{Problem}

% Examples environment
\newcounter{example}[section]
  %{\par\hspace*{\fill}$\blacksquare$ \\}
  {\hspace*{\fill}$\blacksquare$ \\}

\newcommand{\de}{:=}
\newcommand{\set}[1]{\ensuremath{\{{#1}\}}}
\newcommand{\condset}[2]{\set{{#1} \; | \; {#2}}}
\newcommand{\bigset}[1]{\ensuremath{\left\{{#1}\right\}}}
\newcommand{\bigcondset}[2]{\bigset{\left.{#1} \; \right| \; {#2}}}

\newcommand{\NN}{\ensuremath{\mathbb{N}}\xspace}
\newcommand{\ZZ}{\ensuremath{\mathbb{Z}}\xspace}

\newcommand{\QQ}{\ensuremath{\mathbb{Q}}\xspace}

\newcommand{\nnQQ}{\ensuremath{\mathbb{Q}_{\ge 0}}\xspace}

\newcommand{\tup}[1]{\vec{#1}}
% \renewcommand{\vec}[1]{\bar{#1}}

% --- Logics 
% \newcommand{\logic}[1]{\textsf{\upshape #1}\xspace}
\newcommand{\logic}[1]{\text{\upshape #1}\xspace}
\newcommand{\FPC}{\logic{FPC}}
\newcommand{\FPR}{\logic{FPR}}
\newcommand{\FOC}{\logic{FOC}}

\newcommand{\CPTC}{\logic{$\tilde{\text C}$PT(Card)}}

 % Separates a quantifier block from its target formula.
\newcommand{\countingTerm}[1]{\#_{#1}}

% Standard signatures
\newcommand{\vocvec}{\tau_{\text{vec}}}

\newcommand{\vocmat}{\tau_{\text{mat}}}
\newcommand{\vocint}{\tau_{\ZZ}}
\newcommand{\vocnum}{\tau_{\QQ}}
\newcommand{\vocmatch}{\tau_{\text{match}}}

% Logical operators

\newcommand{\logicoperator}[1]{\mathbf{#1}}
\newcommand{\ifpop}{\logicoperator{ifp}}
\DeclareMathOperator{\fin}{fin}

% --- Relational structures

\newcommand{\struct}[1]{\ensuremath{\mathbf #1}\xspace}

\newcommand{\univ}[1]{\ensuremath{\dom(\struct #1)}} 
\newcommand{\isom}{\cong}

% --- Binary encodings and such

% --- Polytopes and linear programming
\newcommand{\poly}[2]{P_{{#1},{#2}}}

\newcommand{\norm}[1]{|\!| #1 |\!|}

\newcommand{\inorm}[1]{\norm{#1}_{\infty}}

% --- Other macros

\newcommand{\PT}{\ensuremath{\mathrm{P}}\xspace}
\newcommand{\NP}{\ensuremath{\mathrm{NP}}\xspace}

\newcommand{\Pp}{\ensuremath{\mathcal{P}}\xspace}

\newcommand{\fold}[2]{\ensuremath{{[{#1}]^{#2}}}}
\newcommand{\afold}[2]{\fold{#1}{\tilde{#2}}}
\newcommand{\bigfold}[2]{\ensuremath{{\left[{#1}\right]^{#2}}}}

\DeclareDocumentCommand\fd{m g}{\fold{#1}{\IfNoValueTF{#2}{\sigma}{#2}}}
\DeclareDocumentCommand\afd{m g}{\afold{#1}{\IfNoValueTF{#2}{\sigma}{#2}}}
\DeclareDocumentCommand\unfd{m g}{\fold{#1}{\IfNoValueTF{#2}{-\sigma}{#2}}}
\DeclareDocumentCommand\bigfd{m g}{\bigfold{#1}{\IfNoValueTF{#2}{\sigma}{#2}}}

\newcommand{\conv}{\mathrm{conv}}
\newcommand{\cone}{\mathrm{cone}}

\newcommand{\order}[1]{\textsf{#1}}

\newcommand{\val}{\ensuremath{\mathrm{val}}}
\newcommand{\bs}{\ensuremath{\backslash}}
\newcommand{\fl}{$(s,t)$-flow\xspace}
\newcommand{\fls}{$(s,t)$-flows\xspace}
\newcommand{\cu}{$(s,t)$-cut\xspace}
\newcommand{\cus}{$(s,t)$-cuts\xspace}
\newcommand{\MC}[1]{{K_{#1}}}

\newcommand{\Coll}{\textsc{Collapse}\xspace}
\newcommand{\WMOC}{\textsc{WitMinOddCut}\xspace}
\newcommand{\FPCOpt}{{\text{\sc Opt}$^*$}\xspace}
\newcommand{\Opt}{{\text{\sc Opt}}}

\newcommand{\Refine}{{\text{\sc Refine}}\xspace}

\newcommand{\sse}{\subseteq}
\newcommand{\ssn}{\subsetneq}
\newcommand{\spe}{\supseteq}

\newcommand{\es}{\emptyset}

\renewcommand{\bar}[1]{\overline{#1}}
\newcommand{\dom}{\mathrm{dom}}
 %{\operatorname{abs}}

\newcommand{\sep}{\;|\;}
\newcommand{\bit}{\operatorname{bit}}

\newcommand{\defeq}{:=}

\newcommand{\alignedeq}[1]{
\begin{equation*}
\begin{aligned}
#1
\end{aligned}
\end{equation*}
}

\newcommand{\MEndIf}{\vspace{-.6ex}\EndIf}

\newcommand{\algxsetup}{
  \algdef{Se}[BLOCK]{Block}{EndBlock}[0]{}{}%
  \algtext*{Block}%
  \algtext*{EndBlock}%
  \algtext*{EndIf}%
  \algtext*{EndWhile}%
  \algtext*{EndFor}%
  \algrenewcommand\algorithmicindent{1.25em}%
  \algrenewcommand{\algorithmiccomment}[1]{ \hfill// ##1}%
  \algrenewcommand{\algorithmicprocedure}{\textbf{oracle}}%

}

\newcommand{\algxio}[7]{
  \algxsetup
  \begin{figure}
  \caption{#3}
  \label{#2}
 
  \smallskip
  \hrule
  \smallskip
  {\sc #1}$(#4)$
  \smallskip
  \hrule
  \smallskip
  \hspace*{0.75ex}Input: {#5} %\\
  \hspace*{0.75ex}Output: {#6} \strut
  \hrule
  \smallskip
  \begin{algorithmic}[1]
  #7%
  \end{algorithmic}
  \smallskip
  \hrule
  \end{figure}
}

\NewDocumentCommand{\WSEP}{g}{\ensuremath{\mathrm{WSEP}\IfNoValueTF{#1}{}{({#1})}}\xspace}
\NewDocumentCommand{\SEP}{g}{\ensuremath{\mathrm{SEP}\IfNoValueTF{#1}{}{({#1})}}\xspace}
\NewDocumentCommand{\WOPT}{g}{\ensuremath{\mathrm{WOPT}\IfNoValueTF{#1}{}{({#1})}}\xspace}
\NewDocumentCommand{\OPT}{g}{\ensuremath{\mathrm{OPT}\IfNoValueTF{#1}{}{({#1})}}\xspace}
\NewDocumentCommand{\BOUND}{g}{\ensuremath{\mathrm{BOUND}\IfNoValueTF{#1}{}{({#1})}}\xspace}
\NewDocumentCommand{\CIRCUM}{g}{\ensuremath{\mathrm{CIRCUM}\IfNoValueTF{#1}{}{({#1})}}\xspace}

\newcommand{\SIZE}[1]{\size{#1}}
\newcommand{\size}[1]{\ensuremath{\langle{#1}\rangle}}
\newcommand{\fSIZE}[1]{\size{#1}_{\mathrm{f}}}
\newcommand{\vSIZE}[1]{\size{#1}_{\mathrm{v}}}

\newcommand{\case}[2]{\smallskip \noindent{#1} \emph{#2} \smallskip}

\title{Maximum Matching and Linear Programming \\in Fixed-Point Logic
with Counting\thanks{Research supported by EPSRC grant EP/H026835. An extended abstract of this paper will appear in the proceedings of LICS 2013.}}

% Small font for author affiliations
%% \renewcommand\Authfont{\small}
%% \renewcommand\Affilfont{\it}

\author{Matthew Anderson}
\author{Anuj Dawar}
\author{Bjarki Holm}
\affil{University of Cambridge Computer Laboratory\\
\texttt{firstname.lastname@cl.cam.ac.uk}}

\begin{document}
\maketitle

\begin{abstract}
\noindent
We establish the expressibility in fixed-point logic with counting 
(\FPC) of a number of natural polynomial-time problems. In particular, 
we show that the size of a maximum matching in a graph is definable in 
\FPC. This settles an open problem first posed by Blass, Gurevich and  
Shelah~\cite{BGS99}, who asked whether the existence of perfect matchings in
general graphs could be determined in the more powerful formalism of choiceless
polynomial time with counting. Our result is established by showing that the
ellipsoid method for solving linear programs can be implemented in \FPC. This
allows us to prove that linear programs can be optimised in \FPC if the
corresponding separation oracle problem can be defined in \FPC. On the way to
defining a suitable separation oracle for the maximum matching problem, we
provide \FPC formulas defining maximum flows and canonical minimum cuts in
capacitated graphs.
\end{abstract}

\pagebreak

\tableofcontents

\pagebreak

\section{Introduction}

The question of whether there is a logical characterisation of
the class P of problems solvable in polynomial time,
first posed by Chandra and Harel~\cite{CH82}, has been a central research question in
descriptive complexity for three decades. At one time it was
conjectured that \FPC, the extension of inflationary fixed-point logic
by counting terms, would suffice to express all polynomial-time
properties, but this was refuted by Cai, F\"urer and Immerman~\cite{CFI92}.
Since then, a number of logics have been
proposed whose expressive power is strictly greater than that
of \FPC but still contained within \PT. Among these are \FPR,
fixed-point logic with rank operators~\cite{DGHL09}, and \CPTC,
choiceless polynomial time with counting~\cite{BGS99,BGS02}.  For both
of these it remains open whether their expressive
power is strictly weaker than \PT. 

Although it is known that \FPC 
does not express all polynomial-time computable properties, 
% expresses strictly less than \PT, t
the descriptive power of \FPC
still forms a natural class within \PT.  For instance, it has been shown that \FPC can express
all polynomial-time properties on many natural graph classes, such as any class
of proper minor-closed graphs~\cite{Gro10}. Delimiting the expressive power
of \FPC therefore remains an interesting challenge. In particular, it is of interest to
establish what non-trivial polynomial-time algorithmic techniques can be
expressed in this logic.  The conjecture that \FPC captures \PT was based on the
intuition that the logic can define all ``obvious'' polynomial-time algorithms.  
The result of Cai et al.\ and the subsequent work of Atserias et
al.~\cite{ABD09} showed that one important technique---that of Gaussian
elimination for matrices over finite fields---is not captured by \FPC.
The question remains what other natural problems for which membership in $\PT$
is established by non-trivial algorithmic methods might be expressible in \FPC.

For instance, it was shown by Blass et al.~\cite{BGS02} that there is a sentence of \FPC
that is true in a bipartite graph $G$ if, and only if, $G$ contains a perfect
matching.  They posed as an open question whether the existence of a perfect
matching on general graphs can be defined in \CPTC (see also~\cite{Bla05,Ros10}
for more on this open question).  Indeed, this question first appears
in~\cite{BGS99} where it is stated that it seems ``unlikely'' that this problem
can be decided in \CPTC.  One of our main contributions in this paper is to settle this question by showing that the size of a
maximum matching in a general graph can be defined in \FPC (and therefore also
in \CPTC).

%\medskip

On the way to establishing this result, we show that a number of other
interesting algorithmic problems can also be defined in \FPC.  
To begin with, we study the logical definability of \emph{linear programming} problems.
%More specifically, 
Here we show that there is a formula of \FPC which defines on a polytope 
a point inside the polytope which maximises a given linear objective function, if such a point 
exists. Here, by a polytope we mean a convex set in Euclidean space given by finite intersections of linear
inequalities (or \emph{constraints}) over a set of variables, suitably represented as a relational
structure without an ordering on the sets of variables or constraints.

% \marginnote{M: Is ``\FPC formula'' appropriate here and in the last paragraph of this section?}

More specifically, we consider representations where, as in many applications of linear programming, 
the set of constraints is
not given explicitly (indeed, it may be exponentially large) but is
determined instead by a \emph{separation oracle}.  This is a procedure
which, given a candidate point $x$, determines whether $x$ is feasible
and, if it is not, returns a constraint that is violated by $x$.  It is well known that Khachiyan's polynomial-time algorithm for linear programming---the \emph{ellipsoid method}---can be extended to prove that the linear programming and separation problems are
polynomial-time equivalent (c.f., \cite{K80,GLS81,GLS88}).  We show
an analogous result
% statement 
for \FPC: if a separation oracle for a polytope is expressible
in \FPC, then linear programming on that polytope is definable
in \FPC.  Informally speaking, the idea is the following.  Although the set of
variables % in the constraints are 
is not inherently ordered, the
separation oracle induces a natural equivalence relation on these
variables whereby two variables are equivalent if they cannot be
distinguished in any invocation of the oracle.
% (this is made more precise in Section~\ref{sec:opt-sep}).   
Given a linear ordering on these
equivalence classes, we can define in \FPC a reduction of the optimisation 
problem to an instance with an ordered set of variables by
taking the quotient of the polytope under the induced equivalence relation.
We show
that solving the optimisation problem on this quotiented polytope---now using the
classical polynomial-time reduction made possible by the ordering---allows us to recover a
solution to the original problem. In practice, neither the equivalence classes
nor the order are given beforehand; rather they are iteratively refined via the
invocations of the separation oracle made while optimising over the quotiented
polytope. 
%% the steps of constructing the equivalence relation and solving the
%% optimisation are interleaved.  We describe a procedure which %where we
%% maintains a pre-order on the variables and if the reduction to
%% corresponding quotient polytope fails, it is due to an invocation of
%% the separation oracle that forces a further refinement of the
%% equivalence relation.  
The details of this result are presented in
Section~\ref{sec:opt-sep}.

Thus to express a problem modelled by a linear program in \FPC it suffices to express a separation
oracle in \FPC.  A key difficulty to expressing separation oracles in \FPC is
that a particular violated constraint must be chosen.  We show in
Section~\ref{sec:sep} that when the constraints are given explicitly,
a \emph{canonical} violated constraint can be defined by taking the sum of all the violated
constraints.  This implies that the class of feasible linear programs,
when explicitly given, can be
expressed in \FPC.  When the constraints are not given explicitly, it may still
be possible to express canonical violated constraints (and hence separation
oracles) by using domain knowledge.

As a first application of the \FPC-definability of explicitly-given linear
programs, we show that a maximum flow in a capacitated graph is definable
in \FPC.  Indeed, this follows rather directly from the first result, since the
flow polytope is of size polynomial in $G$ and explicitly given, and hence a
separation oracle can be easily defined from $G$ in \FPC.  These results are presented
in Section~\ref{sec:maxflow}.

Next, we use the definability of maximum flows to show that minimum cuts are
also definable in \FPC. That is, in the vocabulary of capacitated graphs, there
is a formula which defines a set of vertices $C$ corresponding to a minimum
value cut separating $s$ from $t$.  The cut $C$ defined in this way is canonical in a strong
sense, in that we show that it is the \emph{smallest} (under set-inclusion)
minimum cut separating $s$ from $t$. The definition of minimum cut and useful
variants are presented in Section~\ref{sec:mincut}.

Finally, we turn to the maximum matching problem.  For a graph $G=(V,E)$, the
matching polytope is given by a set of constraints of size exponential in the
size of $G$.  We show that there is a separation oracle for this set definable
from $G$ in \FPC, using the definability of minimum cuts.  To be precise, we use
the fact that a separation oracle for the matching polytope can be
obtained from a computation of minimum odd-size cuts in a graph~\cite{PR82}.  
In Section~\ref{sec:minoddcut} we prove that
there is always a pair of vertices $s,t$ such that a canonical minimum \cu is a
minimum odd-size cut.  This, combined with the definability of canonical minimum
cuts, gives us the separation oracle for matching that we seek. Note that it is not
possible in general to actually define a canonical maximum matching. To see this, consider $K_n$, the complete
graph on $n$ vertices. This graph contains an exponential number of maximum matchings
and for any two of these matchings, there is an automorphism of the graph taking one to the
other. Thus, it is not possible for any formula of \FPC (which is necessarily
invariant under isomorphisms) to pick out a particular matching.  What we can do, however,
is to define a formula that gives the size of the maximum matching in a graph. This, in turn, 
enables us to write a sentence of \FPC that is true in a graph $G$ if, and only
if, it contains a perfect matching. Our results on matchings are presented in
Section~\ref{sec:matching}.

\section{Background}\label{sec:background}

We write $[n]$ to denote the set of positive integers $\{0,\ldots,n-1\}$.  Given
sets $I$ and $A$, a column vector $u$ over $A$ indexed by $I$ is a function $u:
I \rightarrow A$, and we write $A^I$ for the set of all such vectors.
Similarly, an $I,J$-matrix over $A$ is a function $M: I\times J \rightarrow A$
and we write $M_{ij}$ for $M(i,j)$ and $M_i$ for the row (vector) of
$M$ indexed by $i$.  For an integer $z$, $|z|$ denotes its absolute value.  For
a vector $v \in \QQ^I$, %$\|v\|$ denotes its Euclidean norm and
$\|v\|_\infty \de \max_{i \in I} |v_i|$ denotes its infinity norm.

% -----------------------------------------------------------------------------
%
% Logics and structures
%
% -----------------------------------------------------------------------------
\subsection{Logics and Structures}

A relational \emph{vocabulary} $\tau$ is a finite sequence of relation and
constant symbols $(R_1, \dots, R_k, c_1, \dots, c_\ell)$, where every relation
symbol $R_i$ has a fixed \emph{arity} $a_i \in \NN$. A structure $\struct A =
(\univ A, R_1^{\struct A}, \dots, R_k^{\struct A}, c_1^{\struct A}, \dots,
c_\ell^{\struct A})$ over the vocabulary $\tau$ (or a \emph{$\tau$-structure})
consists of a non-empty set $\univ A$, called the \emph{universe} of $\struct
A$, together with relations $R_i^{\struct A} \subseteq \univ A^{a_i}$ and
constants $c_j^{\struct A} \in \univ A$ for each $1 \leq i \leq k$ and $1 \leq
j \leq \ell$. Members of the set $\univ A$ are called the \emph{elements} of
$\struct A$ and we define the \emph{size} of $\struct A$ to be the cardinality
of its universe.  In what follows, we often consider multi-sorted structures.
That is, $\univ A$ is given as the disjoint union of a number of
different \emph{sorts}.  In this paper we consider only finite structures, that
is structures over a finite universe.  For a particular vocabulary $\tau$ we use
$\fin[\tau]$ to denote the set of all finite $\tau$-structures.

% -----------------------------------------------------------------------------
% FPC
% -----------------------------------------------------------------------------
\paragraph{Fixed-point logic with counting} 

Fixed-point logic with counting (\FPC) is an extension of inflationary
fixed-point logic with the ability to express the cardinality of definable
sets. The logic has two types of first-order variable: \emph{element variables},
which range over elements of the structure on which a formula is interpreted in
the usual way, and \emph{number variables}, which range over some initial
segment of the natural numbers. We traditionally write element variables with
lower-case Latin letters $x, y, \dots$ and use lower-case Greek letters
$\mu, \eta, \dots$ to denote number variables.
% \marginnote{M: Is this remark about Latin and Greek variable labelling necessary? B: Well, we do use this in the next paragraph. No strong feelings about this though.}

The atomic formulas of $\FPC[\tau]$ are all formulas of the form: $\mu = \eta$ or
$\mu \le \eta$, where $\mu, \eta$ are number variables; $s = t$ where $s,t$ are
element variables or constant symbols from $\tau$; and $R(t_1, \dots, t_m)$,
where each $t_i$ is either an element variable or a constant symbol and $R$ is a
relation symbol of
arity $m$. The set $\FPC[\tau]$ of \emph{$\FPC$
formulas} over $\tau$ is built up from the atomic formulas by applying an
inflationary fixed-point operator $[\ifpop_{R,\tup x}\phi](\tup t) $;
forming \emph{counting terms} $\countingTerm{x} \phi$, where $\phi$ is a formula
and $x$ an element variable; forming formulas of the kind $s = t$ and $s \le t$
where $s,t$ are number variables or counting terms; as well as the standard
first-order operations of negation, conjunction, disjunction, universal and
existential quantification. Collectively, we refer to element variables and
constant symbols as \emph{element terms}, and to number variables and counting
terms as \emph{number terms}.

%% To define the semantics of $\FPC[\tau]$, we consider \emph{numerical
%% $\tau$-structures}, which are $\tau$-structures extended with an
%% initial segment of the integers. More specifically, let
%% $\tau^* \defeq \tau \cup \{ \le \}$, where we assume that $\tau$ does
%% not contain the relation symbol $\le$, and consider a finite
%% $\tau$-structure $\struct A$ over a universe $A$ of size $n$. Then we
%% write $\numstruct A$ to denote the $\tau^*$-structure whose universe
%% is $A \disjoint [n]$, where all relations and constant symbols in
%% $\tau$ are interpreted as in $\struct A$ and where $\leq$ is
%% interpreted as the natural ordering over $[n]$. We view $\numstruct A$
%% as a two-sorted structure, with \emph{element sort} $A$
%% and \emph{number sort} $[n]$. 
%% Formally, formulas of $\FPC[\tau]$ are interpreted over pairs
%% $(\numstruct A, \alpha)$, where $\numstruct A$ is a numerical
%% $\tau$-structure and $\alpha$ a variable assignment in $\numstruct
%% A$. If $t$ is a term, then we write $\alpha(t)$ to denote the value
%% that is assigned to $t$ over $\numstruct A$. 

For the semantics, number terms take values in $[n+1]$ and element terms take
values in $\dom(\struct A)$ where $n \de |\dom(\struct A)|$. The semantics of
atomic formulas, fixed-points and 
first-order operations are defined as usual (c.f., e.g., \cite{Ebb99} for
details), with comparison of number terms $\mu \le \eta$ interpreted by
comparing the corresponding integers in $[n+1]$. Finally, consider a counting
term of the form $\countingTerm{x}\phi$, where $\phi$ is a formula and $x$ an
element variable. Here the intended semantics is that $\countingTerm{x}\phi$
denotes the number (i.e., the element of $[n+1]$) of elements that satisfy the
formula $\phi$.
%%  More formally, the semantics of counting terms is defined as follows:
%% %
%% \[
%% 	\alpha(\countingTerm{x}\phi) \defeq 
%% 	\card{\{ a \in A \sep \struct A \models \phi[\alpha\tfrac{a}{x}] \}},
%% \]

%% \noindent
%% where $\alpha\tfrac{a}{x}$ is the assignment obtained from $\alpha$ by
%% assigning the element $a \in A$ to the variable $x$. 
In general, a formula $\phi(\tup x,\tup \mu)$ of \FPC defines a relation over
$\univ A \uplus [n+1]$ that is invariant under automorphisms of $\struct A$.
For a more detailed definition of $\FPC$, we refer the reader to~\cite{Ebb99,
Libkin}.

It is known, by the results of Immerman and Vardi~\cite{I86,V82}, that
every polynomial-time decidable property of \emph{ordered} structures is
definable in fixed-point logic, and therefore also in \FPC.  Here, an
ordered structure is one which includes a binary relation which is a
linear order of its universe.
Throughout this paper, we refer to this result as the \emph{Immerman-Vardi theorem}.

% -----------------------------------------------------------------------------
% Logical interpretations
% -----------------------------------------------------------------------------
\paragraph{Logical interpretations}

We frequently consider ways of defining one structure within another in some
logic $\logic L$, such as first-order logic or fixed-point logic with
counting. Consider two vocabularies $\sigma$ and $\tau$ and a logic $\logic
L$. An \emph{$m$-ary $\logic L$-interpretation of $\tau$ in $\sigma$} is a
sequence of formulae of $\logic L$ in vocabulary $\sigma$ consisting of: (i) a
formula $\delta(\tup x)$; (ii) a formula $\varepsilon(\tup x, \tup y)$; (iii)
for each relation symbol $R \in \tau$ of arity $k$, a formula $\phi_R(\tup
x_1, \dots, \tup x_k)$; and (iv) for each constant symbol $c \in \tau$, a
formula $\gamma_c(\tup x)$, where each $\tup x$, $\tup y$ or $\tup x_i$ is an
$m$-tuple of free variables. We call $m$ the \emph{width} of the
interpretation. We say that an interpretation $\Theta$ associates a
$\tau$-structure $\struct B$ to a $\sigma$-structure $\struct A$ if there is a
surjective map $h$ from the $m$-tuples $\{ \tup a \in (\univ{A} \uplus
[n+1])^m \sep \struct A \models \delta[\tup a] \}$ to $\struct B$ such that:

\begin{itemize}
\item $h(\tup a_1) = h(\tup a_2)$ if, and only if, $\struct
A \models \varepsilon[\tup a_1, \tup a_2]$; 
	
\item $R^\struct{B}(h(\tup a_1), \dots, h(\tup a_k))$ if, and only if, $\struct
A \models \phi_R[\tup a_1, \dots, \tup a_k]$; 
	
\item $h(\tup a) = c^\struct{B}$ if, and only if, $\struct
A \models \gamma_c[\tup a]$.

\end{itemize}

\noindent 
Note that an interpretation $\Theta$ associates a $\tau$-structure with $\struct
A$ only if $\varepsilon$ defines an equivalence relation on $(\univ{A} \uplus
[n+1])^m$ which is a congruence with respect to the relations defined by the
formulae $\phi_R$ and $\gamma_c$. In such cases, however, $\struct B$ is
uniquely defined up to isomorphism and we write $\Theta(\struct
A) \defeq \struct B$.

It is not difficult to show that formulas of \FPC compose with reductions in the
sense that, given an interpretation $\Theta$ of $\sigma$ in $\tau$ and a
$\sigma$-formula $\phi$, we can define a $\tau$-formula $\phi'$ such that
$\struct A \models \phi'$ if, and only if, $\Theta(\struct A) \models \phi$
(see \cite[Sec.~3.2]{Imm99})  %XXX - \sigma and \tau were switched.
In particular, if $\Theta(\struct A)$ is an ordered structure, for all
$\struct A$, then by the Immerman-Vardi theorem above, for any
polynomial-time decidable class $C$, there is an \FPC formula
$\phi$ such that $\struct A \models \phi$ if, and only if,
$\Theta(\struct A) \in C$.

% -----------------------------------------------------------------------------
%
% Matrices and vectors indexed by unordered sets
%
% -----------------------------------------------------------------------------
\subsection{Numbers, Vectors and Matrices}

Let $z$ be an integer, $b \geq \lceil\log_2(|z|)\rceil$, $B = [b]$ and
write $\bit(x, k)$ to denote the $k$-th least-significant bit in the binary
expansion of $x \in \NN$.  We view the integer $z = s \cdot x$ as a product of a sign
$s \in \set{-1,1}$ and a natural number $x$.   We can represent $z$ as a
single-sorted structure $\struct B$ on a domain of bits $B$ over the vocabulary
$\vocint \de \set{X, S, \leq_B}$. Here $\leq_B$ is interpreted as a linear
ordering of $B$, the unary relation $S$ indicates that the sign $s$ of the
integer is 1 if $S^{\struct B} = \es$ and $-1$ otherwise, and the unary relation
$X$ is interpreted as $X^{\struct B} = \condset{k \in B}{\bit(x,k) = 1}.$ That
is $k \in X^{\struct B}$ when the ``the $k$-th bit in the binary expansion of
$x$ is 1.''  Similarly we consider a \emph{rational number} $q =
s \cdot \frac{x}{d}$ as a structure on the domain of bits $B$ over
$\vocnum \de \set{ X, D, S, \leq_B}$, where $X$ and $S$ are as before and $D$ is
interpreted as the binary encoding of the denominator $d$ when $D^{\struct
B} \neq \es$.  

We now generalise these notions and consider unordered tensors over the
rationals (the case of integers is completely analogous).  Let $J_1, \ldots, J_r$ be
a family of finite non-empty sets.  An unordered tensor $T$ over $\QQ$ is a
function $T : J_1 \times \cdots \times J_r \rightarrow \QQ$.  We write
$t_{j_1\ldots j_r} = s_{j_1\ldots j_r} \frac{x_{j_1\ldots j_r}}{d_{j_1\ldots
j_r}}$ to denote the element of $T$ indexed by $(j_1,\ldots,j_r) \in
J_1 \times \cdots \times J_r$.  
Writing $m \in \NN$ for the the maximum absolute value of integers appearing as 
either numerators or denominators of elements in the range of $T$, let
$b \geq \lceil\log_2(|m|)\rceil$ and $B = [b]$.
The tensor $T$ is then an $(r+1)$-sorted structure
$\struct T$ with $r$ index sorts $J_1,\ldots,J_r$ and a bit sort $B$ over the
vocabulary $\sigma_{\text{ten},r} \de \set{X,D,S,\leq_B}$.  Here $\leq_B$ is
interpreted as before, the $(r+1)$-ary relation $S$ is interpreted as indicating
the value of the sign $s_{j_1\ldots j_r} \in \set{-1,1}$ as before, the $(r+1)$-ary
relation $X$ is interpreted as $$
%N^{\struct T} = 
\condset{(j_1,\ldots,j_r,k) \in
J_1 \times \cdots \times J_r \times B}{\bit(x_{j_1\ldots j_r},k) = 1},$$ and the
$(r+1)$-ary relation $D$ is similarly interpreted as the binary representation
of the denominators of $T$.  We are only interested in the case of rational
vectors and matrices and so define the vocabularies
$\vocvec \de \sigma_{\text{ten},1}$ and $\vocmat \de \sigma_{\text{ten},2}$.

In~\cite{H10} it is shown that a variety of basic linear-algebraic operations on
rational vectors and matrices described in this way can be expressed in
fixed-point logic with counting. These include computing equality, norms, dot
product, matrix product, determinant and inverse.
% and rank.  
%% Frequently operations
%% on matrices and vectors require dimensions indexed in the same way.  As
%% convention we extend the index sets of the objects to the union of the objects'
%% index sets.  In doing so we view the mathematical objects as residing in a
%% larger space where unspecified elements take the value 0.  For example, consider
%% the dot product of vectors $u \in \QQ^I$ and $v \in \QQ^J$: We write $u^\top v
%% = \sum_{k \in I \cup J} u_k v_k$, where $u_k = 0$ for $k \in J \bs I$ and $v_k =
%% 0$ for $k \in I \bs J$.

% -----------------------------------------------------------------------------
%
% Linear programming
%
% -----------------------------------------------------------------------------
\subsection{Linear Programming}

We recall some basic definitions from 
%polyhedral 
combinatorics and linear optimisation. For further background, see, for example, the textbook by Gr\"{o}tschel et al.~\cite{GLS88}. % or Korte and Vygen~\cite{KV12}.

% -----------------------------------------------------------------------------
% Polytopes, polyhedra and constraint matrices
% -----------------------------------------------------------------------------
%% \subsubsection{Geometry}

%% Consider the real Euclidean space $\RR^V$ indexed by a set $V$.  

%% For a point
%% $x \in \RR^V$ and $\epsilon \in \nnRR$ let $\sphere{x}{\epsilon} \de \condset{
%% y \in \RR^V}{\norm {x - y} \le \epsilon}$ denote the \emph{sphere} of radius
%% $\epsilon$ about $x$. This notion intuitively extends to
%% $\sphere{K}{\epsilon} \defeq \bigcup_{x \in K} \sphere x \epsilon$ for any set
%% $K \subseteq \RR^V$.
%% % and similarly let $\sphere{K}{-\epsilon}$ denote the
%% %elements of $K$ which are at distance at least $\epsilon$ from any point not in
%% %$K$.
%% A positive definite matrix $E \in \RR^{V\times V}$ and point
%% $x \in \RR^V$ specify an ellipsoid 
%% %$\Ell(E,x)$ where 
%% $$\Ell(E,x) \de \condset{y \in \RR^V}{(y-x)^\top E^{-1} (y-x) \le 1}.$$
%% Because $E$ is positive definite, $\Ell(E,x)$ is full dimensional in $\RR^V$ and
%% has non-zero volume $\Vol(\Ell(E,x))$.

\paragraph{Polytopes} 
Consider the rational Euclidean space $\QQ^V$ indexed by a set
$V$. The solutions to a system of linear equalities and   
inequalities over $\QQ^V$ is the intersection of some number
of \emph{half-spaces} of the kind $\condset{x \in \QQ^V}{a^\top x \leq b}$
specified by the \emph{constraint} $a^\top x \le b$, where $a \in \QQ^V$ and
$b \in \QQ$. A (rational) \emph{polytope} is a convex set $P \subseteq \QQ^V$ which is the
intersection of a \emph{finite} number of half-spaces. That is to say, there are
a set of constraints $C$, a \emph{constraint matrix} $A \in \QQ^{C \times V}$
and vector $b \in \QQ^C$, such that $P = \poly A
b \defeq \condset{x \in \QQ^V}{A x \leq b}$.  

Polytopes have an alternative characterisation as a combination of convex hulls and cones. 
Let $S$ be a finite set of points in $\QQ^V$ and define
the \emph{convex hull} of $S$ $$\conv(S) \de \bigcondset{\sum_{s \in
S} \lambda_s s}{\lambda_s \in \nnQQ, \forall s \in S\, \text{ and
} \sum_{s \in S} \lambda_s= 1},$$ and similarly define the \emph{cone} of
$S$ $$\cone(S) \de \bigcondset{\sum_{s \in S} \lambda_s s}{\lambda_s \in \nnQQ, \forall s \in S}.$$ 
If $P$ is a polytope in $\QQ^V$, then there exist finite
sets $S_1, S_2 \sse \QQ^V$ such that $P = P_{S_1,S_2} \de \conv(S_1)
+ \cone(S_2) = \condset{x_1 + x_2}{x_1 \in \conv(S_1), x_2 \in \cone(S_2)}$.

% \marginnote{BH: do we need a reference for the $P = \conv(C)
% + \cone(E)$ fact? MA: Well, its discussed in Chapter 0 of \cite{GLS88} without a
% formal statement or reference; as far as I can tell it is just a standard fact
% about polytopes.}   

%% We call a polytope $P$ \emph{rational} if there exists an $A \in \QQ^{C \times
%% V}$ and $b \in \QQ^C$ such that $P = \poly A b$ (or equivalently,
%% $S_1,S_2 \sse \QQ^V$ such that $P = \poly {S_1} {S_2}$).  

The \emph{size}, or bit complexity, of a vector $c \in \QQ^V$ (denoted
$\size{c}$) is the number of bits required to encode the components of $c$ in
some standard encoding of rational numbers. Note that $\size{c}$ is at least $|V|$.  The size of a
constraint $a^\top x \le b$ is then $\SIZE{\binom{a}{b}}$.  If $A x \le b$ is a
system of linear inequalities then its size is the maximum over the sizes of its
individual constraints.  Note that this measure is explicitly independent of the
number of constraints in the system.  The \emph{facet complexity} $\fSIZE{P}$ of
a polytope $P$ is the minimum over the sizes of the systems $Ax \le b$
such that $P = \poly A b$.  The \emph{vertex complexity} $\vSIZE{P}$ of a
polytope $P$ is the minimum over the maximum size vector in the union
of sets $S_1, S_2 \sse \QQ^V$ such that $P = \conv(S_1) + \cone(S_2)$.  The facet
and vertex complexity of a polytope are closely related: $\vSIZE{P} \le
4|V|^2\fSIZE{P}$ and $\fSIZE{P} \le 3|V|^2\vSIZE{P}$ \cite[Lemma 6.2.4]{GLS88}.

%% A polytope which is also compact is called a \emph{polyhedron}.\footnote{Be
%% aware that \cite{GLS88} uses the opposite notation and calls a convex set
%% satisfying $Ax \le b$ a polyhedron and a compact polyhedron a polytope.} 

%% A polyhedron $P$ is said to be \emph{$R$-circumscribed} when there is an
%% $R \in \nnRR$ such that $P \sse \sphere{0}{R}$.  Furthermore we say that $P$
%% has \emph{full dimension} if there is a point $x \in \RR^V$ and $r \in \pRR$
%% such that $\sphere{x}{r} \sse P$.  Note that if $P$ has full dimension then
%% it has non-zero volume.  A polytope is \emph{bounded} if it has full
%% dimension and is circumscribed.  More specifically, we say $P$ is
%% $(x,r,R)$-\emph{bounded} if $\sphere{x}{r} \sse P \sse \sphere{x}{R}.$

% -----------------------------------------------------------------------------
% Linear programming problems
% -----------------------------------------------------------------------------
\paragraph{Problems on polytopes}
We are interested in two main combinatorial problems on polytopes: \emph{linear optimisation} and \emph{separation}.

\begin{problem}[Linear Optimisation]
  Let $V$ be a set, $P \sse \QQ^V$ be a polytope and $c \in \QQ^V$.
The \emph{linear optimisation problem} 
%\OPT{P,c} 
on $P$
is the problem of determining either (i) an element $y \in P$ such that $c^\top y = \max\condset{c^\top
x}{x \in P}$, (ii) that $P = \es$ or (iii) that $P$ is unbounded in the
direction of $c$. % (i.e., $\max\condset{c^\top x}{x \in P}$ is unbounded).
\end{problem}

An instance of the linear optimisation problem is 
%often 
called a \emph{linear program} and the linear function $x \mapsto c^\top x$ is 
%often 
called the \emph{objective function}.
Over the years, a number of algorithms for solving linear programs have been studied.
Early work by Dantzig~\cite{D98} gave a combinatorial
algorithm---the \emph{simplex method}---which traverses the vertices (extremal points) of the polytope favouring
vertices that improve the objective value.  Although the simplex method is
useful in practice, it tends not to be theoretically useful because strong
worst-case performance guarantees are not known\footnote{Stronger guarantees are
known for the average-case and smoothed complexity of the simplex
method \cite{ST04}.}.  A series of works studying linear programming from a
geometric perspective~\cite{S72,Y76,S77} culminated in the breakthrough of
Khachiyan~\cite{K79,K80} which established a polynomial-time
algorithm---the \emph{ellipsoid method}---for solving linear programs.
One of the strengths of the ellipsoid method is that it 
can be applied to linear programs where the constraints are not given explicitly.
In such implicitly-defined linear programs, we are instead given a polynomial-time algorithm, 
known as a \emph{separation oracle}, 
for solving the following ``separation problem''.

\begin{problem}[Separation]
  Let $V$ be a set, $P \sse \QQ^V$ be a polytope and $y \in \QQ^V$.
  The \emph{separation problem} 
%\SEP{P,y} 
on $P$ 
is the problem of determining either
  (i) that $y \in P$ or (ii) a vector $c \in \QQ^V$ with $c^\top y
  > \max\condset{c^\top x}{x \in P}$ and $\|c\|_\infty = 1$.
\end{problem}

Over families of rational polytopes, the optimisation and separation problems
are polynomial-time equivalent (c.f., e.g., \cite[Theorem 6.4.9]{GLS88}).  Here
the time bound is measured in the size of the polytope and all other parameters of the problem.

\subsection{Representation}
When we deal with polytopes as objects in a computation, we need to
choose a representation which gives a finite
description of a polytope.  
In particular, in dealing with logical definability of
problems on polytopes, we need to choose a representation of polytopes
by relational structures.

\begin{definition}
A \emph{representation} of a class $\Pp$ of polytopes is a relational
vocabulary $\tau$ along with an \emph{onto} function
$\nu: \fin[\tau] \rightarrow \Pp$ which is isomorphism invariant,
that is, $\struct{A} \isom \struct{B}$ implies $\nu(\struct{A})
\isom \nu(\struct{B})$. 
\end{definition}

For concreteness, consider the vocabulary
$ \tau \defeq \vocmat \uplus \vocvec$ obtained by taking the
disjoint union of the vocabularies for rational matrices and vectors.
A $\tau$-structure over a universe consisting of a set $V$ of
variables and a set $C$ of constraints describes  a
constraint matrix $A \in \QQ^{C \times V}$ and bound vector
$b \in \QQ^C$.  Thus, the function taking such a structure to the
polytope $\poly A b$ is a representation of the class of rational
polytopes.  We call this the \emph{explicit representation}.  

Note that the explicit representation of polytopes has the property
that both the size of the polytope (i.e., the maximum size of any constraint) and the
number of constraints of $\nu(\struct A)$ are polynomially bounded in
the size of $\struct A$.  We will also be interested in
representations $\nu$ where the number of constraints in $\nu(\struct
A)$ is exponential in $|\struct A|$, but we always confine ourselves to
representations where the size of the constraints is bounded by a
polynomial in $\struct A$. 
We formalise this by saying that a representation
$\nu$ is \emph{well described} if there is a polynomial $p$ such that
$\SIZE{\nu(\struct A)} = p(|\struct A|)$, for all $\tau$-structures
$\struct A$. 
%% We will have occasion to consider polytopes that contain many constraints.
%% %%that are not well described.  
%% In particular, in Section~\ref{sec:matching} we consider
%% the maximum matching polytope which associates to a graph (i.e.\ a
%% structure in the vocabulary with one binary relation $E$) a polytope
%% where the number of constraints may be exponential in the size of the
%% graph.  However, since each individual constraint may be described using a number of bits polynomial in the size of the associated graph, the polytope is well described.
In particular, in all representations we consider 
the \emph{dimension} of the polytope $\nu(\struct A)$ is bounded by a
polynomial in $|\struct A|$. 

We are now ready to define what it means to express the linear
optimisation and separation
% and circumscription 
problems in \FPC.

\begin{definition}
\label{defn:expressing-optimisation-fpc}
We say that the linear optimisation problem for a class of polytopes $\Pp$ is
expressible in \FPC with respect to a representation
$\nu: \fin[\tau] \rightarrow \Pp$ if there is an \FPC interpretation of
$\vocnum \uplus \vocvec$ in $\tau \uplus \vocvec$ which takes a $\tau$-structure
$\struct A$ and a vector $c$ to a rational $f$ and vector $y$ such that either
(i) $f = 1$, and $\nu(\struct A)$ is unbounded in the direction of $c$, or (ii)
$f = 0$, and $\nu(\struct A) \neq \es$ iff $y \in \nu(\struct A)$ and $c^\top y
= \max\condset{c^\top x}{x \in \nu(\struct A)}$.
%%  which takes a structure coding a $\tau$-structure $\struct A$ and a vector
%%  $c$ to a number $f$ and a vector $y$ such that either (i) $f = 0$,
%%  $y \in \nu(\struct A)$ and $c^\top y \ge \max\condset{c^\top
%%  x}{x \in \nu(\struct A)}$, (ii) $f = 1$, $y = 0$ and $\nu(\struct A) = \es$,
%%  or (iii) $f = 1$, $y = 1$ and $\nu(\struct A)$ is unbounded in the direction
%%  of $c$. % (i.e., $\max\condset{c^\top x}{x \in P}$ is unbounded).
\end{definition}

\begin{definition}
\label{defn:expressing-separation-fpc}
The separation problem for a class of polytopes
$\Pp$ is expressible in \FPC with respect to a representation
$\nu: \fin[\tau] \rightarrow \Pp$ if there is an \FPC interpretation
of $\vocvec$ in $\tau \uplus \vocvec$ which takes a structure
coding a $\tau$-structure $\struct A$ and a vector $y$ to a vector $c$
such that either (i) $y \in \nu(\struct A)$ and $c=0$, or (ii) $c \in \QQ^V$ with $c^\top y
  > \max\condset{c^\top x}{x \in \nu(\struct A)}$ and $\|c\|_\infty = 1$.
\end{definition}

%% \begin{definition}
%% The circumscription problem for a class of polytopes
%% $\Pp$ is expressible in \FPC with respect to a representation
%% $\nu: \fin[\tau] \rightarrow \Pp$ if there is an \FPC interpretation
%% of $\vocnum$ in $\tau$ which takes a $\tau$-structure
%% to a rational $R$ such that $\nu(\struct A) \sse \sphere{0}{R}$.
%% \end{definition}

\section{Expressing the Separation Problem in \FPC}\label{sec:sep}
Let $A \in \QQ^{C \times V}$ be a constraint matrix and $b \in \QQ^C$ a
constraint vector of the polytope $P_{A,b}$.  
%The separation problem for the
%explicitly-represented polytope $P_{A,b}$ is easily solved.  Consider the
%algorithm $\Delta$ presented in Figure~\ref{fig:sep}.  It immediately follows
%from the definitions that this algorithm solves the separation problem on
%$P_{A,b}$.  
Figure~\ref{fig:sep} presents a 
straightforward algorithm ($\Delta$) for solving the
separation problem for the
explicitly-represented polytope $P_{A,b}$.  It is not hard to see
that the algorithm $\Delta$ can be implemented in 
time polynomial in the size of the explicit natural representation of inputs $A, b$ and $x$.

\algxio{$\Delta$}{fig:sep}{A separation oracle for explicitly-represented
rational polytopes.}{A,b,x}{$A \in \QQ^{C \times V}$, $b \in \QQ^C$ and
$x \in \QQ^V$.\\}{$c \in \QQ^V$ solving the separation problem for the polytope
$P_{A,b}$ and $x$.}{  
 \If{$A x \le b$} \Return $0^V$.\MEndIf \label{ln:check}
 \State Select $k \in C$ such that $A_k x > b_k$. \label{ln:choice}
 \State \Return $\frac{A_k}{\inorm{A_k}}$. \label{ln:cons_set}
}

If we try to express the algorithm $\Delta$ in fixed-point logic with counting,  we first note that
we can define in \FPC all the relevant manipulations on rational values, vectors
and matrices, such as norms, addition and multiplication~\cite{H10}, even when
they are indexed by unordered sets.  This shows that both
lines~\ref{ln:check} and \ref{ln:cons_set} of the algorithm can be simulated in
\FPC.  However, line~\ref{ln:choice} poses a problem 
as the logic is in general not able to \emph{choose} a particular element from
an unordered set.  Our key observation here is that linearity implies that the
sum of all such violated constraints is itself a violated constraint for non-empty
polytopes and hence the choice made by $\Delta$ is superfluous. 
This can be formally stated as follows.

\begin{prop}
  \label{prop:sum-constraint} Let $A \in \QQ^{C\times V}$, $b \in \QQ^{C}$,
  $x \in \QQ^{V}$ and $C \spe S \neq \es$.  Suppose $\poly A b$ is non-empty and
  $(Ax)_s \not\le b_s$ for all $s \in S$.  Define $a_S \de \sum_{s \in S} A_s$.
  Then $a_S^\top x > \max\condset{a_S^\top y}{y \in P_{A,b}}$ and $a_S \neq 0^V$.
\end{prop}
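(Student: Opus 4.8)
The plan is to establish the two claims separately: first that $a_S \neq 0^V$, and then the strict inequality $a_S^\top x > \max\{a_S^\top y : y \in P_{A,b}\}$. For the first claim, pick any feasible point $y_0 \in P_{A,b}$, which exists since the polytope is non-empty. Then $A_s y_0 \le b_s$ for every $s \in S$, while by hypothesis $A_s x > b_s$ for every $s \in S$. Summing over $s \in S$ gives $a_S^\top y_0 \le \sum_{s \in S} b_s < a_S^\top x$, so $a_S^\top y_0 \neq a_S^\top x$, which forces $a_S \neq 0^V$ (a zero covector cannot separate two points).

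For the inequality, the key point is that the strict separation at $x$ for each individual constraint sums to a strict separation, and that feasibility at any $y$ sums to a (non-strict) upper bound that is in turn strictly below $a_S^\top x$. Concretely, set $\beta_S \de \sum_{s \in S} b_s$. From $A_s x > b_s$ for all $s \in S$ and $S \neq \es$ we get $a_S^\top x = \sum_{s \in S} A_s x > \sum_{s \in S} b_s = \beta_S$. On the other hand, for every $y \in P_{A,b}$ we have $A_s y \le b_s$ for all $s \in S \sse C$, hence $a_S^\top y = \sum_{s \in S} A_s y \le \beta_S$. Combining, $a_S^\top y \le \beta_S < a_S^\top x$ for every $y \in P_{A,b}$, and since $P_{A,b}$ is non-empty the maximum on the right is attained (or at worst the supremum is bounded by $\beta_S$), so $\max\{a_S^\top y : y \in P_{A,b}\} \le \beta_S < a_S^\top x$ as required.

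There is essentially no hard step here; the proposition is a one-line consequence of linearity, namely that a non-negative (indeed, all-ones) combination of separating inequalities is again a separating inequality, with the strictness surviving because the combination is non-trivial ($S \neq \es$) and every summand is strict at $x$. The only point deserving a word of care is why $\max\{a_S^\top y : y \in P_{A,b}\}$ is well-defined and bounded: since $P_{A,b}$ is a polytope (a bounded intersection of half-spaces in the sense used here — or, if unbounded, note that $a_S^\top y \le \beta_S$ holds on all of $P_{A,b}$ regardless), the linear functional $a_S^\top$ is bounded above on it by $\beta_S$, so the maximum exists and the stated strict inequality holds. The non-emptiness hypothesis is exactly what we need to rule out the degenerate case and to guarantee a witness $y_0$ for the $a_S \neq 0^V$ part.
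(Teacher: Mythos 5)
Your proof is correct and follows essentially the same route as the paper's: sum the bounds $b_s$ over $S$, use linearity to get $a_S^\top x > \sum_{s\in S} b_s \ge a_S^\top y$ for every feasible $y$, and deduce $a_S \neq 0^V$ from the resulting strict separation of $x$ from a feasible point. The only difference is presentational (you treat the nonvanishing of $a_S$ first and add a remark on boundedness of the supremum), so nothing further is needed.
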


\begin{proof} Define $b_S \de \sum_{s \in S}  b_s$.
That $a_S^\top x > b_S$ is immediate from linearity.  Since the polytope is
non-empty pick any point $y \in P_{A,b}$.  By definition, $Ay \le b$.  Linearity
implies that $a_S^\top y \le b_S$.  Thus $a_S^\top x > b_S \ge \max\condset{a_S^\top
y}{y \in P_{A,b}}$.  This also implies that $a_S \neq 0^V$.
\end{proof}

This observation
leads to a definition in \FPC of the separation problem
for $\poly{A}{b}$ with respect to $x$.  Specifically, let $S \sse C$ be the set of constraints
which violate the inequality $Ax \le b$.  This set can be defined by a \FPC
formula using rational arithmetic.  If $S$ is empty, expressing $c = 0^V$
correctly indicates that $x \in P_{A,b}$.  Otherwise $S$ is non-empty; let $a_S$
be the sum of the constraints which $x$ violates.  Since the set $S$ is
definable in \FPC so is the sum of constraints indexed by $S$.  If $a_S \neq
0^V$, Proposition~\ref{prop:sum-constraint} implies that expressing $c$ as the
division of $a_S$ by its (non-zero) infinity norm correctly indicates a
separating hyperplane for $P_{A,b}$ through $x$; moreover, both operations are
in \FPC.  Otherwise, $a_S = 0^V$ and Proposition~\ref{prop:sum-constraint}
indicates that $P_{A,b}$ is empty.  This means that any non-zero vector defines
a separating hyperplane for $P_{A,b}$.  Thus it suffices for the interpretation
to express the vector $c = 1^V$. 
Overall, the above discussion gives us a proof of the following theorem.

\begin{thm}
  \label{thm:explicit-sep} There is an \FPC interpretation of $\vocvec$ in
  $\vocmat \uplus \vocvec \uplus \vocvec$ expressing the separation problem for
  the class of polytopes explicitly given by constraints and represented
  naturally as $\vocmat \uplus \vocvec$-structures.
\end{thm}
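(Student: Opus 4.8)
The plan is to assemble the \FPC interpretation directly from the pieces that have already been put in place, following exactly the case analysis in the paragraph preceding the statement. I would take the source vocabulary $\vocmat \uplus \vocvec \uplus \vocvec$ to encode a constraint matrix $A \in \QQ^{C \times V}$, a bound vector $b \in \QQ^C$, and a query point $x \in \QQ^V$ (the two copies of $\vocvec$ being used for $b$ and for $x$, over index sorts $C$ and $V$ respectively). The target is a $\vocvec$-structure encoding the output vector $c \in \QQ^V$. The whole construction is a finite composition of operations each of which is known to be \FPC-definable by the results of~\cite{H10} cited in the excerpt, so the only real work is to lay out the composition and verify correctness via Proposition~\ref{prop:sum-constraint}.

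First I would define, by an \FPC formula with one free element variable $k$ ranging over the constraint sort $C$, the set $S \de \condset{k \in C}{(Ax)_k \not\le b_k}$. This requires forming the entries $(Ax)_k = \sum_{v \in V} A_{kv} x_v$, which is a dot product of rational vectors indexed by the unordered set $V$, and then comparing the rational $(Ax)_k$ with the rational $b_k$; both are \FPC-definable over unordered index sorts by~\cite{H10}. Next, using $S$ as a definable parameter, I would form the vector $a_S \de \sum_{k \in S} A_k \in \QQ^V$, again a definable summation of rational vectors over a definable (unordered) index set, and compute its infinity norm $\inorm{a_S} = \max_{v \in V}|(a_S)_v|$, which is likewise \FPC-definable. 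The interpretation then outputs $c$ according to the three cases: if $S = \es$ output $c = 0^V$; if $S \neq \es$ and $a_S \neq 0^V$ (equivalently $\inorm{a_S} \neq 0$) output $c = a_S / \inorm{a_S}$, which has $\inorm{c} = 1$; and if $S \neq \es$ but $a_S = 0^V$ output $c = 1^V$. Each branch is guarded by an \FPC-definable condition and produces an \FPC-definable $\vocvec$-structure, and the three guards are exhaustive and mutually exclusive, so they can be combined into a single interpretation (using the formula $\delta$ to cut the universe down to the $V$-sort and $\phi_X,\phi_D,\phi_S,\phi_{\le_B}$ to read off the selected vector).

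For correctness I would argue as follows. If $S = \es$ then $Ax \le b$, so $x \in \poly A b$ and outputting $c = 0^V$ is exactly case (i) of Definition~\ref{defn:expressing-separation-fpc}. If $S \neq \es$, there are two sub-cases. Either $\poly A b$ is non-empty, in which case Proposition~\ref{prop:sum-constraint} (applied with this $S$) gives $a_S \neq 0^V$ and $a_S^\top x > \max\condset{a_S^\top y}{y \in \poly A b}$; dividing by $\inorm{a_S} > 0$ preserves this strict inequality and normalises, so $c = a_S/\inorm{a_S}$ satisfies case (ii). Or $\poly A b = \es$, in which case the vacuous quantification means \emph{every} vector $c$ satisfies $c^\top x > \max\condset{c^\top y}{y \in \poly A b} = -\infty$, so in particular $c = 1^V$ (with $\inorm{c}=1$) works; by the contrapositive of Proposition~\ref{prop:sum-constraint}, the only way we can be in this emptiness sub-case while $S \neq \es$ is precisely when $a_S = 0^V$, which is the guard we used. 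Hence in all cases the output meets Definition~\ref{defn:expressing-separation-fpc}.

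I do not expect a serious obstacle here: every arithmetic ingredient is already available from~\cite{H10}, and the logical content is entirely contained in Proposition~\ref{prop:sum-constraint}. The one point needing a little care is the emptiness case — making sure the interpretation does not need to \emph{test} emptiness of $\poly A b$ (which would be circular, since that is essentially the optimisation problem) but instead uses the syntactic surrogate $a_S = 0^V$, whose correctness is licensed by the proposition. A secondary bookkeeping point is packaging three case-split formulas into one interpretation with the correct width and free-variable conventions, and ensuring the output structure is a genuine $\vocvec$-structure (a well-formed rational vector, with a consistent bit sort $B$ large enough for all entries); this is routine but should be stated so the reader sees the interpretation is literally an object of the required type.
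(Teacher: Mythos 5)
Your construction and correctness argument coincide with the paper's own proof: define the violated set $S$ in \FPC, sum it to $a_S$, output $a_S/\inorm{a_S}$ when $a_S \neq 0^V$ and $1^V$ otherwise, with correctness resting on Proposition~\ref{prop:sum-constraint} and the \FPC-definability of rational arithmetic from~\cite{H10}. The only (harmless) inaccuracy is your claim that, given $S \neq \es$, emptiness of $\poly A b$ holds \emph{precisely} when $a_S = 0^V$ — only the implication $a_S = 0^V \Rightarrow \poly A b = \es$ follows from the proposition, but in the remaining case ($\poly A b = \es$, $a_S \neq 0^V$) the output $a_S/\inorm{a_S}$ is still valid by the same vacuous-maximum argument you already use, so the proof is unaffected.
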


\section{Reducing Optimisation to Separation in \FPC}\label{sec:opt-sep}

In this section we present our main technical result, which is an \FPC reduction
from optimisation to separation, which treats the classical polynomial-time
reduction of the corresponding problems as a subroutine.  This classical result can be stated as follows.

%\marginnote{M: Perhaps discuss ``oracle polynomial time'' and ``blackbox''?}

\begin{thm}[{c.f., e.g., \cite[Theorem 6.4.9]{GLS88}}\footnote{The reverse of
this theorem also holds: An oracle for the linear optimisation problem can be
used to solve the separation problem.}] 
\label{thm:opt-to-sep-class}
The linear optimisation problem can be solved in polynomial time for any
well-described polytope given by a polynomial-time oracle solving the separation problem for
that polytope.
\end{thm}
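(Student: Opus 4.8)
The plan is to run Khachiyan's ellipsoid method against the given separation oracle and to exploit the well-describedness of $P$ to round the approximate information it produces into the exact optima demanded by the Linear Optimisation problem. Fix a polynomial bound $\phi$ on the facet complexity $\fSIZE{P}$, available by hypothesis. By the facet/vertex complexity relationship $\vSIZE{P} \le 4|V|^2\fSIZE{P}$, there is a polynomial $\nu$ (in $\phi$ and $|V|$) such that every vertex of $P$ and every generator of its recession cone has bit complexity at most $\nu$; hence, if $P$ is nonempty and pointed it has a vertex whose coordinates are rationals of height at most $2^\nu$, and $P$ then lies inside the explicit box $K \defeq \{x \in \QQ^V : \|x\|_\infty \le 2^\nu\}$. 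Moreover, any two distinct values of $c^\top x$ over vertices of $P$ differ by at least $2^{-q}$ for a polynomial $q$ (in $\nu$ and $\size c$), and the optimum $\gamma^\ast \defeq \max\{c^\top x : x \in P\}$, when finite, is a rational of height at most $2^{\mathrm{poly}}$.

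The core subroutine is a feasibility test via the central-cut ellipsoid method. Intersecting $P$ with $K$ yields a bounded polytope for which a strong separation oracle is obtained from the given one for $P$ by first checking the explicit box constraints. Starting from the ball of radius $2^{\nu}\sqrt{|V|}$ about the origin, one iterates: query the oracle at the current ellipsoid centre; if it reports membership, halt; otherwise replace the current ellipsoid by the minimum-volume ellipsoid containing its intersection with the half-space through the centre determined by the returned violated constraint. Each step multiplies the volume by a factor at most $e^{-1/(2(|V|+1))}$, so after polynomially many steps the volume falls below $2^{-\mathrm{poly}}$. If a point was found, we are done; otherwise $P \cap K$ cannot be full-dimensional, so $P$ lies in some hyperplane, the normal of which is recovered approximately from the direction in which the final, very flat, ellipsoid is squeezed and rounded --- using the bounded complexity of the defining equations of the affine hull of $P$ --- to an exact hyperplane $H \supseteq P$; one then recurses on $P \cap H$ in one lower dimension, so that after at most $|V|$ rounds the test either returns a point of $P$ or correctly reports $P = \es$.

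Given the feasibility test, optimisation follows by binary search on the objective value. If the test on $P \cap K$ finds no point, report $P = \es$. Otherwise test feasibility of $P \cap K \cap \{x : c^\top x \ge \gamma\}$ --- again with a separation oracle derived from that of $P$ together with explicit constraints --- for a sequence of thresholds $\gamma$; if feasibility persists for a threshold exceeding the a priori bound $2^{\mathrm{poly}}$ on $c^\top(\text{vertex of }P)$, then the box constraint of $K$ is active at the optimum of the bounded problem and we report that $P$ is unbounded in the direction of $c$. Otherwise the binary search confines $\gamma^\ast$ to an interval of length $< 2^{-q}$, which by the gap bound contains a unique admissible rational; rounding recovers $\gamma^\ast$ exactly, and one final feasibility call with $\gamma = \gamma^\ast$ followed by rounding the returned point to a vertex produces an optimal $y \in P$. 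All three outcomes of the problem are thereby covered, and the number of oracle calls and arithmetic operations, as well as the sizes of all intermediate numbers, remain polynomial.

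The main obstacle is the rounding. The ellipsoid method inherently produces only approximate (``weak'') answers --- an almost-feasible point, an almost-optimal value, or an almost-flat ellipsoid --- and turning each of these into the exact object required here is where well-describedness is indispensable: the work is done by continued-fraction rounding (equivalently, simultaneous Diophantine approximation) carried out against the a priori denominator bound $2^\nu$, both for points of $P$ and for the hyperplanes arising in the degenerate case. A secondary, more routine, matter is the bookkeeping that keeps the perturbation parameters and every intermediate quantity of size polynomial in the input while preserving the correctness of the emptiness and unboundedness decisions.
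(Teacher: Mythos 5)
The paper does not actually prove this statement: it is invoked as a classical black box, cited to Gr\"otschel--Lov\'asz--Schrijver \cite[Theorem 6.4.9]{GLS88}, and your outline is essentially the standard argument behind that citation (central-cut ellipsoid method over a box of radius determined by the vertex-complexity bound, dimension reduction by rounding a flat ellipsoid to an exact hyperplane containing $P$ via simultaneous Diophantine approximation, binary search with continued-fraction rounding for the exact optimum, and unboundedness detected when the value exceeds the a priori vertex bound), so it matches the intended source proof. One small inaccuracy: a nonempty pointed $P$ need not ``lie inside the box $K$'' when it is unbounded --- what your argument actually needs, and uses, is only that a nonempty well-described $P$ contains a point of infinity-norm at most $2^{\nu}$, so that $P \cap K \neq \emptyset$.
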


\noindent
Below, we prove the following analogous result for fixed-point logic with counting.

\begin{thm}[Optimisation to Separation]
\label{thm:opt-to-sep}
  Let $\Pp$ be a class of well-described rational polytopes represented by
  $\tau$-structures and the function $\nu$.  Let $\Sigma$ be an \FPC
  interpretation of $\vocvec$ in $\tau \uplus \vocvec$ expressing the separation
  problem for $\Pp$ with respect to $\nu$.  Then there is an \FPC interpretation
  of $\vocnum \uplus \vocvec$ in $\tau \uplus \vocvec$ which expresses
  the linear optimisation problem for $\Pp$ with respect to $\nu$.
%% which takes a $\tau$-structure $\struct A$ and a vector $c$ to a rational $f$
%% and vector $y$ such that either (i) $f = 1$, and $\nu(\struct A)$ is
%% unbounded in the direction of $c$, or (ii) $f = 0$, $\nu(\struct A) \neq \es$
%% iff $y \in \nu(\struct A)$ and $c^\top y = \max\condset{c^\top
%% x}{x \in \nu(\struct A)}$.
\end{thm}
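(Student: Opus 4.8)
The plan is to mimic the classical optimisation-to-separation reduction of Theorem~\ref{thm:opt-to-sep-class} but to insert it as a black box inside an \FPC interpretation, working not on the original (unordered) polytope $P = \nu(\struct A) \subseteq \QQ^V$ but on a \emph{quotient} polytope whose variable set is ordered, so that the Immerman--Vardi theorem applies. The central idea, already sketched in the introduction, is that the separation oracle $\Sigma$, being an \FPC interpretation, is invariant under automorphisms of $\struct A$; hence it induces an equivalence relation $\sim$ on $V$ where $v \sim w$ if no invocation of the oracle can distinguish them. Concretely one runs the classical polynomial-time reduction on the \emph{symmetrised} version of the problem: whenever the classical algorithm proposes a query point $x \in \QQ^{V}$, we only ever query points that are constant on each $\sim$-class (equivalently, points pulled back from $\QQ^{V/\sim}$), and the oracle's answer---being a violated constraint $c$---is itself automorphism-invariant and so (after possibly averaging over the stabiliser, as $\Sigma$ already does by producing a canonical violated constraint) constant on $\sim$-classes. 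Thus the whole ellipsoid-method computation takes place over the ordered index set $V/\sim$, where by the Immerman--Vardi theorem (and the composition-with-interpretations property stated in the excerpt) it is definable in \FPC provided the equivalence classes and a linear order on them are available.

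The subtlety---and the reason the reduction is "iterative" as the introduction warns---is that the partition $\sim$ is not known in advance; it is discovered through the very queries the classical algorithm makes. So the concrete construction I would carry out is a fixed-point iteration: start with the coarsest candidate partition (all variables equivalent, or the partition induced by the constants and relations of $\tau$ mentioning $V$), run the classical reduction over the current quotient $\QQ^{V/\sim}$, and each time the classical algorithm would make an oracle call, feed the pulled-back point to $\Sigma$; the returned violated constraint $c$ may fail to be constant on the current classes, in which case we refine $\sim$ by the level sets of $c$ and restart the classical computation on the new, strictly finer quotient. Since $|V|$ is polynomially bounded in $|\struct A|$ and each refinement strictly increases the number of classes, this loop terminates after polynomially many rounds; the whole process---refinement step, quotient construction, and the single run of the ordered classical algorithm---is expressible as an inflationary fixed point with counting. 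At the end, $\sim$ has stabilised, the classical algorithm terminates on the quotient polytope $\bar P \subseteq \QQ^{V/\sim}$, and it returns either an optimal point $\bar y \in \bar P$, or reports $\bar P = \es$, or reports unboundedness in direction $\bar c$ (the pushforward of $c$).

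What remains is to argue that a solution to the optimisation problem on $\bar P$ with the pushed-forward objective $\bar c$ lifts back to a solution of the original problem $(P, c)$. Here I would use that $c$ is itself constant on $\sim$-classes (this is forced: if $c$ distinguished two $\sim$-equivalent variables we could not even pose a well-defined quotient objective, so we include the level sets of $c$ in the initial partition), and that the "symmetrisation" map sending $x \in P$ to its average $\bar x$ over the stabiliser lands in the slice $P \cap \QQ^{V/\sim}$ of points constant on classes, preserving the objective value $c^\top x = c^\top \bar x$; concretely $P \cap (\text{pullbacks from }\QQ^{V/\sim})$ is exactly the pullback of $\bar P$, and $c^\top(\text{pullback of }\bar y) = \bar c^\top \bar y$. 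Hence $\max\{ c^\top x : x \in P\} = \max\{\bar c^\top \bar y : \bar y \in \bar P\}$, $P = \es \iff \bar P = \es$, and unboundedness transfers; so the interpretation outputs $f=0$ together with the pullback of $\bar y$ (or $f=1$ in the unbounded case), meeting Definition~\ref{defn:expressing-optimisation-fpc}. The main obstacle I anticipate is not the lifting argument but the bookkeeping of step two: one must run the classical polynomial-time algorithm of Theorem~\ref{thm:opt-to-sep-class} \emph{inside} an \FPC fixed point, which requires that every oracle call it makes be answered by an \FPC-definable quantity over the current ordered quotient, and that the number of calls, the bit-sizes of the query points, and the rounding of irrational ellipsoid centres all stay polynomially bounded so that the Immerman--Vardi simulation goes through---this is where the well-describedness hypothesis on $\Pp$ and the bound $\vSIZE{P} \le 4|V|^2 \fSIZE{P}$ are essential, and where the bulk of the technical care in the actual proof will be spent.
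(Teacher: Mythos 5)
Your overall strategy is the same as the paper's (iteratively refine a partition of $V$ through the oracle's answers, run the classical reduction of Theorem~\ref{thm:opt-to-sep-class} as a black box over the ordered quotient via Immerman--Vardi, abort and restart on refinement, at most $|V|$ rounds), but the crucial step --- relating the quotient polytope to $P$ and lifting the answer --- is where your argument has a genuine gap. You take the quotient polytope to be the \emph{slice} $P \cap \{\text{points constant on the classes}\}$ and justify $\max\{c^\top x : x \in P\} = \max\{\bar c^\top \bar y : \bar y \in \bar P\}$ by averaging an optimum of $P$ over the automorphism group. That argument does not go through: the partition your loop stabilises on is only guaranteed to be \emph{coarser} than the orbit partition (orbits can never be split by the automorphism-invariant answers to symmetric queries, but distinct orbits inside one class need never get separated by the finitely many queries actually made), so the group-averaged optimum is constant on orbits but not necessarily on your classes, and it need not lie in your slice. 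For the same reason your claims ``$P=\es \iff \bar P = \es$'' and ``the classical algorithm returns $\bar y \in \bar P$'' are unsupported: the constructed oracle is only exercised on the queried points, its transcript does not pin down the slice, and the point output by the ellipsoid method is not one of the queried points, so without a further check you cannot conclude that its pullback lies in $P$ at all.

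The paper avoids exactly this trap by working with the \emph{projection} (folding) $\fd{P}$ rather than the slice: Proposition~\ref{prop:fold} shows folding preserves objective values for objectives agreeing with $\sigma$, Proposition~\ref{prop:poly} shows $\fd{P}$ is a well-described polytope, and Lemma~\ref{prop:altpoly} characterises $\fd{P}$ as $\{\order{x} : \unfd{\order{x}} \in P$ or all separating hyperplanes at $\unfd{\order{x}}$ disagree with $\sigma\}$. Because $\fd{P}$ is the image of $P$, its optimum value is automatically at least $\max_P c^\top x$; the remaining danger is that the optimum $\order{x}$ of $\fd{P}$ does not unfold into $P$, and this is handled by one additional oracle call $\Sigma$ on $\unfd{\order{x}}$ \emph{after} the classical algorithm returns: either it certifies $\unfd{\order{x}} \in P$ (and then Proposition~\ref{prop:fold} gives optimality for $P$), or by Lemma~\ref{prop:altpoly} it must return a hyperplane disagreeing with $\sigma$, which strictly refines the partition and triggers a restart. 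Your proposal is missing both this final verification step and a correct substitute for the lifting argument. (One can in fact repair your slice-based version --- at termination without abort the oracle transcript is consistent with both the slice and $\fd{P}$, so the classical correctness guarantee applies to both and forces their optima to coincide --- but that is precisely an argument you would have to make; the stabiliser-averaging you offer does not establish it.)
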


Observe that these theorems do not imply that every linear optimisation problem
can be solved in \FPC (or even in polynomial time).  Rather one can solve
particular classes of linear optimisation problems where domain knowledge can be
used to solve the separation problem.  We have the following generic consequence
in the case of explicitly-given polytopes when Theorem~\ref{thm:opt-to-sep} is
combined with Theorem~\ref{thm:explicit-sep}.

\begin{thm}[Explicit Optimisation]
\label{thm:explicit-opt}
There is an $\FPC$-interpretation of $\vocnum \uplus \vocvec$ in
$\vocmat \uplus \vocvec \uplus \vocvec$ expressing the linear optimisation
problem for the class of polytopes explicitly given by constraints and
represented naturally as $\vocmat \uplus \vocvec$-structures.
\end{thm}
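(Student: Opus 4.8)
The plan is to read off this theorem as an immediate corollary of Theorem~\ref{thm:explicit-sep} and Theorem~\ref{thm:opt-to-sep}, by composing the two interpretations: Theorem~\ref{thm:explicit-sep} supplies an \FPC interpretation expressing the separation problem for explicitly-given polytopes, and Theorem~\ref{thm:opt-to-sep} converts any such interpretation (for a class of \emph{well-described} polytopes) into one expressing linear optimisation. So the only things to do are to check that the hypotheses of the latter are met by the explicit representation and that the vocabularies line up exactly as in the statement.

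In detail, I would proceed as follows. First, let $\tau \defeq \vocmat \uplus \vocvec$ and let $\nu : \fin[\tau] \rightarrow \Pp$ be the natural explicit representation of the class $\Pp$ of rational polytopes, which takes a $\tau$-structure coding a constraint matrix $A \in \QQ^{C \times V}$ and a bound vector $b \in \QQ^C$ to the polytope $\poly A b$. As observed in the discussion of representations in Section~\ref{sec:background}, both the size of each constraint of $\nu(\struct A)$ and the number of constraints are bounded by a polynomial in $|\struct A|$; in particular $\SIZE{\nu(\struct A)}$ is polynomially bounded in $|\struct A|$, so $\nu$ is well described and $\Pp$, represented in this way, is a class of well-described rational polytopes. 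Second, by Theorem~\ref{thm:explicit-sep} there is an \FPC interpretation $\Sigma$ of $\vocvec$ in $\vocmat \uplus \vocvec \uplus \vocvec = \tau \uplus \vocvec$ expressing the separation problem for $\Pp$ with respect to $\nu$. Third, apply Theorem~\ref{thm:opt-to-sep} to $\Pp$, $\nu$ and $\Sigma$: since all its hypotheses are now satisfied, it yields an \FPC interpretation of $\vocnum \uplus \vocvec$ in $\tau \uplus \vocvec$ that expresses the linear optimisation problem for $\Pp$ with respect to $\nu$. Since $\tau \uplus \vocvec = \vocmat \uplus \vocvec \uplus \vocvec$ and $\nu$ is precisely the natural explicit representation of the class of polytopes given by constraints, this is exactly the interpretation asserted in the statement.

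All of the substantive content already resides in Theorems~\ref{thm:explicit-sep} and~\ref{thm:opt-to-sep}, so I do not expect any genuine obstacle in this step; the only points requiring (brief) care are verifying that the explicit representation is well described, so that Theorem~\ref{thm:opt-to-sep} is applicable, and tracking the vocabularies to confirm that the output interpretation of Theorem~\ref{thm:explicit-sep} has exactly the form demanded of the input $\Sigma$ of Theorem~\ref{thm:opt-to-sep}, and that the composed interpretation has source vocabulary $\vocmat \uplus \vocvec \uplus \vocvec$ and target vocabulary $\vocnum \uplus \vocvec$ as claimed.
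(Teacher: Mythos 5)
Your proposal is correct and matches the paper exactly: the paper obtains this theorem precisely by combining Theorem~\ref{thm:opt-to-sep} with Theorem~\ref{thm:explicit-sep}, noting (as you do) that the explicit representation is well described so the hypotheses apply. Your additional bookkeeping on vocabularies is consistent with the paper's statement and adds only routine verification.
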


The main idea behind the proof of Theorem~\ref{thm:opt-to-sep} is as follows.  
Suppose we are given a polytope $P \sse \QQ^V$ by 
an \FPC-interpretation $\Sigma_P$ that expresses the separation problem
for $P$.  A priori the elements of $V$ are indistinguishable.
However, $\Sigma_P$ may expose an underlying order in $V$ as it expresses
answers to the separation problem for $P$.  For example, suppose
$\Sigma_P$ on some input expresses a vector $d \in \QQ^V$ where the
components $d_u$ and $d_v$ for $u,v \in V$ are different.  This
information can be used to distinguish the components $u$ and $v$; moreover, it
can be used to order the components because $d_u$ and $d_v$ are distinct elements of a field
with a total order.  As $\Sigma_P$ is repeatedly used it may expose
more and more information about the asymmetry of $P$.  This partial
information can be represented by maintaining a sequence of
equivalence classes $(V_i)_{i=1}^k$ partitioning $V$.  This
equivalence relation is progressively refined through further
invocations of the separation oracle.  Initially all elements of $V$
reside in a single class.

It is natural to consider the polytope $P'$ derived from $P$ by taking
its quotient under the equivalence relation defined in this way.  
Intuitively, this maps polytopes in $\QQ^V$ to
polytopes in $\QQ^k$ by summing the components in each equivalence class to form
a single new component which is ordered by the sequence.  We call this
process \emph{folding}.  We observe that a separation oracle for $P'$ can be
constructed using $\Sigma_P$, provided the answers of $\Sigma_P$ never expose
more asymmetry than was used to derive $P'$.  However, failing to meet this
proviso is informative---it further distinguishes the elements of $V$---and
refines the sequence of equivalences classes.

These observations suggest the following algorithm.  Start with a sequence
$(V_1)$ of exactly one class which contains all of $V$.  Construct the
folded polytope $P'$ with respect to this sequence and the associated separation
oracle $\Sigma_{P'}$ from $\Sigma_P$.  Attempt to solve the linear optimisation
problem on the folded polytope (which lies in an ordered space) using the
Immerman-Vardi theorem \cite{V82,I86} and the classical polynomial-time
reduction from optimisation to separation (Theorem~\ref{thm:opt-to-sep-class})\footnote{Recall that, by the Immerman-Vardi theorem, every polynomial-time property of ordered structures is definable in fixed-point logic, and hence also in \FPC.}.
Should $\Sigma_P$ at any point answer with a vector that distinguishes more
elements of $V$ than the current sequence of equivalence classes, then we: 
(i) abort the run; 
(ii) refine the equivalence classes and the folded polytope with this new information; and, finally, 
(iii) restart the optimisation procedure on this more representative problem instance.
Since the number of equivalence classes
increases each time the algorithm aborts, it eventually solves
the optimisation problem for some $P'$ without aborting. We argue that this solution
for $P'$ can be translated into a solution for $P$.

A key aspect of this approach is that it treats the polynomial-time reduction
from optimisation to separation as a blackbox, i.e., it assumes nothing about
how the reduction works internally.\footnote{It is, in fact, possible to
translate the classical reduction from optimisation to separation line by line
into \FPC (in the spirit of Section~\ref{sec:sep}).  However, this translation
quickly becomes mired in intricate error analysis which is both tedious and
opaque.}  Before formally describing the algorithm we establish a number of
useful definitions and technical properties.

%%-----------------------------------------------------------------------------

\subsection{Folding}

Let $V$ be a set.  For $k \le |V|$, let $\sigma : V \rightarrow [k]$
be an onto map.  We call $\sigma$ an \emph{index map}.  For $i \in [k]$
define $V_i \de \condset{s \in V}{\sigma(s) = i}$.  The sequence of
sets $V_i$ is a partition of $V$.

\begin{definition}[Folding]\label{def:folding} ~\\
For a vector $x \in \QQ^V$ let the \emph{almost-folded} vector $\afd{x}$ of
$\QQ^{k}$ be given by $$(\afd{x})_i \de \sum_{v \in V_i} x_v, \text{ for } i \in [k].$$ For a vector $x
\in \QQ^V$ let the \emph{folded} vector $\fd{x}$ of $\QQ^{k}$ be given
by $$(\fd{x})_i \de \frac{\afd{x}}{|V_i|}, \text{ for } i \in [k].$$ For a vector $\order{x} \in \QQ^{k}$
let the \emph{unfolded} vector $\fold{\order{x}}{-\sigma}$ of $\QQ^V$ be given
by $$(\fold{\order{x}}{-\sigma})_v \de \order{x}_i, \text{ with } V_i \ni v, \text{ for } v \in V.$$
\end{definition}

We say a vector $x \in \QQ^V$ \emph{agrees with} $\sigma$ when for all $v,v' \in
V$, $\sigma(v) = \sigma(v')$ implies $x_v = x_{v'}$.  It easily follows that if
$x$ agrees with $\sigma$ then $\fold{\fold{x}{\sigma}}{-\sigma} = x$.  When
vectors agree with $\sigma$ and $\sigma$ is clear from context we often use the
font, as above with $x$ and $\order{x}$, to indicate whether a vector is
unfolded and lies in $\QQ^V$, or folded and lies in $\QQ^k$, respectively.  The
notion of folding naturally extends to a set $S \sse \QQ^V$ (and hence
polytopes): Let $\fd{S} \de \condset{\fd{s}}{s \in S}$.  See
Figures~\ref{fig:proj1}~and~\ref{fig:proj2} for examples of folding
polytopes.  Note that $\fd{P}$ is a projection of $P$ into the
$k$-dimensional space $\QQ^k$.

\begin{figure}
\centering
\includegraphics{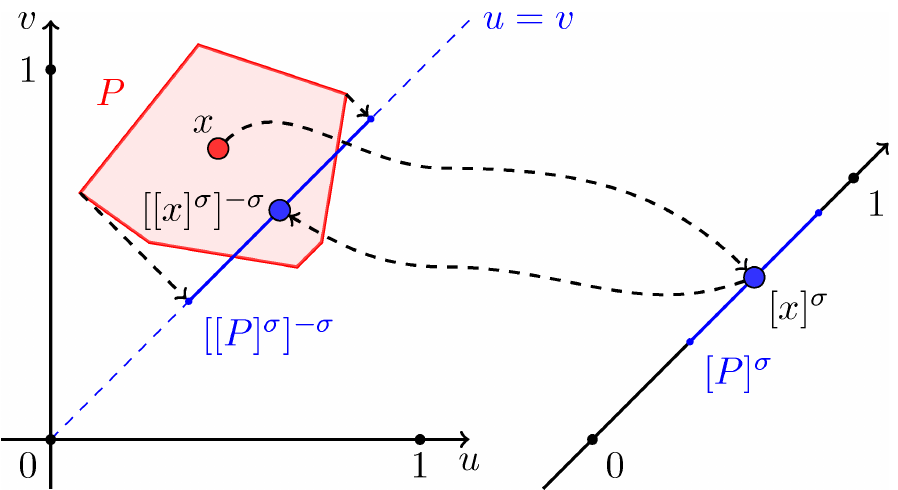}
\caption{Folding and unfolding a polytope $P \sse \QQ^{\set{u,v}}$ with respect to $\sigma
= \set{u \rightarrow 0, v \rightarrow 0}$.\label{fig:proj1}}

\includegraphics{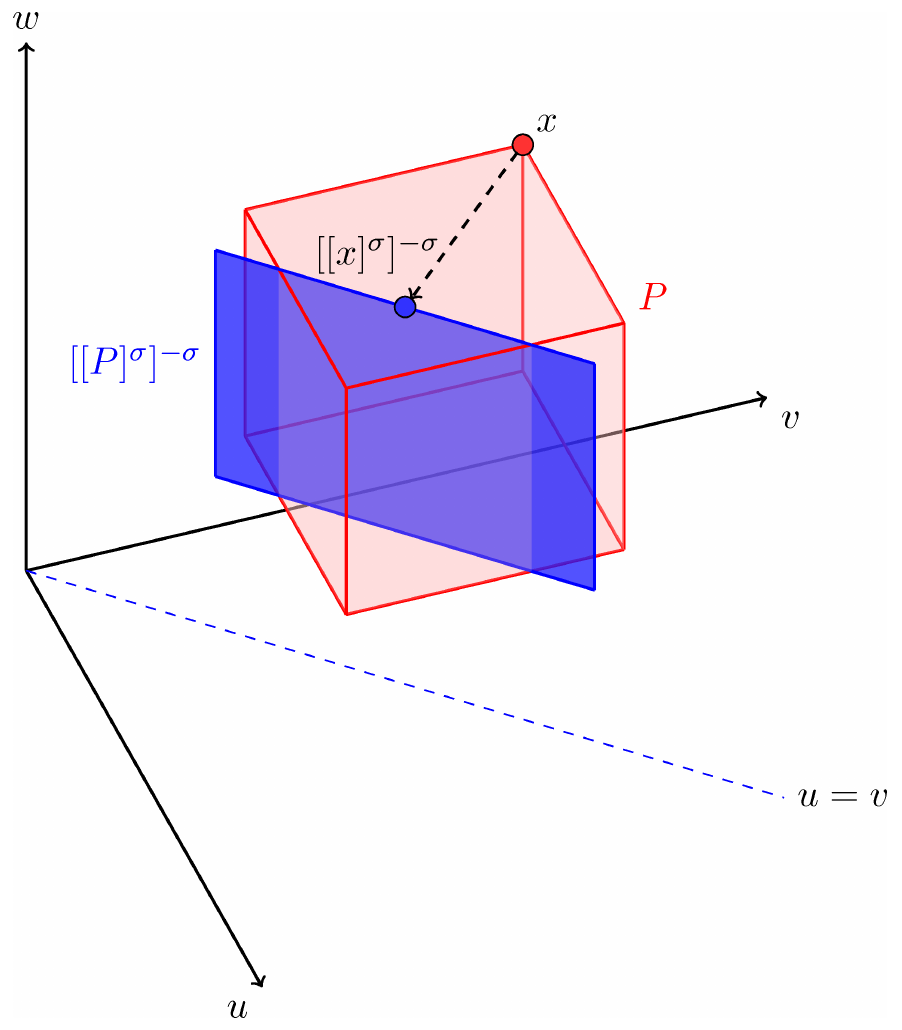}

\caption{Folding and unfolding a polytope $P \sse \QQ^{\set{u,v,w}}$ with respect
to $\sigma = \set{u \rightarrow 0, v \rightarrow 0, w \rightarrow 1}$.}
\label{fig:proj2}

\end{figure}

Several useful properties of folding and unfolding follow directly from their
definitions. 

\begin{prop}
  \label{prop:fold} Let $\sigma: V \rightarrow [k]$ be an index map and
  $c,x \in \QQ^V$ such that $c$ agrees with $\sigma$. Then, $$c^\top \unfd{\fd{x}} = c^\top
  x = \afd{c}^\top\fd{x}.$$
\end{prop}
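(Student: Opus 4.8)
The plan is to verify the two claimed equalities by direct computation, unwinding the definitions of folding and unfolding from Definition~\ref{def:folding}. First I would establish the rightmost equality $c^\top x = \afd{c}^\top \fd{x}$. Writing both sides as sums over $V$ and grouping the terms of the left-hand side by the fibres $V_i = \sigma^{-1}(i)$, we get $c^\top x = \sum_{i \in [k]} \sum_{v \in V_i} c_v x_v$. Since $c$ agrees with $\sigma$, $c_v$ is constant on each $V_i$; calling this common value $\order{c}_i$, the inner sum becomes $\order{c}_i \sum_{v \in V_i} x_v = \order{c}_i (\afd{x})_i$. On the other side, $(\afd{c})_i = \sum_{v \in V_i} c_v = |V_i|\, \order{c}_i$ and $(\fd{x})_i = (\afd{x})_i / |V_i|$, so $(\afd{c})_i (\fd{x})_i = |V_i| \order{c}_i \cdot (\afd{x})_i/|V_i| = \order{c}_i (\afd{x})_i$, matching term by term. (Each $V_i$ is nonempty because $\sigma$ is onto, so the division by $|V_i|$ is legitimate.)

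Next I would handle the leftmost equality $c^\top \unfd{\fd{x}} = c^\top x$. By definition $(\unfd{\fd{x}})_v = (\fd{x})_i$ where $V_i \ni v$, i.e. $(\unfd{\fd{x}})_v = (\afd{x})_i / |V_i|$. Grouping $c^\top \unfd{\fd{x}} = \sum_i \sum_{v \in V_i} c_v (\afd{x})_i/|V_i|$ and again using that $c_v = \order{c}_i$ on $V_i$, the inner sum is $\order{c}_i (\afd{x})_i / |V_i| \cdot |V_i| = \order{c}_i (\afd{x})_i$, which is the same expression obtained for $c^\top x$ above. Hence all three quantities agree.

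There is essentially no obstacle here; the statement is a routine bookkeeping identity, and the only point requiring care is to invoke the hypothesis that $c$ agrees with $\sigma$ at exactly the spot where we pull $c_v$ out of the inner sum over $V_i$ as a constant $\order{c}_i$ — without this hypothesis the middle and right expressions would instead involve $\afd{c}$ in a way that does not collapse. I would present the argument compactly as the chain of equalities above, perhaps remarking that the identity $\unfd{\fd{x}}$-version also specialises the earlier observation that $\unfd{\fd{x}} = x$ when $x$ itself agrees with $\sigma$, though that stronger fact is not needed here since we only pair against the agreeing vector $c$.
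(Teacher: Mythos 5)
Your proof is correct and follows essentially the same route as the paper: expand the inner products, group the sums by the fibres $V_i$, and use the hypothesis that $c$ agrees with $\sigma$ to pull the common value of $c$ out of each inner sum (the paper phrases the first equality as showing $c^\top(\unfd{\fd{x}}-x)=0$ via the per-class cancellation, but this is only a cosmetic difference from your evaluation of all three quantities to the common form $\sum_i \order{c}_i(\afd{x})_i$). Your remark that each $V_i$ is nonempty because $\sigma$ is onto, so the division by $|V_i|$ is legitimate, is a nice touch the paper leaves implicit.
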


\begin{proof} We begin by proving the first equality.  Fix $i \in [k]$.  Definition~\ref{def:folding} implies that
  \begin{equation} \label{eqn:fold}
  \begin{aligned}
  \sum_{v \in V_i} (\unfd{\fd{x}})_v - x_v &= \sum_{v \in V_i} (\fd{x})_i
  - \sum_{v \in V_i} x_v = \sum_{v \in V_i} \left(\frac{1}{|V_i|}\sum_{v' \in V_i} x_{v'}\right) - \sum_{v \in V_i}
  x_v \\
  &= \frac{|V_i|}{|V_i|} \sum_{v' \in V_i} x_{v'} - \sum_{v \in V_i} x_v  = 0. \\
  \end{aligned}
  \end{equation}

We conclude that  
  \alignedeq{
  c^\top (\unfd{\fd{x}} - x) &= \sum_{v \in V} c_v ((\unfd{\fd{x}})_v - x_v) \\
  &= \sum_{i \in [k]} \sum_{v \in V_i} c_v ((\unfd{\fd{x}})_v - x_v) & V =
  \uplus_{i \in[k]} V_i \\
  &= \sum_{i \in [k]} \sum_{v \in V_i} \frac{(\afd{c})_i}{|V_i|} ((\unfd{\fd{x}})_v
  - x_v) & c \text{ agrees with } \sigma \\
  &= \sum_{i \in [k]} \frac{(\afd{c})_i}{|V_i|} \sum_{v \in V_i} \unfd{(\fd{x}})_v -
  x_v & \text{linearity} \\
  &= \sum_{i \in [k]} \frac{(\afd{c})_i}{|V_i|} \cdot 0 = 0. & \text{\eqref{eqn:fold}} \\
  }

We now argue the second equality.

  \alignedeq{
  c^\top x &= \sum_{v \in V} c_v x_v \\
  &= \sum_{i \in [k]} \sum_{v \in V_i} c_v x_v & V = \uplus_{i \in [k]} V_i\\
  &= \sum_{i \in [k]} \sum_{v \in V_i} \left(\frac{1}{|V_i|} \sum_{v' \in V_i}
  c_{v'}\right) x_v & c \text{ agrees with } \sigma \\
  &= \sum_{i \in [k]} \left(\sum_{v' \in V_i} c_{v'}\right)
  \left(\frac{1}{|V_i|} \sum_{v \in V_i} x_v \right) & \text{linearity} \\
  &= \sum_{i \in [k]} (\afd{c})_i (\fd{x})_i & \text{Def.~\ref{def:folding}} \\
  &= \afd{c}^\top \fd{x}. \\
  }
\end{proof}

\subsection{Folding Polytopes}

The diagrams in Figures~\ref{fig:proj1}~and~\ref{fig:proj2} suggest intuitively
that the result of folding a polytope is itself a polytope; the following
proposition makes this connection concrete.

\begin{prop}
  \label{prop:poly} Let $P$ be a polytope in $\QQ^V$ and let $\sigma:
  V \rightarrow [k]$ be an index map.  Then the folded set $\fd{P}$ is a
  polytope with $\fSIZE{\fd{P}} \le 48k^3 |V|^3 \fSIZE{P}$.
\end{prop}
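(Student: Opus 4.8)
The plan is to exhibit explicitly a finite system of linear inequalities whose solution set is $\fd{P}$, and to control the bit-complexity of that system, using the vertex/facet complexity translation from \cite[Lemma 6.2.4]{GLS88} quoted in the preliminaries. The cleanest route is via the vertex description rather than the facet description: since $P$ is a polytope in $\QQ^V$, write $P = \conv(S_1) + \cone(S_2)$ with $S_1, S_2 \sse \QQ^V$ finite and each vector in $S_1 \cup S_2$ of size at most $\vSIZE{P}$. I claim that $\fd{P} = \conv(\fd{S_1}) + \cone(\fd{S_2})$, where folding of a set is applied pointwise as in the definition before the proposition. This identity follows because folding $x \mapsto \fd{x}$ is a \emph{linear} map $\QQ^V \to \QQ^k$ (each coordinate $(\fd{x})_i = \frac{1}{|V_i|}\sum_{v \in V_i} x_v$ is linear in $x$), and linear images commute with convex hulls and with cones: for any linear map $L$, $L(\conv(S_1) + \cone(S_2)) = \conv(L(S_1)) + \cone(L(S_2))$. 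Hence $\fd{P}$ is again a polytope, being of the form $\conv(\cdot) + \cone(\cdot)$ for finite sets.

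Next I bound $\vSIZE{\fd{P}}$. The generating sets $\fd{S_1}, \fd{S_2}$ have entries that are averages of at most $|V|$ rationals each of size at most $\vSIZE{P}$; such an average is a rational whose numerator and denominator are bounded by a fixed polynomial in $|V|$ and $2^{\vSIZE{P}}$, so each coordinate of a folded generator has size $O(|V|\,\vSIZE{P})$, and since $\fd{P}$ lives in $\QQ^k$ with $k \le |V|$ the whole vector has size at most, say, $c\,k\,|V|\,\vSIZE{P}$ for an absolute constant $c$ (being slightly generous, $\vSIZE{\fd{P}} \le |V|^2 \vSIZE{P}$ suffices). Then I convert back to facet complexity using $\fSIZE{Q} \le 3 (\dim Q)^2 \vSIZE{Q}$, here with $\dim \le k$, and feed in $\vSIZE{P} \le 4|V|^2 \fSIZE{P}$; composing these three estimates gives $\fSIZE{\fd{P}} \le 3k^2 \cdot |V|^2 \vSIZE{P} \le 3k^2 \cdot |V|^2 \cdot 4|V|^2 \fSIZE{P} = 12 k^2 |V|^4 \fSIZE{P}$, which is comfortably within the stated $48 k^3 |V|^3 \fSIZE{P}$ (using $|V| \ge k$ to trade a factor of $|V|$ for a factor of $k$ and leaving constant room). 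The constants in the paper's bound are generous, so the exact routing of $|V|$ versus $k$ factors does not matter as long as one is careful to stay under the bound.

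The main obstacle — really the only nontrivial point — is the bit-complexity bookkeeping: one has to be honest that a coordinate of $P$'s facet description has a size whose relation to coordinate sizes of the \emph{vertex} description is only polynomial (not linear), so the detour through $\vSIZE{\cdot}$ costs two applications of \cite[Lemma 6.2.4]{GLS88}, each quadratic in the dimension. Everything else — that folding is linear, that linear images of polytopes are polytopes, that averaging blows up size only polynomially — is routine. One should also note at the outset the degenerate case where $P = \es$: then $S_1 = \es$ and $\fd{P} = \es$, which is (vacuously) a polytope and the size bound holds trivially, so we may assume $P \neq \es$ in the main argument.
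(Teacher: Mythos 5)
Your overall route is exactly the paper's: write $P = \conv(S_1) + \cone(S_2)$, use linearity of the folding map to get $\fd{P} = \conv(\fd{S_1}) + \cone(\fd{S_2})$ (hence a polytope), bound $\vSIZE{\fd{P}}$ by the cost of averaging coordinates over the classes $V_i$, and convert between facet and vertex complexity with the two quadratic estimates from \cite[Lemma 6.2.4]{GLS88}. So the structure of the argument is fine; the problem is the final bookkeeping step, which as written does not give the stated bound.

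Concretely: after ``being slightly generous'' you weaken your own estimate $\vSIZE{\fd{P}} \le c\,k\,|V|\,\vSIZE{P}$ to $\vSIZE{\fd{P}} \le |V|^2\vSIZE{P}$ and arrive at $\fSIZE{\fd{P}} \le 12\,k^2|V|^4\fSIZE{P}$, then claim this is within $48\,k^3|V|^3\fSIZE{P}$ ``using $|V| \ge k$ to trade a factor of $|V|$ for a factor of $k$.'' That trade goes in the wrong direction: since $k \le |V|$ we have $k^3|V|^3 \le k^2|V|^4$, so a bound of $12\,k^2|V|^4$ does \emph{not} imply a bound of $48\,k^3|V|^3$ (take $k=1$ and $|V|$ large, where $12|V|^4 \gg 48|V|^3$). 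The fix is simply not to be generous: keep the sharper intermediate estimate in the form the paper uses, $\vSIZE{\fd{P}} \le 4k|V|\,\vSIZE{P}$ (each folded coordinate is $\tfrac{1}{|V_i|}\sum_{v\in V_i} x_v$, an average of at most $|V|$ rationals of size at most $\vSIZE{P}$, and there are $k$ coordinates), and then chain
\begin{equation*}
\fSIZE{\fd{P}} \le 3k^2\,\vSIZE{\fd{P}} \le 3k^2\cdot 4k|V|\,\vSIZE{P} \le 12k^3|V|\cdot 4|V|^2\,\fSIZE{P} = 48\,k^3|V|^3\,\fSIZE{P},
\end{equation*}
which is exactly the paper's computation. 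You should also verify (or at least assert with the same care as the averaging remark) that the constant in your coordinate-size estimate is small enough to give the factor $4$ rather than an unspecified $c$, since the target bound has an explicit constant.
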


\begin{proof}
Let $P = \conv(S_1) + \cone(S_2)$ for two finite sets of points
$S_1,S_2 \sse \QQ^V$.  By the linearity of $\fd{\cdot}$ we have 
\begin{align*}
  \fd{P}
    &= \bigfd{\conv(S_1)+\cone(S_2)} \\
    &= \fd{\conv(S_1)} + \fd{\conv(S_2)} \\
    &= \conv(\fd{S_1}) + \cone(\fd{S_2}).
\end{align*}
\noindent
We conclude that $\fd{P}$ is a
polytope.  We have $$\fSIZE{\fd{P}} \le 3k^2\vSIZE{\fd{P}} \le 3k^2 \cdot
4k|V|\vSIZE{P} \le 12k^3|V| \cdot 4|V|^2 \fSIZE{P}$$ where the middle
inequality comes from bounding the bit complexity of $\frac{1^\top v}{|V|}$
for extremal vertices $v \in P$.
\end{proof}

For a polytope $P \sse \QQ^V$ and a point $x \in \QQ^V$ (with
$x \not\in P$) we say that \emph{all
separating hyperplanes at $x$ disagree with} $\sigma$ if
there is no $c \in \QQ^V$ which both agrees with $\sigma$ and has $c^\top x >
\max\condset{c^\top y}{y \in P}$.  This
induces an alternative characterisation of the polytope $\fd{P}$.

\newcommand{\pctext}[2]{\text{\parbox{#1}{\centering #2}}}

\begin{lem} 
\label{prop:altpoly}
Let $P$ be a polytope in $\QQ^V$ and $\sigma: V \rightarrow [k]$ be an
index map. Then 
\begin{alignat*}{2}
  \fd{P} = P' \de     
    \biggl\{ 
      \order{x} \in \QQ^k 
    &&\;\bigg|\; 
      \pctext{2.5in}{$\unfd{\order{x}} \in P$ or all separating hyperplanes at $\unfd{\order{x}}$ disagree with $\sigma$}
    \biggr\}.
\end{alignat*}

% \begin{align*}
%   \fd{P} = P' \de \{
%     &\order{x} \in
%     \QQ^k \;|\;
%     \unfd{\order{x}} \in P \\
%     &\text{ or all separating hyperplanes at
%   } \unfd{\order{x}} \text{ disagree with } \sigma
%   \}.
% \end{align*}
\end{lem}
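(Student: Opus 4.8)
The plan is to prove the two inclusions $\fd{P} \subseteq P'$ and $P' \subseteq \fd{P}$ separately. The workhorse in both directions is Proposition~\ref{prop:fold}, which lets me transport an inner product between $\QQ^V$ and $\QQ^k$ as soon as one of the two vectors agrees with $\sigma$; for the harder inclusion I will also invoke Proposition~\ref{prop:poly} so that $\fd{P}$ is known to be a genuine polytope (hence a closed convex set to which the separating hyperplane theorem applies). Two elementary facts I would use without further comment are that $\fd{\unfd{\order{x}}} = \order{x}$ for every $\order{x} \in \QQ^k$ (immediate from Definition~\ref{def:folding}, since $\sum_{v \in V_i}(\unfd{\order{x}})_v = |V_i|\,\order{x}_i$) and that $\fd{P} = \condset{\fd{y}}{y \in P}$ by definition of folding a set. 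I may assume $P \neq \es$, the empty case being degenerate.

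For the inclusion $\fd{P} \subseteq P'$: given $\order{x} = \fd{x}$ with $x \in P$, I want to show $\unfd{\order{x}} \in P$ or every separating hyperplane at $\unfd{\order{x}}$ disagrees with $\sigma$. So suppose $\unfd{\order{x}} \notin P$ and let $c \in \QQ^V$ be any vector agreeing with $\sigma$. By Proposition~\ref{prop:fold}, $c^\top \unfd{\order{x}} = c^\top \unfd{\fd{x}} = c^\top x \le \max\condset{c^\top y}{y \in P}$, where the inequality holds because $x \in P$. Hence $c$ is not a separating hyperplane at $\unfd{\order{x}}$, so $\order{x} \in P'$. This direction is essentially immediate.

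For the inclusion $P' \subseteq \fd{P}$: given $\order{x} \in P'$, if $\unfd{\order{x}} \in P$ then $\order{x} = \fd{\unfd{\order{x}}} \in \fd{P}$ and we are done, so I would assume $\unfd{\order{x}} \notin P$ and, by hypothesis, that every separating hyperplane at $\unfd{\order{x}}$ disagrees with $\sigma$; then I argue by contradiction, assuming $\order{x} \notin \fd{P}$. Since $\fd{P}$ is a nonempty polytope (Proposition~\ref{prop:poly}), there is $\order{c} \in \QQ^k$ with $\order{c}^\top \order{x} > \max\condset{\order{c}^\top \order{y}}{\order{y} \in \fd{P}}$, the maximum being finite and attained. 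The crucial move is to lift $\order{c}$ to a witness for $P$ that agrees with $\sigma$: set $c_v \de \order{c}_i / |V_i|$ for the unique $i$ with $v \in V_i$, so that $c$ agrees with $\sigma$ and $\afd{c} = \order{c}$. Then Proposition~\ref{prop:fold} gives $c^\top y = \afd{c}^\top \fd{y} = \order{c}^\top \fd{y}$ for all $y \in \QQ^V$; taking the maximum over $y \in P$ (and using $\fd{P} = \condset{\fd{y}}{y \in P}$) yields $\max\condset{c^\top y}{y \in P} = \max\condset{\order{c}^\top \order{y}}{\order{y} \in \fd{P}}$, while taking $y = \unfd{\order{x}}$ (and using $\fd{\unfd{\order{x}}} = \order{x}$) yields $c^\top \unfd{\order{x}} = \order{c}^\top \order{x}$. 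Combining these, $c^\top \unfd{\order{x}} > \max\condset{c^\top y}{y \in P}$, so $c$ is a separating hyperplane at $\unfd{\order{x}}$ agreeing with $\sigma$, contradicting the hypothesis; hence $\order{x} \in \fd{P}$.

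The only real obstacle is this lifting step in the second inclusion: one must use the \emph{weighted} unfolding $c_v = \order{c}_i/|V_i|$ rather than the plain unfolding $\unfd{\order{c}}$, precisely so that $\afd{c} = \order{c}$, since that is exactly the form in which Proposition~\ref{prop:fold} equates $c^\top y$ with $\order{c}^\top \fd{y}$. Everything else is bookkeeping. I would also keep in mind that the phrase ``separating hyperplane at $x$'' tacitly requires the relevant maximum to be finite, which causes no difficulty here because both $P$ and $\fd{P}$ are nonempty polytopes.
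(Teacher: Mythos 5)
Your proof is correct and follows essentially the same route as the paper: the first inclusion is identical, and your second inclusion is just the paper's argument run contrapositively, with your weighted lifting $c_v = \order{c}_i/|V_i|$ making explicit the paper's observation that $\afd{\cdot}$ maps the $\sigma$-agreeing vectors onto $\QQ^k$. The reliance on $\fd{P}$ being a (rational) polytope, so that a violated rational constraint serves as the separator, matches the paper's appeal to the constraints defining $\fd{P}$.
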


\begin{proof}
  We show both inclusions.

  \case{1.}{$\fd{P} \sse P'$:} 
  Let $\order{x} \in \fd{P}$.  By definition there is a point $x \in P$ such
  that $\fd{x} = \order{x}$.  Suppose $x = \unfd{\order{x}}$, then
  $\unfd{\order{x}} \in P$ and hence $\order{x} \in P'$.  Thus assume
  $\unfd{\order{x}} \neq x$.  Let $c \in \QQ^V$ be any vector agreeing with
  $\sigma$.  By Proposition~\ref{prop:fold} we have $c^\top \unfd{\order{x}} =
  c^\top \unfd{\fd{x}} = c^\top x.$ Since $x \in P$, $c$ is not the normal of a
  separating hyperplane through $\unfd{\order{x}}$.  We conclude that all
  separating hyperplanes through $\unfd{\order{x}}$ disagree with $\sigma$ and
  hence that $\order{x} \in P'$.
  
  \medskip

  \case{2.}{$\fd{P} \spe P'$:} 
  Let $\order{x} \in P'$.  Suppose $\unfd{\order{x}} \in P$, then $\order{x}
  = \fd{\unfd{\order{x}}} \in \fd{P}$.  Thus assume that
  $\unfd{\order{x}} \not\in P$ and that all separating hyperplanes through
  $\unfd{\order{x}}$ disagree with $\sigma$.  This means that for any vector
  $c \in \QQ^V$ that agrees with $\sigma$ the hyperplane through
  $\unfd{\order{x}}$ with normal $c$ intersects $P$ and thus there is a point
  $y \in P$ which has $c^\top \unfd{\order{x}} = c^\top y$.  This further
  implies that $c^\top \unfd{\order{x}} \le \max \condset{c^\top y}{y \in P}$. 
  Since $c$ agrees with $\sigma$, Proposition~\ref{prop:fold} implies that
  $$\afd{c}^\top \order{x} \le \max\condset{\afd{c}^\top \fd{y}}{y \in P}.$$
  Observe $\afd{\condset{c \in \QQ^V}{c \text { agrees with } \sigma}} = \QQ^k$.
  This means for any vector $c' \in \QQ^k$,
  $c'^\top \order{x} \le \max\condset{c'^\top \fd{y}}{y \in P}$.  In particular,
  for every constraint defining the polytope $\fd{P}$, 
  $\order{x}$ also satisfies that constraint.  We conclude that
  $\order{x} \in \fd{P}$.
\end{proof}

\subsection{Expressing Optimisation in \FPC}

Suppose we are given a polytope $P \sse \QQ^V$ via a separation oracle
$\Delta_P$, and a vector $c$ indicating a linear objective.  The algorithm
maintains an index map $\sigma : V \rightarrow [k]$ that indicates a sequence of
equivalence classes of $V$ which have not been distinguished by the algorithm so
far.  Initially this index map is given by ordering variables according to their
relative values in $c$.  Under the assumption that $\sigma$ accurately describes
the symmetries of $P$ we execute the polynomial-time reduction from optimisation
to separation on the polytope $\fd{P}$ and objective $\afd{c}$.  Since $\fd{P}$
lies in an ordered space, it follows from the Immerman-Vardi theorem that the reduction can be
expressed in fixed-point logic with counting.

To this end, a separation oracle $\Delta_{\fd{P}}$ must be specified for
the polytope $\fd{P}$.  Given a point $\order{x} \in \QQ^k$, we argue that the
result of applying $\Delta_P$ to the unfolding of $\order{x}$ either determines
the point is in $P$, and hence also in $\fd{P}$; or determines a separating
hyperplane for $P$.  If a separating hyperplane is determined, it can be folded
into a separating hyperplane for $\fd{P}$, but only if the hyperplane normal
agrees with $\sigma$.  In the case the separating hyperplane disagrees with
$\sigma$, our assumption about $P$ is violated, and our separation oracle does
not have enough information to proceed.  Indeed, folding the resulting normal
may produce $0^k$ which is not a valid answer.  In this case, the algorithm
aborts the run of the linear optimisation algorithm, and returns the disagreeing
hyperplane normal.  The algorithm then combines the disagreeing normal with its
current index map $\sigma$ to produce a new index map which is consistent with
$\sigma$ and agreed with by the disagreeable hyperplane normal.  This strictly
increases the number of equivalence classes of variables induced by the index
map.  The above procedure can abort at most $|V|$ times before $\sigma$ exactly
characterises the order of $V$ relative to $P$.  After this point the linear
optimisation algorithm cannot abort and hence must solve the optimisation
problem for $\fd{P}$ which can be unfolded into a solution for $P$.

With this intuition in mind the formal proof is as follows.

\begin{proof}[Proof of Theorem~\ref{thm:opt-to-sep}]

For completeness the entire algorithm \FPCOpt is described in
Figure~\ref{fig:fpcopt}. The algorithm uses two subroutines \Refine and \Opt.
The subroutine \Refine$(\sigma,d)$ takes as input an index map $\sigma$ of $V$ represented
in $\NN^V$ and a vector $d \in \QQ^V$ and computes a new index map $\sigma'$
with the following two properties:
\begin{itemize}
\item for all $v,v' \in V$ with $\sigma(v) < \sigma(v')$, $\sigma'(v) <
  \sigma'(v')$, and
\item for all $v,v' \in V$ with $\sigma(v) = \sigma(v')$, $\sigma'(v) <
  \sigma'(v')$ iff $d_v < d_{v'}$.
\end{itemize}
 
\algxio{\FPCOpt}{fig:fpcopt}{An instrumentation of the reduction from
  optimisation to separation.}{P,\Delta_P,c}{ 
  \begin{itemize}
  \item A well-described polytope $P \sse \QQ^V$ with a separation
      oracle $\Delta_{P}$, and
  \item a linear objective $c \in \QQ^V$.
  \end{itemize}
}{
  \begin{itemize}
    \item $f = 1$ and $y = 0^V$, if $P$ is unbounded along $c$; or otherwise
    \item $f = 0$ and $y \in \QQ^V$, s.t. if $P \neq \es$ then $y \in P$ and 
      $c^\top y = \max\condset{c^\top x}{x \in P}$.
  \end{itemize}\vspace{-2.75ex}
}{
 \State $\sigma \gets \Refine(0^V,c)$.
 \While{true}

   \State $(f,\order{x}) \gets \Opt(\fd{P},\Delta_{\fd{P}},\afd{c})$.
   \If{aborted with $\sigma'$} %\com{FIXME - Better exception syntax? -- I think it's fine (BH)}
   \State $\sigma \gets \sigma'$. 
   \Else
   \State $\sigma' \gets \Refine(\sigma,\Delta_{P}(\unfd{\order{x}}))$.
   \If{$\sigma \neq \sigma'$}
   \State $\sigma \gets \sigma'$.
   \Else
   \State \Return $(f,\unfd{\order{x}})$.
   \EndIf
   \EndIf
 \EndWhile
 %% \medskip
 %% \hrule
 \medskip
 \hrule
 \medskip

 \Procedure{$\Delta_{\fd{P}}$}{$\order{x}$} \label{alg:fpcopt:oracle-begin}
     \State $d \gets \Delta_{P}(\unfd{\order{x}})$.
     \State $\sigma' \gets \Refine(\sigma,d)$.
     \If {$\sigma \neq \sigma'$} \textbf{abort} with $\sigma'$. \MEndIf
     \State \Return $\afd{d}$.
   \EndProcedure \smallskip \label{alg:fpcopt:oracle-end}
}

It is straightforward to observe that when $\Refine(\sigma,d)$ produces an
index map $\sigma'$ which is different from $\sigma$, then $\sigma'$ induces
strictly more equivalence classes on $V$ then $\sigma$ does.  Clearly, no
index map can induce more than $|V|$ equivalences classes.  The
subroutine \Opt{} solves the linear optimisation problem on an ordered space $\QQ^k$
with a given linear objective and a polytope given by a separation oracle.
Without loss of generality assume \Opt{} returns a integer-vector pair $(f,y)$
which is $(1,0^k)$ when the objective value is unbounded and $(0,y)$ when
$y \in \QQ^k$ is an optimal point in the polytope if, and only if, the polytope is non-empty.

We first argue that the algorithm is correct, assuming the correctness of \OPT
and \Refine.  For any index map $\sigma: V \rightarrow [k]$, $\fd{P}$ is a
polytope by Proposition~\ref{prop:poly}.  We show that the procedure
$\Delta_{\fd{P}}$ described in lines~\ref{alg:fpcopt:oracle-begin}
to \ref{alg:fpcopt:oracle-end} acts as a separation oracle for $\fd{P}$ provided
the answer given by the separation oracle $\Delta_P$ agrees with
$\sigma$.  If $\Delta_{P}(\unfd{\order{x}})$ outputs $d = 0^V$, then this indicates
that $\unfd{\order{x}} \in P$, and hence $\order{x} \in \fd{P}$ by
Proposition~\ref{prop:altpoly}.  Trivially $0^V$ agrees with $\sigma$, so
$\afd{d} = \afd{0^V} = 0^k$ is returned by $\Delta_{\fd{P}}$ correctly indicating
that $\order{x} \in \fd{P}$.  Otherwise, $d \neq 0^V$ and indicates that
$\unfd{\order{x}} \not\in P$ but $d^\top \unfd{\order{x}} > \max\condset{d^\top
y}{y \in P}$.  If $d$ agrees with $\sigma$ we have, by
Proposition~\ref{prop:fold}, $\afd{d}^\top \order{x}
> \max\condset{\afd{d}^\top \fd{y}}{y \in P}$.  This is equivalent to
$\afd{d}^\top \order{x}
> \max\condset{\afd{d}^\top \order{y}}{\order{y}\in \fd{P}}$.  Hence
$\afd{d}^\top$ is the normal of a separating hyperplane of $\fd{P}$ through
$\order{x}$.  Since $d$ agrees with $\sigma$, $\sigma' = \sigma$ and $\afd{d}$
is correctly returned.  If $d$ does not agree with $\sigma$, then \Refine
produces a $\sigma' \neq \sigma$ and the procedure aborts.  We conclude that \textit{(i)}
when $\Delta_{\fd{P}}$ does not abort it behaves as a separation oracle for
$\fd{P}$, and \textit{(ii)} when $\Delta_{\fd{P}}$ aborts the returned index map
$\sigma'$ is a strict refinement of $\sigma$.  Thus $\Delta_{\fd{P}}$ is a
separation oracle for $\fd{P}$, provided it does not abort.  When \OPT runs on
$\Delta_{\fd{P}}$ without aborting the result must be a solution to the linear
optimisation problem on $\fd{P}$.

Let $\order{x} \in \fd{P}$ be such that $\afd{c}^\top \order{x} \ge
\max\condset{\afd{c}^\top \order{y}}{\order{y} \in \fd{P}}$, i.e., it is a
solution to the linear optimisation problem on $\fd{P}$ along $\afd{c}$.  By
Proposition~\ref{prop:altpoly} this means that either \textit{(i)} $\unfd{\order{x}} \in
P$ or \textit{(ii)} $\Delta_{P}(\unfd{\order{x}})$ must disagree with $\sigma$.  Applying
$\Delta_P$ to $\unfd{\order{x}}$ distinguishes these two cases.  In case \textit{(i)},
$\afd{c}^\top \order{x} = c^\top \unfd{\order{x}} \ge \max\condset{c^\top y}{y
  \in P}$ by Proposition~\ref{prop:fold}, because the initialisation of $\sigma$
forces $c$ to agree with $\sigma$.  This means that $\unfd{\order{x}}$ is a
solution to the linear optimisation problem for the polytope $P$ and the
objective $c$.  In case \textit{(ii)}, $\unfd{\order{x}} \not\in P$ but
$\Delta_P(\unfd{\order{x}})$ is guaranteed to improve the index map.  In the case
that the linear optimisation algorithm returns that $\fd{P}$ is unbounded in the
direction of $\afd{c}$, it implies, via similar analysis, that $P$ is unbounded
in the direction $c$.  Finally, when the optimisation algorithm reports that
$\fd{P}$ is empty we conclude that $P$ must be empty as well, because if $P$
contains at least one point then $\fd{P}$ must also contain at least one point.  The
algorithm correctly translates the solutions for the linear optimisation problem for
$\fd{P}$ back to solutions for $P$.  This means that when \FPCOpt returns its
result is correct.

We now observe that this algorithm runs in polynomial time.  The main
loop cannot execute more than $|V|$ times, because, as established above, at
each step either the index map $\sigma$ is improved to induce more equivalence
classes---up to $|V|$ classes---or the algorithm returns a correct solution to
the linear optimisation problem on $P$.  The size of all of the objects referred
to by the algorithm can be polynomially bounded by a function of the input
length.  In particular, since $P$ is well-described by $\Delta_P$, there is a
polynomial bound on its bit complexity and this induces a bound on the size of
$\fd{P}$ through Proposition~\ref{prop:poly} and implies that $\fd{P}$ is well
described.  This implies that the bit complexity of values in the algorithm can
be bounded by some fixed polynomial.  This means that folding and unfolding can
be computed in polynomial time.  Similarly, a naive implementation of the
subroutine \Refine can be seen to run in polynomial time in $|V|$ and the bit
complexity of its input rational vector.  Since $\fd{P}$ is a well-described
polytope with a polynomial-time separation oracle $\Delta_{\fd{P}}$ we can use
the polynomial-time algorithm for \OPT from
Theorem~\ref{thm:opt-to-sep-class} to solve the linear optimisation problem on
$\fd{P}$.  Combining all these parts implies that \FPCOpt is a
polynomial-time algorithm.

We conclude by arguing that the behavior of \FPCOpt can be simulated in \FPC.
Relative to an index map $\sigma$ expressible in \FPC, folding and unfolding can
be expressed in \FPC using basic rational arithmetic.  It is similarly routine
to express \Refine in \FPC by defining the equivalence classes and then counting
sizes to determine the correct position of each equivalence class relative to an
\FPC-definable $\sigma$ and vector.  Moreover, there is a \FPC-interpretation
$\Sigma_{\fd{P}}$ expressing the separation problem for $\fd{P}$.  This implies
there is an \FPC-interpretation for the combination of \Opt{} and the separation oracle given by
$\Sigma_{\fd{P}}$, because the polytope $\fd{P}$ lies in an ordered space and
the Immerman-Vardi theorem \cite{V82,I86} indicates that any polynomial-time
property of ordered structures can be defined in fixed-point logic (and hence
in \FPC).  It is easy to see that the algorithm's main loop and control
structure can be simulated in \FPC.  Combining everything gives
an \FPC-interpretation simulating \FPCOpt and hence expressing the linear
optimisation problem for $P$ given a \FPC-interpretation expressing the
separation problem for $P$.
\end{proof}

\noindent
In the next four sections we demonstrate a number of applications of Theorem~\ref{thm:opt-to-sep} for
expressing classical combinatorial optimisation problems in fixed-point logic with counting.

\section{Application: Maximum Flow}\label{sec:maxflow}

Let $G = (V,c)$ be a graph with non-negative edge capacities, that is, $c : V
\times V \rightarrow \nnQQ$. % and for all $u,v \in V$, $c(u,v) = c(v,u)$.  
For a pair of distinct vertices $s,t \in V$ an \emph{\fl}is a
function $f : V\times V \rightarrow \nnQQ$ satisfying capacity constraints
$0 \le f(u,v) \le c(u,v)$ on each pair of distinct $u,v \in V$ and conservation
constraints $\sum_{v \in V}(f(v,u) - f(u,v)) = 0$ on all vertices $u \in
V\bs\set{s,t}$.  The \emph{value} $\val(f)$ of the flow $f$ is simply the
difference in in-flow and out-flow at $t$, i.e., $\sum_{v \in V}(f(v,t) -
f(t,v))$.  Observe that any flow $f$ can be \emph{normalised} to $f'$ so that for any pair
of distinct $u,v \in V$ at least one of $f'(u,v)$ and $f'(v,u)$ is zero (i.e., if
$f(u,v) \ge f(v,u)$, set $f'(u,v) \de f(u,v)-f(v,u)$ and $f'(v,u) \de 0$;
obviously this preserves the capacity constraints, the conservation constraints
and the value of the flow).  
%Without loss of generality we assume that all flows
%are normalised.  
A \emph{maximum \fl of $G$} is a flow whose value is maximum
over all \fls.   

%The \emph{maximum flow of $G$} is the maximum \fl over all
%choices of distinct vertices $s,t$.

\begin{obs}
  \label{obs:2} Fix $G = (V,c)$ and $s,t \in V$.  Let $f_1, f_2$ be two \fls in
  $G$.  Fix any $\alpha \in \QQ$ with $0 \le \alpha \le 1$, let $f'
  := \alpha \cdot f_1 + (1-\alpha)\cdot f_2$.  Then $f'$ is an \fl of $G$ and
  $\val(f') = \alpha \cdot \val(f_1) + (1-\alpha)\cdot \val(f_2)$.  In
  particular, if $f_1$ and $f_2$ are maximum \fls then so is any convex
  combination $f'$.
\end{obs}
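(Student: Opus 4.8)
The plan is to verify the two defining conditions of an \fl directly for $f' = \alpha f_1 + (1-\alpha) f_2$, exploiting that both conditions are preserved under convex combination because they are (affine-)linear, and then to read off the value and the maximality claim by the same linearity.

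First I would check the capacity constraints. Fix distinct $u,v \in V$. Since $\alpha \ge 0$, $1-\alpha \ge 0$ and $f_1(u,v), f_2(u,v) \ge 0$, we get $f'(u,v) \ge 0$; and since $f_i(u,v) \le c(u,v)$ for $i \in \{1,2\}$ together with $\alpha + (1-\alpha) = 1$, we get $f'(u,v) \le \alpha\,c(u,v) + (1-\alpha)\,c(u,v) = c(u,v)$. Hence $0 \le f'(u,v) \le c(u,v)$. Next I would check conservation: for each $u \in V \setminus \{s,t\}$, splitting the sum over the two summands and using that $f_1$ and $f_2$ each satisfy conservation at $u$,
\[
\sum_{v \in V} \bigl(f'(v,u) - f'(u,v)\bigr) = \alpha \sum_{v \in V} \bigl(f_1(v,u) - f_1(u,v)\bigr) + (1-\alpha) \sum_{v \in V} \bigl(f_2(v,u) - f_2(u,v)\bigr) = 0.
\]
This establishes that $f'$ is an \fl of $G$.

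For the value, the identical manipulation applied at $t$ gives $\val(f') = \alpha\,\val(f_1) + (1-\alpha)\,\val(f_2)$. The ``in particular'' then follows at once: if $f_1$ and $f_2$ are both maximum \fls then $\val(f_1) = \val(f_2)$ equals the maximum value $M$, so $\val(f') = \alpha M + (1-\alpha) M = M$, and a flow attaining the maximum value is by definition maximum. I do not expect any genuine obstacle here — the whole argument is the linearity of finite sums — and the only point requiring a moment's attention is the capacity upper bound, where one must use $\alpha + (1-\alpha) = 1$ rather than bounding the two terms separately.
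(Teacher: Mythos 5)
Your proof is correct and is exactly the routine linearity verification that the paper leaves implicit (it states this as an observation without proof). Nothing is missing: the capacity bounds, conservation at internal vertices, and the value computation at $t$ all follow from the convex-combination argument as you give it.
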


Let $G|_f \de (V,c - f)$ denote the \emph{residual graph} of $G$ with respect to
the flow $f$.  %Observe that in $G|_f$ capacities are no longer symmetric.
The standard formulation of the maximum \fl problem as a linear program is as follows:
\begin{equation}
  \label{eqn:flowLP}
  \begin{aligned}
    \max& \sum_{v \in V} (f(v,t) - f(t,v)) \quad\quad \text{ subject to} \\[1.25ex] &
    \sum_{v \in V} (f(v,u) - f(u,v)) = 0,\;\; \forall u \in V \bs \set{s,t}
    \\ & 0 \le f(u,v) \le c(u,v),\;\; \forall u \neq v \in V.
  \end{aligned}
\end{equation}

\subsection{Expressing Maximum Flow in \FPC}

Observe that there are $|V|(|V|-1)$ variables in linear program
\eqref{eqn:flowLP} corresponding to $f(u,v)$ for distinct $u,v \in V$.  The
program has $2|V|^2 - 4$ constraints.  Both the variables and constraints can be
indexed by tuples of elements from $V$.  It can easily be established that the
maximum \fl linear program can be defined by an \FPC interpretation.  That is to
say, suppose that a capacitated graph $(V,c)$ is given as a $\vocmat$-structure
with universe $V$ where the rational %symmetric 
matrix $c \in \nnQQ^{V \times V}$ codes the
capacities.  Then, there is an \FPC interpretation from $\vocmat$ to
$\vocmat \uplus \vocvec$ that takes a capacitated graph $(V,c)$ and a pair
$s,t \in V$ and explicitly expresses a constraint matrix $A$ and vector $b$
encoding the corresponding flow polytope.  The flow polytope is bounded because
each variable is constrained from both above and below.  Further the flow
polytope is nonempty because the capacities in $G$ are nonnegative and hence the
zero flow is a member of the polytope.  Thus, because this polytope is explicit,
Theorem~\ref{thm:explicit-opt} immediately gives an \FPC interpretation
expressing the optimisation problem on the flow polytope.

\begin{thm}
  \label{thm:flow} There is an \FPC interpretation $\Phi(s,t)$ of $\vocmat$ in
  $\vocmat$ which takes a $\vocmat$-structure coding a capacitated
  graph $G$ to a $\vocmat$-structure coding a maximum \fl of $G$.
\end{thm}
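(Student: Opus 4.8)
The plan is to obtain $\Phi(s,t)$ as a composition of three FPC-interpretations: one that builds the explicit representation of the flow polytope together with the flow objective from the graph, one supplied by Theorem~\ref{thm:explicit-opt} that solves the resulting linear program, and a final cosmetic one that repackages the optimal vector as a rational matrix. Throughout, $s$ and $t$ are carried as element-variable parameters of the interpretation, as the notation $\Phi(s,t)$ suggests, and at the end one appeals to the fact (recorded in the discussion of logical interpretations in Section~\ref{sec:background}) that FPC-interpretations compose.

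First I would write an FPC-interpretation $\Psi(s,t)$ of $\vocmat \uplus \vocvec \uplus \vocvec$ in $\vocmat$ which, given a $\vocmat$-structure coding a capacitated graph $G=(V,c)$ together with $s,t$, outputs the constraint matrix $A$, the bound vector $b$, and the objective vector $c_{\mathrm{obj}}$ of the linear program \eqref{eqn:flowLP}. The set of LP variables is the definable set $\{(u,v) \in V\times V : u \ne v\}$. The set of constraints is a disjoint union of definable pieces, each indexable by tuples over $V$ together with a bounded tag: the upper-capacity constraints $f(u,v) \le c(u,v)$ and the non-negativity constraints $-f(u,v) \le 0$, one of each per ordered pair $u\ne v$; and, for every $u \in V\setminus\set{s,t}$, the two inequalities $\sum_{v}(f(v,u)-f(u,v)) \le 0$ and $-\sum_{v}(f(v,u)-f(u,v)) \le 0$ that together encode the conservation equality at $u$. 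The entries of $A$, $b$, $c_{\mathrm{obj}}$ are then $0$, $\pm 1$, or an entry of the capacity matrix $c$, and since \cite{H10} provides FPC-definability of the requisite rational arithmetic on unordered-index matrices and vectors, each is expressible by an FPC-formula over $\vocmat$. This is precisely the reduction sketched informally in the paragraph preceding the statement.

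Second, by Theorem~\ref{thm:explicit-opt} there is an FPC-interpretation $\Theta$ of $\vocnum \uplus \vocvec$ in $\vocmat \uplus \vocvec \uplus \vocvec$ expressing the linear optimisation problem for explicitly given polytopes. Composing $\Theta$ with $\Psi(s,t)$ yields an FPC-interpretation mapping $(G,s,t)$ to a pair $(f,y)$ as in Definition~\ref{defn:expressing-optimisation-fpc}. Now the flow polytope is nonempty, since the all-zero flow satisfies every constraint because the capacities are nonnegative, and it is bounded, since each variable $f(u,v)$ is sandwiched between $0$ and $c(u,v)$; hence the unbounded case $f=1$ cannot occur, and — the polytope being nonempty — the interpretation always returns $f=0$ together with a genuine point $y$ of the flow polytope attaining $\max\condset{c_{\mathrm{obj}}^\top x}{x \text{ a flow}}$, i.e.\ a maximum \fl of $G$ by the standard LP formulation. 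Finally I would compose with one more FPC-interpretation of $\vocmat$ in $\vocnum \uplus \vocvec$ that discards $f$, reads off the vector $y$ indexed by the pairs $(u,v)$ with $u\ne v$, and outputs the $\vocmat$-structure on universe $V$ coding the matrix $M$ with $M(u,v)=y_{(u,v)}$ for $u\ne v$ and $M(v,v)=0$; this is a purely notational relabelling. The composite of the three interpretations is $\Phi(s,t)$.

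I do not expect a genuine obstacle here: the mathematical content is entirely in Theorem~\ref{thm:explicit-opt}, and what remains is bookkeeping — describing the variable and constraint index sets of the flow polytope as definable (multi-sorted, or tagged) universes, and chaining the interpretations. The only point requiring a line of argument rather than pure syntax is confirming nonemptiness and boundedness of the flow polytope, so that the optimisation interpretation is guaranteed to output an actual optimal flow rather than signalling unboundedness; both facts are immediate from the nonnegativity of the capacities and the box constraints on the flow variables.
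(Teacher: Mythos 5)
Your proposal is correct and follows essentially the same route as the paper: interpret the explicit flow polytope of \eqref{eqn:flowLP} in \FPC, note it is bounded and nonempty (zero flow, box constraints), and then apply Theorem~\ref{thm:explicit-opt}, composing interpretations to repackage the optimal vector as a $\vocmat$-structure. The extra bookkeeping you supply (splitting equalities into two inequalities, tagging constraint sorts) is exactly the detail the paper leaves implicit.
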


Note that as the interpretation $\Phi$ defines a particular flow, the flow must,
in some sense, be canonical because it is produced without making any choices.
Informally, it is a convex combination of maximum flows resulting from the
consideration of all orderings consistent with the most refined index map
determined by the \FPC interpretation of Theorem~\ref{thm:opt-to-sep}.  This is
possible because of Observation~\ref{obs:2}.  In our remaining
applications---minimum cut and maximum matching---the analog of
Observation~\ref{obs:2} does not hold: Convex combinations of cuts or matchings
are not necessarily cuts or matchings.  In the former it is still possible to
define the notion of a canonical optimum.  In the latter case it is easy to
observe, as noted in the introduction, that defining a canonical maximum
matching is not possible.

\section{Application: Minimum Cut}\label{sec:mincut}

%% We now show that Theorem~\ref{thm:flow} can be used to define minimum cuts in
%% undirected graphs. 

An \emph{\cu}of a capacitated graph $G = (V,c)$ is a subset $C$ of the vertices
$V$ which contains $s$ but not $t$.  The \emph{value} $\val(C)$ of the cut $C$
is the sum of the capacity of edges going from vertices in $C$ to vertices in
$V \bs C$.  A \emph{minimum \cu of $G$} is a cut whose value is the minimum
over all \cus.  A \emph{minimum cut of $G$} is a minimum \cu over all choices of
distinct vertices $s,t$.  By the max-flow/min-cut theorem, a maximum \fl and a
minimum \cu have the same value.  This duality allows the construction of
minimum cuts from maximum flows.  In this section we describe an \FPC formula
defining a minimum \cu in a graph using the \FPC interpretation for
the maximum \fl problem given by Theorem~\ref{thm:flow}; we show this minimum
cut is canonical in a strong sense.

\subsection{Expressing Canonical Minimum Cut in \FPC}

First, we define a notion of directed reachability in capacitated
graphs.  A vertex $v$ is \emph{reachable} from a vertex $u$ if there is a path
in the graph which follows directed edges with non-zero capacity (this is
exactly directed reachability in the graph induced by eliminating zero capacity
edges).  Let $f$ be a maximum \fl in $G=(V,c)$ with normalised flow $f'$.  Define
$C_f \de \condset{v \in V}{v \text{ reachable from } s \text{ in } G|_{f'}}$.
$C_f$ is a minimum \cu in $G$.  Since $f'$ is normalised, every edge leaving $C_f$ must be at
full capacity in $f'$.

Given the \FPC interpretation $\Phi$ from Theorem~\ref{thm:flow} expressing
an \fl $f$, it is not difficult to construct a formula of \FPC which defines the normalised flow $f'$ and then the
set of vertices $C_f$.

\begin{thm}
  \label{thm:cut} There is a formula $\xi(x,s,t)$ of $\FPC$ which given a
  $\vocmat$-structure coding a capacitated graph $G = (V,c)$, defines the
  vertices in a minimum \cu of $G$.
\end{thm}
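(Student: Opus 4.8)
The plan is to make \FPC-definable exactly the minimum cut $C_f$ singled out in the discussion preceding the theorem: the set of vertices reachable from $s$ in the residual graph $G|_{f'}$ of a normalised maximum \fl $f'$. That this set is a minimum \cu of $G$ is precisely the combinatorial fact recalled there --- a consequence of max-flow/min-cut duality together with the observation that, $f'$ being normalised, every edge leaving $C_f$ is saturated while no flow re-enters $C_f$ from outside --- so I take it as given and concentrate on \FPC-definability. It therefore suffices to produce an \FPC formula $\xi(x,s,t)$ expressing ``$x \in C_f$''.

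First I would compose interpretations. Theorem~\ref{thm:flow} supplies an \FPC interpretation $\Phi(s,t)$ of $\vocmat$ in $\vocmat$ that takes a $\vocmat$-structure coding $G=(V,c)$ (with designated vertices $s,t$) to a $\vocmat$-structure coding a maximum \fl $f$, i.e.\ a rational matrix over $V\times V$. By the closure of \FPC under substitution along interpretations recalled in Section~\ref{sec:background}, it is enough to define $\xi$ over a vocabulary in which both $c$ and $f$ are present as rational matrices over $V\times V$ and then pull the resulting formula back along $\Phi$. Working there, the \FPC-definability of the basic rational-matrix operations over unordered index sets~\cite{H10} --- in particular entrywise subtraction, comparison, and hence $\max$ --- lets me define the normalised flow by $f'(u,v) \de \max\{0,\, f(u,v)-f(v,u)\}$. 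This matches the case definition of normalisation given in the text and, as noted there, preserves feasibility and the value of $f$, so $f'$ is again a maximum flow and is normalised; crucially, the expression reads only the two entries $(u,v)$ and $(v,u)$, so its defining formula is isomorphism-invariant and invokes no ordering of $V$.

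Next I would form the residual graph and take reachability. Since the capacity constraints give $0 \le f'(u,v) \le c(u,v)$, the residual capacity $c(u,v)-f'(u,v)$ is non-negative, so the residual-edge relation $E'(u,v)$ --- holding exactly when $c(u,v)-f'(u,v) > 0$ --- is defined by an \FPC formula on the vertex sort (rational arithmetic and comparison). Reachability from $s$ along $E'$ is then definable by an inflationary fixed point, so one takes $\xi(x,s,t) \de [\ifpop_{R,y}(\, y = s \;\vee\; \exists z\,(R(z) \wedge E'(z,y))\,)](x)$, where $E'$ depends on $s$ and $t$ through $f'$. By construction $\xi$ defines exactly $C_f$, a minimum \cu of $G$, which proves the theorem. (If the stronger canonicity claim from the text --- that $C_f$ is the $\subseteq$-smallest minimum $(s,t)$-cut --- is also wanted, that requires a separate argument about residual graphs of maximum flows and is not needed for Theorem~\ref{thm:cut} as stated.)

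I do not expect a serious obstacle: the argument merely assembles Theorem~\ref{thm:flow}, the \FPC-definability of rational arithmetic on matrices indexed by unordered sets, the definability of reachability in inflationary fixed-point logic, and closure of \FPC under interpretations. The two points needing care are (a) checking that normalisation really is a pointwise, order-free \FPC operation on $V\times V$ --- which it is, since it reads only the entries $(u,v)$ and $(v,u)$ --- and (b) using precisely the residual graph $G|_{f'}=(V,c-f')$, rather than the full residual graph with cancellation arcs, for which the asserted fact ``$C_f$ is a minimum cut'' is the one we may invoke.
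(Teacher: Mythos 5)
Your proposal matches the paper's own argument: the paper likewise treats ``$C_f$ is a minimum \cu'' as the combinatorial fact stated before the theorem, and proves the theorem by composing the interpretation $\Phi$ of Theorem~\ref{thm:flow} with \FPC-definable rational arithmetic to define the normalised flow $f'$, the residual graph $G|_{f'}$, and reachability from $s$ via a fixed point. Your added remarks (pointwise, order-free normalisation; using $G|_{f'}=(V,c-f')$ exactly as defined) are consistent with the paper's treatment, so this is essentially the same proof.
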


In fact, the cut $C_f$ does not depend on $f$ at all, as we show next.  Indeed,
$C_f$ is the smallest minimum \cu in the sense that it is contained in all other
minimum \cus of $G$.

\begin{lem} 
  \label{lem:1} Let $G = (V,c)$ be a capacitated graph with distinct
  vertices $s,t \in V$.  Then the cut $C_f$ is independent of the choice of a
  maximum \fl $f$ of $G$.  Moreover, $C_f$ is the intersection of all
  minimum \cus of $G$.
\end{lem}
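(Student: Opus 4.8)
The plan is to show two things: first, that $C_f$ is contained in every minimum \cu of $G$; and second, that $C_f$ is itself a minimum \cu. Together these imply $C_f = \bigcap \{D : D \text{ a minimum } (s,t)\text{-cut of } G\}$, which in particular makes $C_f$ independent of the choice of maximum flow $f$. Since $C_f$ is already known to be a minimum \cu (stated just before the lemma), the real content is the containment claim.

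For the containment, let $f$ be any maximum \fl with normalisation $f'$, and let $D$ be an arbitrary minimum \cu. I would argue by contradiction: suppose some $v \in C_f \setminus D$. By definition of $C_f$ there is a directed path $s = u_0, u_1, \ldots, u_\ell = v$ in $G|_{f'}$ along which every residual capacity $(c - f')(u_{i-1}, u_i)$ is strictly positive. Since $s \in D$ (as $D$ is an \cu) and $v \notin D$, this path must cross from $D$ to $V \setminus D$ along some edge $(u_{i-1}, u_i)$ with $u_{i-1} \in D$, $u_i \notin D$, and $(c-f')(u_{i-1}, u_i) > 0$, i.e.\ $f'(u_{i-1}, u_i) < c(u_{i-1}, u_i)$. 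The key step is now a counting/conservation argument: by the max-flow/min-cut theorem, $\val(f) = \val(D) = \sum_{a \in D, b \notin D} c(a,b)$, and by a standard flow-across-a-cut identity the net flow across $D$ equals $\val(f)$, namely $\sum_{a \in D, b \notin D} f'(a,b) - \sum_{a \in D, b \notin D} f'(b,a) = \val(f)$. Combining these, $\sum_{a\in D, b\notin D}\big(c(a,b) - f'(a,b)\big) + \sum_{a\in D, b\notin D} f'(b,a) = 0$; since $f' \ge 0$ and $f' \le c$, every summand is nonnegative, so each must be zero — in particular $f'(a,b) = c(a,b)$ for all edges leaving $D$, contradicting $f'(u_{i-1}, u_i) < c(u_{i-1}, u_i)$. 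Hence no such $v$ exists and $C_f \subseteq D$.

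Finally, to get equality with the intersection, note that one inclusion is the containment just proved (so $C_f \subseteq \bigcap_D D$), and the reverse inclusion $\bigcap_D D \subseteq C_f$ is immediate because $C_f$ is itself one of the minimum \cus $D$ over which the intersection is taken. This also yields the independence statement: if $f_1, f_2$ are two maximum \fls, then $C_{f_1}$ and $C_{f_2}$ are both equal to the same set $\bigcap_D D$, hence equal to each other.

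I expect the main obstacle to be setting up the flow-across-a-cut identity cleanly for the normalised flow $f'$ — one must check that conservation at interior vertices plus the definitions of $\val$ and $\val(D)$ combine to force all edges leaving $D$ to be saturated and all edges entering $D$ to carry zero flow, and that the telescoping over the path to $v$ is handled correctly using the normalisation of $f'$ (so that residual capacity being positive really corresponds to an unsaturated forward edge rather than a cancelling pair). The rest is routine once this saturation lemma for minimum cuts is in hand.
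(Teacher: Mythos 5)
Your proposal is correct, but it follows a genuinely different route from the paper's proof. The paper argues by contradiction on a \emph{pair} of cuts $C=C_f$ and $C'=C_{f'}$ arising from two maximum flows and runs a ``quadrangle''-style bookkeeping argument (in the spirit of the lemma on a quadrangle of \cite{DKL76} that it cites) on the net flows among the four regions $C\cap C'$, $\bar C\cap C'$, $C\cap\bar{C'}$, $\bar C\cap\bar{C'}$: conservation, saturation of cut edges and the reachability definition of $C_f$ force the cross quantities to coincide, showing $\bar C\cap C'$ is unreachable from $s$ under $f'$, a contradiction; the ``moreover'' part is then extracted from the observation that $C\cap C'$ is again a minimum \cu. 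You instead prove the stronger containment directly: any maximum flow saturates every edge leaving an \emph{arbitrary} minimum \cu $D$ and pushes no flow into $D$ (your flow-across-the-cut identity), so a residual path from $s$ can never leave $D$, giving $C_f\subseteq D$ for every minimum \cu $D$; combined with the assertion, stated just before the lemma, that $C_f$ is itself a minimum \cu, this yields $C_f=\bigcap_D D$ and hence independence of $f$. Your argument is more elementary, handles all minimum \cus in one step (the paper compares only cuts of the form $C_{f'}$ and needs the closing observation about $C\cap C'$ to reach the intersection statement), and both proofs lean equally on the unproved pre-lemma claim that $C_f$ is a minimum \cu, so you are on the same footing there. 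Your final worry is also harmless: with the residual graph $(V,c-f')$ as defined in the paper, positive residual capacity literally means an unsaturated forward edge, and if one instead works with the standard residual graph including back edges, your saturation lemma already supplies the needed ``no incoming flow'' half of the contradiction.
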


\begin{proof}[Proof of Lemma~\ref{lem:1}]
    Suppose not.  There are two distinct minimum \cus $C \de C_{f}$ and
    $C' \de C_{f'}$ with corresponding normalised \fls $f$ and $f'$.  Since $C$ and
    $C'$ are different there exists, without loss of generality, $v \in C' \bs
    C$.  Consider the flows through $\bar{C} \cap C'$.  We use
    $a,a',b,b',c,c'$ to denote the net flows into and out of this
    set. See Figure~\ref{fig:1} for definitions.  
    \begin{figure}
      \centering\includegraphics[scale=.8]{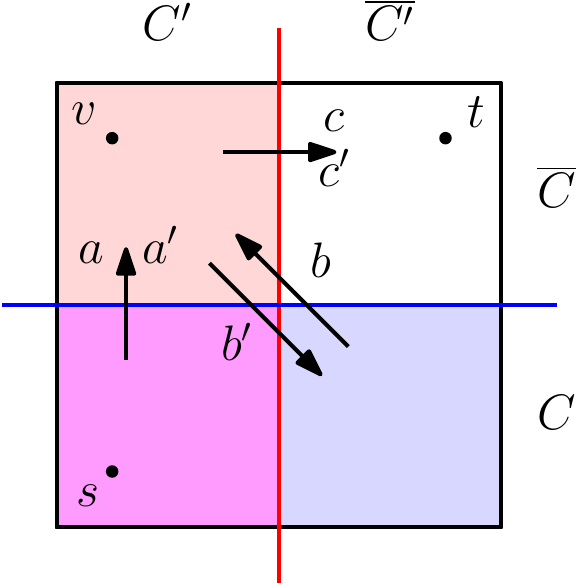}
      \caption{Diagram for the proof of Lemma~\ref{lem:1}.  Here the variables
        indicate the net flow between two sets under flows $f$ and $f'$.}
      \label{fig:1}
     \end{figure}
    By definition of $C$ and $C'$ there is no flow in $f$ from $\bar{C}$
    to $C$ nor is there flow in $f'$ from $\bar{C'}$ to $C'$ as otherwise
    vertices in the complementary cuts would be reachable from $s$.

    The flow conservation constraints require the flow into $\bar{C} \cap C'$ be
    matched by the outflow in both $f$ and $f'$.  This implies that $a + b = c$
    and $a' = b' + c'$.  In addition $a \ge a'$ and $c' \ge c$, because these
    edges must be at full capacity in $f$ and $f'$ respectively.  Combining
    these equalities and inequalities produces $a + b \le c'$ and $b' + c' \le
    a$.  Adding these two constraints together gives $a+b +b' + c' \le a + c'$.
    Since all values are non-negative we have $b = b' = 0$.  This implies $a =
    c$ and $a' = c'$.  Then, reusing $a \ge a'$ and $c' \ge c$ we conclude $a =
    c = a' = c'$.  This means that in $f'$ the edges going from $C \cap C'$ to
    $\bar{C} \cap C'$ are at full capacity, and thus no vertex in $\bar{C} \cap
    C'$ is reachable from $s$ in flow $f'$.  This is a contradiction.
  
    Since the flow from between $C \cap C'$ and $\bar{C} \cap C'$ is the same in
    both $f$ and $f'$, flow $f'$ witnesses that $C \cap C'$ is a minimum \cu of
    $G$.  This implies the ``moreover'' part of the statement and completes the
    proof.
\end{proof}

Note that this proof is similar to the ``lemma on a quadrangle'' from \cite{DKL76},
but that proof does not immediately go through because $\bar{C} \cap C'$ may
not be a \cu.

Lemma~\ref{lem:1} implies that the \FPC formula $\xi$ of Theorem~\ref{thm:cut}
defines a \emph{unique} \cu of the graph $G$ and for this reason we call it
the \emph{canonical} minimum \cu of $G$: $\MC{G,s,t}$.

\section{Application: Minimum Odd Cut}\label{sec:minoddcut}

The \emph{minimum odd cut} problem is closely related to the minimum \cu
problem.  Here the goal is to define a minimum odd cut of a graph $G$.  That is,
a cut of odd size whose value is minimum among all odd size cuts of $G$.  A
capacitated graph $G = (V,c)$ is \emph{symmetric} if for all $u,v \in V$,
$c(u,v) = c(v,u)$.  Observe that in symmetric graphs the set $C$ is a
minimum \cu iff its complement $\bar{C} \de V \bs C$ is a minimum $(t,s)$-cut.
In this section we prove that in each symmetric graph $G$ there is at least one pair of
vertices $s,t$ such that the canonical minimum \cu $\MC{G,s,t}$ is a minimum odd
cut of $G$.  The results and techniques discussed in this section are entirely
graph theoretic.

%This will then allow us to define the required separation oracle within \FPC.

Before continuing we must define several special types of cuts. We extend a
capacitated graph $G = (V,c)$ to a \emph{marked} capacitated
graph $G' = (V,c,M)$ with a marking $M \sse V$.  We call a vertex $v \in
V$ \emph{marked} if $v \in M$.  A cut $C$ of a marked graph $G$ is said to be
a \emph{marked cut}, if both $C$ and $\bar C$ contain a marked vertex.  A cut
$C$ of a graph $G = (V,c)$ with $|V|$ even is said to be an \emph{odd} cut if
$|C|$ is odd.  A marked cut $C$ of a marked graph $G = (V,c,M)$ with $|M|$ even
is said to be an \emph{odd} marked cut if $|C \cap M|$ is odd (note that this
corresponds to the simpler notion when $M = V$).  For any set $\mathcal{C}$ of cuts
we define the \emph{basic} cuts in $\mathcal{C}$ to be
$\condset{C \in \mathcal{C}}{\forall C' \in \mathcal{C}, C = C' \text{ or }
C \not\spe C'}$.  Note that if $\mathcal{C}$ is non-empty it must contain at
least one basic cut.  When $\mathcal{C}$ is the set of minimum \cus, the formula
$\xi$ of Theorem~\ref{thm:cut} defines the unique basic cut in $\mathcal{C}$.
We frequently describe sets of cuts by a sequence of the above adjectives and
determine meaning by first evaluating the adjective which appear closest to
the word ``cut''.  The most complex cuts we consider are ``basic minimum
odd marked cuts''.

Section~\ref{subsec:inter} develops several technical properties of cuts of
marked symmetric graphs.  Using these properties Section~\ref{subsec:existscut}
shows that there is a canonical minimum \cu of a symmetric graph $G$ which is
also a minimum odd cut of $G$.

\subsection{Intersections of Minimum Cuts}
\label{subsec:inter}

We now prove two technical properties involving the intersections of marked
cuts.  The first says that basic minimum marked cuts do not have complicated
intersections with basic minimum \cus.  The proof is similar in spirit to the
proof of Lemma~\ref{lem:1}

\begin{lem}
  \label{lem:3} Let $G = (V,c,M)$ be a marked symmetric graph.  Let
  $s, t \in M$ be distinct vertices and let $C$ be a basic minimum \cu of $G$.
  For every basic minimum marked cut $C'$ of $G$ one of the following holds: (i)
  $C \spe C'$, (ii) $C \cap C' = \es$ or (iii) $\set{s,t} \cap C' \neq \es$.
\end{lem}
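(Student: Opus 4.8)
The plan is to argue by contradiction, following the pattern of the proof of Lemma~\ref{lem:1} but invoking the submodularity of the cut function of a symmetric graph in place of an explicit flow computation. Recall that for symmetric $G$ the quantity $\val(\cdot)$ satisfies, for all $A,B \sse V$, both $\val(A)+\val(B) \ge \val(A\cap B)+\val(A\cup B)$ and $\val(A)+\val(B) \ge \val(A\setminus B)+\val(B\setminus A)$, the latter because $\val(\bar B)=\val(B)$. Assume that (iii) fails, i.e.\ $s \notin C'$ and $t \notin C'$, and suppose for contradiction that neither (i) nor (ii) holds, so that $C'\setminus C \neq \es$ and $C\cap C' \neq \es$. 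The aim is to extract from $C'$ a minimum marked cut that is \emph{properly} contained in $C'$, contradicting the hypothesis that $C'$ is basic.

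First I would record two facts about $(s,t)$-cuts. Since $s\in C$, $s\notin C'$ and $t \notin C\cup C'$, both $C\cup C'$ and $C\setminus C'$ contain $s$ but not $t$, hence are \cus; as $C$ is a minimum \cu, this gives $\val(C\cup C') \ge \val(C)$ and $\val(C\setminus C') \ge \val(C)$. Next, since $C'$ is a marked cut it contains some marked vertex $m$, while $\bar{C'}$ contains the marked vertex $s$; note also that $\overline{C\cap C'} \spe \bar{C'} \ni s$ and $\overline{C'\setminus C} = \bar{C'}\cup C \ni s$. Because $C' = (C\cap C')\disjoint(C'\setminus C)$, the vertex $m$ lies in exactly one of these two parts, each of which is nonempty by assumption and has a marked vertex in its complement; so at least one of $C\cap C'$, $C'\setminus C$ is itself a marked cut.

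I would then split into the corresponding two cases. If $m\in C\cap C'$, then $C\cap C'$ is a marked cut, so $\val(C\cap C') \ge \val(C')$ because $C'$ is a minimum marked cut; feeding this and $\val(C\cup C') \ge \val(C)$ into the submodularity inequality for $C, C'$ forces equality throughout, whence $\val(C\cap C') = \val(C')$, so $C\cap C'$ is a minimum marked cut, and $C\cap C' \subsetneq C'$ since $C'\setminus C\neq\es$. If instead $m \in C'\setminus C$, then $C'\setminus C$ is a marked cut, so $\val(C'\setminus C) \ge \val(C')$; feeding this and $\val(C\setminus C') \ge \val(C)$ into the set-difference inequality for $C, C'$ again forces equality, so $C'\setminus C$ is a minimum marked cut with $C'\setminus C \subsetneq C'$ since $C\cap C'\neq\es$. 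Either way $C'$ properly contains a minimum marked cut, contradicting that $C'$ is basic; hence (i) or (ii) holds.

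The step I expect to be the crux is the bookkeeping in the second paragraph: the argument only works if the subset of $C'$ we single out is a genuine \emph{marked} cut, so that minimality of $C'$ among marked cuts can be applied to it, and this is exactly what dictates the case split on where $m$ sits relative to $C$. The remaining ingredients — identifying $C\cup C'$ and $C\setminus C'$ as \cus, the two submodularity inequalities (the second of which is where symmetry of $G$ is used), and the observation that a basic minimum cut cannot properly contain another minimum cut from the same family — are all routine. I expect that only $C$ being a minimum \cu, rather than its being basic, will actually be needed.
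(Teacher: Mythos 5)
Your proof is correct, but it takes a genuinely different route from the paper's. You uncross $C$ and $C'$ using the standard submodularity and posimodularity inequalities of the symmetric cut function ($\val(A)+\val(B)\ge\val(A\cap B)+\val(A\cup B)$ and $\val(A)+\val(B)\ge\val(A\setminus B)+\val(B\setminus A)$, the latter via $\val(\bar B)=\val(B)$), observe that $C\cup C'$ and $C\setminus C'$ are \cus while whichever of $C\cap C'$, $C'\setminus C$ contains a marked vertex of $C'$ is a marked cut (its complement contains $s\in M$), and then force equality in the relevant inequality to exhibit a minimum marked cut properly inside $C'$, contradicting only the basicness of $C'$. The paper instead argues directly on the four regions determined by $C$ and $C'$, labelling the symmetric capacities between them (the $a,\dots,f$ of Figure~\ref{fig:2}): it first uses the \emph{basicness of $C$} to get a strict inequality from the \cu $C\cap\bar{C'}$, uses that to rule out a marked vertex in $\bar C\cap C'$, and finally shows $\bar C\cap C'=\es$ by comparing $\val(C)$ with the $(t,s)$-cut $\bar C\cap\bar{C'}$, which is where symmetry enters. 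Your version is shorter, leans on off-the-shelf uncrossing facts rather than explicit capacity bookkeeping, and — as you anticipated — needs only that $C$ is a minimum \cu, not that it is basic, so it proves a marginally stronger statement; the paper's argument is more self-contained (no appeal to sub/posimodularity) and is written to parallel the region-counting style of Lemmas~\ref{lem:1} and~\ref{lem:4}. Since the lemma is only ever applied to canonical (hence basic) minimum \cus, the weakening of the hypothesis makes no difference downstream, but it is a tidy observation.
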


\begin{proof}
   Fix any basic minimum marked cut $C'$ of $G$.  Suppose neither property
   (ii) or (iii) holds; it suffices to show that property (i) holds.  Thus our
   goal is to show that $C'' \de \bar{C} \cap C' = \es$ assuming that $C \cap
   C' \neq \es$ and $\set{s,t} \cap C' = \es$. See Figure~\ref{fig:2} for a
   diagram of the general configuration of these cuts and for the definitions of
   variables labelling the symmetric capacity crossing between the various sets.
 
   \begin{figure}
     \centering\includegraphics[scale=.8]{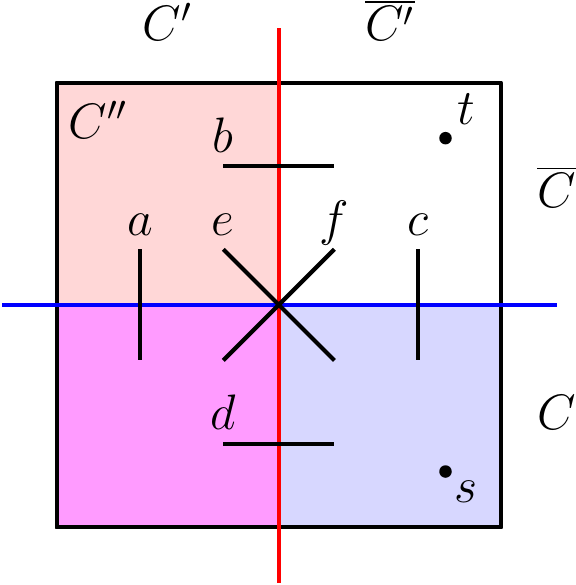}
     \caption{Diagram for the proof of Lemma~\ref{lem:3}.}
     \label{fig:2}
   \end{figure}

   Observe that $C \cap \bar{C'}$ is an \cu, however, it cannot be a minimum \cu
   because $C \supsetneq C \cap \bar{C'}$ (since $C \cap C'\neq \es$) and $C$ is a basic minimum \cu.  Thus
   $$c+d+e = \val(C \cap \bar{C'}) > \val(C) = a + c + e + f$$ and hence $d >
   a+f$.  

   Since $C'$ is a marked cut of $G$, $C'$ contains a marked vertex.  Suppose
   that $C''$ contains a marked vertex.  In this case $C''$ is marked cut of
   $G$, because it contains at least one marked vertex, but not all marked
   vertices (e.g., $s$).  Since $C'$ is a basic minimum marked cut of $G$ and
   $C'' \subsetneq C'$, $C''$ cannot be a minimum marked cut of $G$.  This
   implies that $$a+b+e = \val(C'') > \val(C') = b + d + e + f$$ and hence that
   $a > d+f$.  Combining this with the inequality $d > a + f$ derived from $C$
   being a basic minimum \cu we have $d > a + f > d + 2f$ which is a
   contradiction because all the edges have non-negative capacity.  Thus $C''$
   cannot contain a marked vertex.  This implies that $C \cap C'$ contains a
   marked vertex and is hence a marked cut of $G$.

   Suppose that $C'' \neq \es$.  Since $C'$ is a basic minimum marked cut of
   $G$, $C \cap C' \subsetneq C'$ cannot be a minimum marked cut of $G$.  Thus
   $$a+d+f = \val(C \cap C') > \val(C') = b+d+e+f$$ and hence $a > b+e$.
   Combining this with the value of $C$ and the fact that the edges are
   non-negative implies $$\val(C) = a + c + e + f > b + c + 2e + f \ge b + c + f
   = \val(\bar{C} \cap \bar{C'}).$$ Thus $\val(C)
   > \val(\bar{C} \cap \bar{C'})$.  As $\bar{C} \cap \bar{C'}$ contains $t$ but
   not $s$, it is an $(t,s)$-cut.  Moreover, $\bar{C} \cap \bar{C'}$ is an
   $(t,s)$-cut with value strictly less than that of $C$ which is a
   contradiction because $C$ is a minimum \cu and the capacities are symmetric.
   Therefore $C'' = \es$ and the proof is complete.
\end{proof}

The second lemma says that given a basic minimum odd marked cut $C$, there
exists a minimum marked cut $C'$ which does not have a complicated intersection
with $C$.  The proof is quite similar to those of
Lemmas~\ref{lem:1}~\&~\ref{lem:3}.

\begin{lem}
  \label{lem:4} Let $G = (V,c,M)$ be a marked symmetric graph with
  $|M|$ even.  Let $C$ be a basic minimum odd marked cut of $G$.  There
  exists a minimum marked cut $C'$ of $G$ such that one of the following holds:
  (i) $C \spe C'$ or (ii) $C \cap C' = \es$.
\end{lem}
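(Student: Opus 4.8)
The statement of Lemma~\ref{lem:4} is the natural ``odd'' analogue of Lemma~\ref{lem:3}, so the plan is to mimic the uncrossing argument used there. Let $C$ be a basic minimum odd marked cut of $G$. If $C$ is itself a minimum marked cut then we may simply take $C' = C$ and alternative~(i) holds trivially; so assume henceforth that $\val(C)$ is strictly larger than the value of a minimum marked cut, and pick a minimum marked cut $D$ of $G$. If $D$ already satisfies $C \spe D$ or $C \cap D = \es$ we are done, so assume $D$ crosses $C$ properly, i.e.\ all four of $C\cap D$, $C\cap\bar D$, $\bar C\cap D$, $\bar C\cap\bar D$ are nonempty. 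Because the graph is symmetric, the submodularity-style inequality
\[
\val(C\cap D) + \val(C\cup D) \;\le\; \val(C) + \val(D)
\]
holds (this is the standard ``uncrossing'' identity for symmetric cut functions, proved exactly as in the diagrams of Lemmas~\ref{lem:1} and~\ref{lem:3} by bookkeeping the six capacities crossing the four regions). I would also record the parity fact that since $|C\cap M|$ is odd and $M$ is even, exactly one of $|C\cap D\cap M|$ and $|C\cap\bar D\cap M|$ is odd; and likewise for $C\cup D$ versus $C\cap D$.

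**Case analysis on parity.** The key observation is to split on which of the two ``halves'' of $C$ carries the odd marked intersection. Since $C'$ ranges over minimum \emph{marked} cuts (not odd ones) we have a little slack. Concretely: among the candidate cuts obtained by uncrossing $C$ with $D$, I will argue one of $C\cap D$ or $C\cap\bar D$ (equivalently $C\setminus D$) is an odd marked cut, and one of $C\cup D$ or $\bar C\cap D$ works as the minimum marked cut $C'$. More carefully, I would proceed as in Lemma~\ref{lem:3}: first show that the region $\bar C\cap D$ cannot contain a marked vertex, for otherwise $\bar C\cap D \subsetneq D$ would be a marked cut cheaper than $D$ (using $\val(C)>\val(\text{min marked cut})$ and the crossing inequalities to derive a contradiction with non-negativity of capacities); this forces $C\cap D$ to be marked. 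Then, using that $C$ is a \emph{basic} odd marked cut, the cut $C\cap D$ (which is properly contained in $C$) cannot itself be a minimum odd marked cut, which pins down its parity and value; combined with the uncrossing inequality this shows $\val(C\cap D)$ or $\val(C\cup D)$ is small enough to be a minimum marked cut while the parity bookkeeping sends the odd marked part into the complementary region, so we can take $C'$ to be whichever of $C\cap D$, $C\cup D$ is both a minimum marked cut and nested with (or disjoint from) $C$.

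**Main obstacle.** The delicate point — and what I expect to be the crux — is the interaction between the parity constraint and the \emph{basic} hypothesis on $C$. In Lemma~\ref{lem:3} every candidate region was just a cut or marked cut, and ``basic'' was used once to rule out a strict subcut of the same value. Here we need to be careful that the strict subcut we produce by uncrossing has the \emph{right parity} to contradict minimality-over-odd-cuts, rather than merely being a cheaper marked cut (which is not immediately a contradiction, since $C$ is not assumed to be a minimum marked cut). So the real work is: (a) verify the uncrossing inequality precisely for symmetric capacities via the six-variable diagram; (b) track, region by region, the parity of the marked intersection so that exactly one of the four quadrants inherits the oddness; and (c) choose the quadrant to call $C'$ so that $C' \spe$-relates to or is disjoint from $C$ while the odd-marked quadrant lands \emph{inside} $C$, letting us invoke basicness of $C$ to force a contradiction unless $D$ was already nested/disjoint with $C$. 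Once (a)–(c) are lined up the contradiction is the same ``$a > d+f$ and $d > a+f$ together contradict non-negativity'' pattern already used twice in the paper, so I would cite those computations rather than redo them in full.
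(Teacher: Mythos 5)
Your overall strategy (uncross $C$ against a minimum marked cut $D$, do the six-capacity bookkeeping, track parities, and play minimality of $D$ off against basicness of $C$) is indeed the shape of the paper's argument, but the concrete step your sketch rests on is wrong as stated, and the part you defer to ``the real work'' is exactly the part the paper has to do. You claim to ``first show that the region $\bar C\cap D$ cannot contain a marked vertex, for otherwise $\bar C\cap D \ssn D$ would be a marked cut cheaper than $D$.'' It cannot be cheaper: minimality of $D$ gives $\val(\bar C\cap D)\ge\val(D)$, which is the \emph{opposite} inequality, and to turn that into a contradiction you must pair it with $\val(C\cap\bar D)>\val(C)$, which comes from basicness of $C$ and is only available when $C\cap\bar D$ is an \emph{odd} marked cut, i.e.\ when $|C\cap\bar D\cap M|$ is odd. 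In the complementary parity case (when $C\cap D$ carries the odd marked intersection) the claim is simply false: there $\bar C\cap D$ may well be marked --- in the paper's corresponding subcase it is shown to be an odd marked cut --- and no contradiction arises. What has to happen in that case is different: one examines $\bar C\cap\bar D$ instead (if it is marked, minimality of $D$ against $\bar C\cap\bar D$ plus basicness of $C$ against $C\cap D$ gives the contradiction; if it is unmarked, one proves by a chain of inequalities forced into equalities that $C\cap\bar D$ is itself a minimum marked cut contained in $C$). That last derivation --- showing some quadrant actually \emph{achieves} the minimum marked-cut value, not merely bounding it --- is entirely absent from your proposal; ``small enough to be a minimum marked cut'' is the step that needs the equalities $b=c$, $e=f=0$ in the paper's notation.

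Two smaller but genuine problems: your list of candidates for $C'$, namely $C\cap D$ and $C\cup D$, cannot literally work in the crossing case, since $C\cup D\spn C$ is neither contained in $C$ nor disjoint from it; you must pass to its complement $\bar C\cap\bar D$, which is where symmetry of $c$ is used (the paper likewise freely replaces a minimum marked cut by its complement). And the parity split you need is not on ``$C\cap D$ versus $C\cup D$'' but on which of $C\cap D$, $C\cap\bar D$ is odd-marked; the second parity case reduces to the first by swapping $D$ with $\bar D$, which is exactly the paper's ``Case 2.'' With these repairs your plan converges to the paper's proof, but as written the pivotal claim is unjustified (and false in one of the two parity cases), and the closing argument identifying a minimum marked cut inside $C$ is missing.
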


\begin{proof}

  Fix any minimum marked cut $C'$ of $G$.  The complementary cut $\bar{C'}$ of
  $C'$ is also a minimum marked cut of $G$ because $G$ is symmetric.  If $C \spe C'$
  or $C \cap C' = \es$ the condition is immediately satisfied.  If $C' \spe C$,
  then $\bar{C'}$ has no intersection with $C$ and the minimum marked cut
  $\bar{C'}$ satisfies (ii).  If $\bar{C} \cap \bar{C'} = \es$, then
  $C \spe \bar{C'}$ and hence minimum marked cut $\bar{C'}$ satisfies (i).  In
  the case that none of these things happen we observe that $C \cap C'$,
  $C \cap \bar{C'}$, $\bar{C} \cap C'$ and $\bar{C} \cap \bar{C'}$ are all
  non-empty.  Because $C$ is an odd marked cut there are two disjoint (but
  symmetric) cases: 

  \case{1.}{$C \cap C'$ is an odd marked cut.}

     See Figure~\ref{fig:3} for a diagram of this case and for the definitions
     of variables labelling the edges crossing between the various sets.
     \begin{figure}
       \centering\includegraphics[scale=.8]{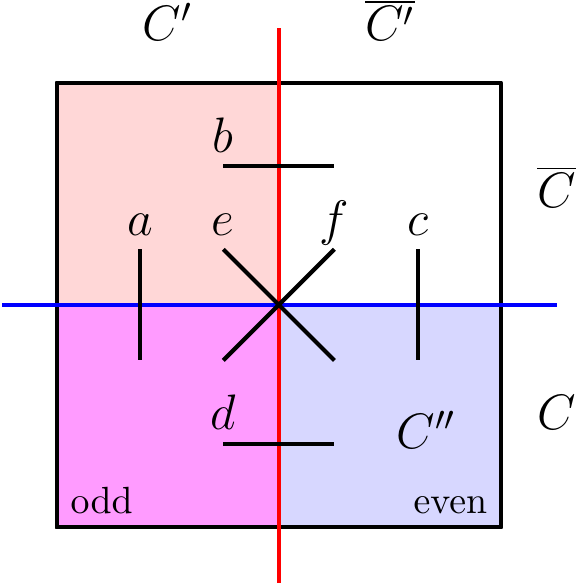}
       \caption{Diagram for the proof of Lemma~\ref{lem:4}.}
       \label{fig:3}
     \end{figure}
    The are two further subcases:

  \case{1.a.}{$\bar{C} \cap \bar{C'}$ contains a marked vertex.}  

      Since $C'$ is a marked cut of $G$, $C'$ contains a marked vertex.  Since
      $\bar{C} \cap \bar{C'}$ also contains an marked vertex,
      $\bar{C} \cap \bar{C'}$ is a marked cut of $G$.  Furthermore, $C'$ is a
      minimum marked cut of $G$ and hence
      $$b+d+e+f = \val(C') \le \val(\bar{C} \cap \bar{C'}) = b+c+f.$$ We resolve
      that $c \ge d+e.$ Similarly, since $C$ is basic, the odd cut $C \cap
      C' \subsetneq C$ of $G$ must have a larger value than $C$, and thus $d >
      c+e.$ Combining the two inequalities we conclude $d > c+e \ge d+2e.$ This
      is a contradiction because values are non-negative.

  \case{1.b.}{$\bar{C} \cap \bar{C'}$ contains no marked vertices.}  

      Since $\bar{C'}$ is a marked cut, it contains a marked vertex.  This
      implies that $C'' \de C \cap \bar{C'}$ contains a marked vertex, and hence
      that $C''$ is a marked cut of $G$.  Thus to satisfy (i) it suffices to
      show that $C''$ is a \emph{minimum} marked cut of $G$.

      %% \begin{claim}
      %%   $C''$ is a minimum marked cut of $G$.
      %% \end{claim}
      
      %% \begin{proof}
      
      Because $C$ is an odd marked cut and $C \cap C'$ is an odd marked cut,
      $C''$ contains an even number of marked vertices.  Since
      $\bar{C} \cap \bar{C'}$ contains no marked vertices, $\bar{C'}$ contains
      an even number of marked vertices.  This implies that $\bar{C} \cap C'$ is
      an odd marked cut of $G$.  Since $C$ is a minimum odd marked cut of $G$
      and $\bar{C} \cap C'$ is an odd marked cut we have that $$a+c+e+f
      = \val(C) \le \val(\bar{C} \cap C') = a + b + e.$$ This implies that
      $b \ge c + f.$ Similarly, since $C'$ is a minimum marked cut of $G$ and
      $C''$ is a marked cut of $G$ we have that $c \ge b + f.$ Combining these
      two inequalities $b \ge c + f \ge b + e + f.$ As all values are
      non-negative we must conclude that $b = c$ and $e = f = 0$.  This means
      that the value of $C''$ is $$\val(C'') = c+d+f = b+d+e+f = \val(C')$$ and
      we conclude that $C''$ is a minimum marked cut of $G$.
  
      %% \end{proof}

  \case{2.}{$C \cap \bar{C'}$ is an odd marked cut.}  

    Repeat Case 1 with $C'$ and $\bar{C'}$ swapped.  
\end{proof}

%% (Note: As a corollary we can strengthen the previous lemma to show it holds for
%% a canonical min cut $C'$ of the odd vertices $G$ since canonical min cuts are
%% subsets of all other min cuts.)

\subsection{Some Canonical Min $(s,t)$-Cuts are Min Odd Cuts}
\label{subsec:existscut}

In this subsection we show that for any symmetric graph $G = (V,c)$
there exists a pair of vertices $s,t \in V$ such that the canonical minimum \cu
$\MC{G,s,t}$ is a minimum odd cut of $G$.  

The intuition for the proof is as follows.  Let $G$ be a symmetric graph with
minimum odd cut $C$.  Mark all vertices in $G$.  We use Lemma~\ref{lem:4} to
locate a basic minimum marked cut $D$ of $G$ which $C$ does not partition 
(that is, $D$ is either contained in $C$ or disjoint from $C$).  If $D$ is an
odd cut, and because $D$ is basic, a canonical minimum \cu separating a vertex
$s \in D$ from a vertex $t \not\in D$ is also a minimum odd cut of $G$ and we
are done.  Otherwise $|D|$ is even and we form a new graph $G'$ by collapsing
$D$ into a new super-vertex $z$, and then setting the effected capacities so
that value of cuts which do not partition $D$ are unchanged.  Since $C$ does not
partition $D$ and $|D|$ is even, the collapsed version $C'$ of $C$ is a minimum
odd marked cut of $G'$.  We repeat this collapsing procedure maintaining a graph
$G'$ and minimum odd marked cut $C'$ until we locate a minimum marked cut $D$ of
$G'$ that is also minimum odd marked cut.  This provides marked vertices $s, t$
such that the canonical minimum \cu of $G'$ is a minimum odd cut of $G'$.  The
proof concludes using Lemma~\ref{lem:3} to translate this fact back to the
original graph $G$ and in doing so argues that the canonical minimum \cu of $G$
is a minimum odd cut.

% \marginnote{M: It this clear?  Perhaps a diagram would help.}

We now formalise the notion of collapsing a graph.  We begin by establishing
notation for substituting sets into sets.  Let $C, D \sse V$ and $z \not\in V$
such that either $C \spe D$ or $C \cap D = \es$. Define $C(z/D)$ to be a subset
of $V' \de (V \bs D) \cup \set{z}$ $$C(z/D) \de \begin{cases} (C \bs
D) \cup \set{z}, & C \spe D, \\ C, & C \cap D = \es, \end{cases}$$ and for
$C' \subseteq V'$ define $C'(D/z)$ to be a subset of
$V$ $$C'(D/z) \de \begin{cases} (C' \bs \set{z}) \cup D, & z \in C', \\ C', &
z \not\in C'.  \end{cases}$$ Observe that $(C(z/D))(D/z) = C$.

\algxio{\Coll}{fig:Collapse}{A subroutine to collapse a set of vertices in a
graph to a new single vertex.}{G,D,z}{A marked capacitated graph $G =
(V,c,M)$, a set $D \sse V$ and $z \not\in V$.\\}{The graph obtained from $G$ by
collapsing of the vertices in $D$ to $z$.}{
  \State $V' \gets V(z/D)$.
  \State $
         c'(u,w) \gets
         \begin{cases}
         c(u,w), &u,w \in V\bs D, \\
         \sum_{x \in D} c(u,x), &u \in V\bs D, w = z, \\
         \sum_{x \in D} c(x,w), &w \in V\bs D, u = z.
         \end{cases}
         $
  \State $M' \gets M \bs D$.
  \State \Return $(V',c',M')$.
}

The subroutine $\Coll(G,D,z)$ in Fig.~\ref{fig:Collapse} describes a
method of collapsing the vertex set $D \sse V$ in the marked graph $G = (V,c,M)$
to a single new super-vertex $z$.  This subroutine is designed to preserve a
basic minimum odd marked cut $C$ with respect to a well-chosen marked cut $D$.

\begin{lem}
  \label{lem:coll}
  Let $G = (V,c,M)$ be a marked symmetric graph with $|M|$ even. Let
  $z \not\in V$ and $C, D \sse V$ be marked cuts of $G$ such that $C \spe D$ or
  $C \cap D = \es$.    Define $C' \de C(z/D)$ and $G' \de
  (V',c',M') \de \Coll(G,D,z)$.
  \begin{enumerate}
  \item The value of $C$ in $G$ is identical to the value of $C'$ in $G'$.
  \item If $C$ is a basic minimum odd marked cut of $G$ and $|D \cap M|$ is
  even, then $C'$ is a basic minimum odd marked cut of $G'$, and $|M'|$ is even,
  non-zero and $M' \ssn M$.  
  \end{enumerate}
\end{lem}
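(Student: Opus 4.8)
The plan is to reduce both parts to a single structural correspondence between the cuts of $G$ that do not partition $D$ (those that either contain $D$ or are disjoint from it) and the cuts of $G'$, realised by the mutually inverse maps $E \mapsto E(z/D)$ and $E' \mapsto E'(D/z)$. The base ingredient, which already gives Part~1, is the slightly more general claim: for every $E \sse V$ with $D \sse E$ or $D \cap E = \es$, the value of $E$ in $G$ equals the value of $E(z/D)$ in $G'$. I would prove this by case analysis. If $D \sse E$ then $\overline{E(z/D)} = \bar E$, and if $D \cap E = \es$ then $\overline{E(z/D)} = (\bar E \bs D) \cup \set{z}$; in either case the only edges whose contribution to the cut value can change are those incident to $z$, and by construction the collapsed capacity $c'(u,z) = \sum_{x \in D} c(u,x)$ (with $c'(z,u)$ defined symmetrically) is exactly the total capacity of the bundle of $D$-edges that $E$ already crossed in $G$. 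Instantiating with $E = C$ proves Part~1.

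For Part~2 I would first settle the claims about $M'$. Since $M' = M \bs D$ and $D$ is a marked cut, $D$ meets $M$ and $\bar D$ meets $M$, so $\es \neq M' \ssn M$; and $|M'| = |M| - |D \cap M|$ is even because both $|M|$ and $|D \cap M|$ are. It is convenient to note that once $|M|$ is even, a set $E \sse V$ is an odd marked cut of $G$ exactly when $|E \cap M|$ is odd (a set with an odd, hence positive and strictly smaller than $|M|$, number of marked vertices automatically has marked vertices on both sides), and likewise for $G'$ with $M'$ now that $|M'|$ is even.

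The heart of Part~2 is then a short parity computation: for $E$ with $D \sse E$ or $D \cap E = \es$ one has $|E(z/D) \cap M'| = |E \cap M|$ when $D \cap E = \es$ and $|E(z/D) \cap M'| = |E \cap M| - |D \cap M|$ when $D \sse E$, so, because $|D \cap M|$ is even, $|E(z/D) \cap M'| \equiv |E \cap M| \pmod 2$ in both cases; running the same computation through $E' \mapsto E'(D/z)$ gives the converse parity relation. Combined with the identities $(E(z/D))(D/z) = E$ and $(E'(D/z))(z/D) = E'$, this shows that $E \mapsto E(z/D)$ restricts to a value-preserving bijection between the odd marked cuts of $G$ that do not partition $D$ and the odd marked cuts of $G'$, and it is routine (if a little tedious) to check that this bijection preserves and reflects strict inclusion, using $D \neq \es$ and the fact that every set of the form $E(z/D)$ is disjoint from $D$. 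The stated conclusions now follow formally. Applying the correspondence to $C$, which does not partition $D$ by hypothesis, shows $C' = C(z/D)$ is an odd marked cut of $G'$ whose value in $G'$ equals that of $C$ in $G$. For minimality, every odd marked cut of $G'$ is matched with an odd marked cut of $G$ of equal value, so the minimum value attained by an odd marked cut is the same in $G$ and $G'$; as $C$ attains it in $G$, $C'$ attains it in $G'$. For basicness, a minimum odd marked cut $E' \ssn C'$ of $G'$ would be matched with a minimum odd marked cut $E'(D/z)$ of $G$ (same value, hence minimum) strictly contained in $(C(z/D))(D/z) = C$, contradicting that $C$ is basic.

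The one place that needs genuine care is the parity computation together with the inclusion-reflection part of the correspondence: one has to pin down that collapsing $D$ shifts the count of marked vertices inside a non-partitioning cut by exactly $|D \cap M|$ when $D$ lies inside the cut and not at all when $D$ lies outside it, so that the hypothesis ``$|D \cap M|$ even'' is precisely what is needed to transport ``oddness'' across the collapse in both directions. Value preservation, the bijection itself, and the deduction of minimality and basicness from it are then essentially bookkeeping.
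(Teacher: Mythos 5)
Your proposal is correct and follows essentially the same route as the paper: value preservation under \Coll by regrouping the capacity of $D$-edges into the collapsed vertex $z$, the parity argument using $|D \cap M|$ even and $z \notin M'$ to transport ``odd marked'' across $E \mapsto E(z/D)$, and pulling cuts of $G'$ back via $(D/z)$ to transfer minimality and basicness to $C'$. Your packaging of this as an explicit value- and parity-preserving bijection between non-partitioning cuts of $G$ and cuts of $G'$ is just a slightly more systematic presentation of what the paper does with its $C''(D/z)$ contradiction arguments.
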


\begin{proof}
  Suppose $C \spe D$.  The definition of \Coll implies that \alignedeq{\val(C)
  = \sum_{u \in C, v \in \bar{C}} c(u,v) &= \sum_{u \in C \bs D, v \in \bar{C}}
  c(u,v) + \sum_{u \in D, v \in \bar{C}} c(u,v) \\ &= \sum_{u \in C \bs D,
  v \in \bar{C}} c(u,v) + \sum_{v \in \bar{C}} c(z,v) \\ &= \sum_{u \in (C \bs
  D) \cup \set{z}, v \in \bar{C}} c(u,v) = \val(C').}  The case of $C \cap D
  = \es$ is analogous.  We conclude property 1 holds.

  Assume the hypothesis of property 2.  Observe that $D$ contains an even number
  of the marked vertices $M$ but not all of them because $D$ is a marked cut of
  $G$.  The subroutine \Coll sets $M' = M \bs D$.  Therefore $|M'|$ is even
  because $|D \cap M|$ and $|M|$ are even, and thus $M'$ meets the
  required conditions.

  We also determine that $C'$ is an odd marked cut of $G'$, because $|D \cap M|$
  is even and $C$ is an odd marked cut which either contains $D$ or is disjoint
  from $D$. 

  Suppose $C'$ is not a \emph{minimum} odd marked cut of $G'$.  Then there
  exists an odd marked cut $C''$ of $G'$ with smaller value than $C'$.  The cut
  $C''(D/z)$ is a marked cut of $G$ which does not partition $D$.  By property 1
  $\val(C''(D/z)) = \val(C'')$ and $\val(C) = \val(C(z/D)) = \val(C')$.  Hence
  $$\val(C''(D/z)) = \val(C'') < \val(C') = \val(C(z/D)) = \val(C).$$ However,
  $C''(D/z)$ is an odd marked cut because $C'$ is an odd marked cut, $z$ is not
  marked and $D$ contains an even number of marked vertices.  This means that
  $C$ is not a minimum odd marked cut of $G$ which contradicts the hypothesis.
  Therefore $C'$ is a \emph{minimum} odd marked cut of $G'$.

  Similarly, suppose $C'$ is not a \emph{basic} minimum odd marked cut of $G'$.
  Then there exists a minimum odd marked cut $C''$ of $G'$ with $C'' \ssn C'$.
  By property 1, it follows that $C''(D/z)$ is a minimum odd marked cut of $G$
  with $C''(D/z) \ssn C$, this contradicts the basicness of $C$.  Therefore $C'$
  is a \emph{basic} minimum odd marked cut of $G'$ and the proof is complete.

\end{proof}

With the key properties of \Coll established we are ready to prove the main 
result of this section.

\begin{thm}
  \label{thm:existscut}
  Let $G = (V,c,M)$ be a marked symmetric graph with even $|M| > 0$.  There exist
  $s,t \in M$ such that the canonical minimum \cu of $G$ is a minimum odd marked
  cut of
  $G$. 
\end{thm}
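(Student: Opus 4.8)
The plan is to induct on $|M|$, using the collapsing machinery of Lemma~\ref{lem:coll} together with the two intersection lemmas (Lemmas~\ref{lem:3} and~\ref{lem:4}). Fix a marked symmetric graph $G = (V,c,M)$ with even $|M| > 0$, and let $C$ be a basic minimum odd marked cut of $G$ (one exists since the set of odd marked cuts is non-empty—any single marked vertex together with enough other vertices gives one, using $|M|$ even—and any non-empty set of cuts contains a basic one). Apply Lemma~\ref{lem:4} to obtain a minimum marked cut $C'$ of $G$ with either $C \spe C'$ or $C \cap C' = \es$; replacing $C'$ by a basic minimum marked cut contained in it (which still does not partition $C$, since it is contained in $C'$), we may assume $C'$ is basic. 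Call this cut $D \de C'$.

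First I would dispose of the base case where $D$ is itself an odd marked cut. Since $D$ is a basic minimum marked cut and an odd marked cut, it is in fact a minimum odd marked cut: any odd marked cut has value at least that of the minimum marked cut, which is $\val(D)$. Now pick $s \in D \cap M$ and $t \in \bar{D} \cap M$ (both non-empty since $D$ is a marked cut). By Lemma~\ref{lem:3} applied with this $s,t$ and with $C$ there taken to be the \emph{basic minimum \cu} separating $s$ from $t$—which is exactly the canonical cut $\MC{G,s,t}$ defined by Theorem~\ref{thm:cut} and Lemma~\ref{lem:1}—every basic minimum marked cut $D$ satisfies one of: $\MC{G,s,t} \spe D$, $\MC{G,s,t} \cap D = \es$, or $\set{s,t} \cap D \neq \es$. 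Since $s \in D$ and $t \notin D$, and $\MC{G,s,t}$ contains $s$ but not $t$, one checks that $\MC{G,s,t} \spe D$ must hold (the disjointness case fails because $s \in \MC{G,s,t} \cap D$, and the third case gives no information here); combined with the fact that $\MC{G,s,t}$ is a minimum \cu while $D$ has minimum marked-cut value, plus symmetry of $G$, one concludes $\MC{G,s,t}$ has value equal to $\val(D)$ and that $\MC{G,s,t}$ is an odd marked cut. This step is where I expect the most delicate bookkeeping: carefully arguing that the containment $\MC{G,s,t} \spe D$ forces the value and parity to coincide, rather than merely $\MC{G,s,t}$ being \emph{some} minimum \cu.

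For the inductive step, suppose $|D \cap M|$ is even. Then I form $G' \de \Coll(G, D, z)$ with $z \notin V$ a fresh vertex. By Lemma~\ref{lem:coll}, $C' \de C(z/D)$ is a basic minimum odd marked cut of $G'$, and $M' = M \bs D$ is even, non-zero, and strictly smaller than $M$. Note $G'$ is still symmetric (the collapse preserves symmetry of $c$). By the induction hypothesis there exist $s,t \in M'$ such that the canonical minimum \cu $\MC{G',s,t}$ is a minimum odd marked cut of $G'$. The final task is to pull this back: since $s,t \in M' \sse V \bs D$, they are vertices of the original $G$ as well, and I claim $\MC{G,s,t}$ is a minimum odd marked cut of $G$. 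Here the key point is that collapsing preserves the value of every cut that does not partition $D$ (Lemma~\ref{lem:coll}, part~1), and that—this is exactly what Lemma~\ref{lem:3} is for—the canonical minimum \cu $\MC{G,s,t}$ in the uncollapsed graph does not partition $D$, so it corresponds under $C \mapsto C(z/D)$ to a minimum \cu of $G'$ of the same value, which must be $\MC{G',s,t}$ by uniqueness of the canonical cut (Lemma~\ref{lem:1}). Transporting parity and minimality back across the bijection $C' \mapsto C'(D/z)$ then finishes the argument. The subtle obstacle throughout is keeping straight which cuts are guaranteed not to partition $D$ and invoking Lemma~\ref{lem:3} precisely at those moments; everything else is careful transport along the collapse map.
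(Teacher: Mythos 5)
Your overall strategy---use Lemma~\ref{lem:4} to find a basic minimum marked cut $D$ not split by the basic minimum odd marked cut $C$, collapse $D$ via \Coll and Lemma~\ref{lem:coll} when $|D\cap M|$ is even, and transport the canonical cut back through the collapses with Lemma~\ref{lem:3}---is the same as the paper's (which phrases the recursion as the algorithm \WMOC rather than induction on $|M|$). The genuine gap is in your base case, where $D$ is itself an odd marked cut. You pick $s \in D\cap M$, $t \in M\bs D$ and invoke Lemma~\ref{lem:3} to conclude $\MC{G,s,t} \spe D$. This does not follow: since $s \in D$, alternative (iii) of Lemma~\ref{lem:3}, $\set{s,t}\cap D \neq \es$, holds trivially, so the lemma's disjunction is already satisfied and yields no information; ruling out the disjointness case does not let you infer the containment case. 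Moreover, the containment you actually need is the opposite one. The working argument (and the paper's) goes through Lemma~\ref{lem:1}: since $s,t \in M$, the cut $K \de \MC{G,s,t}$ is a marked cut, so $\val(K) \ge \val(D)$; since $D$ is an \cu, $\val(D) \ge \val(K)$; hence $D$ is a minimum \cu and $K$ is a minimum marked cut. By Lemma~\ref{lem:1}, $K$ is contained in \emph{every} minimum \cu, so $K \sse D$, and basicness of $D$ among minimum marked cuts forces $K = D$. Only this equality gives the parity claim that $K$ is an odd marked cut; the value identity alone (which you do obtain correctly) is not enough, and $K \spe D$ by itself would not be either.

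There is also a smaller gap in your inductive step. From the (correct) use of Lemma~\ref{lem:3}---here $s,t \notin D$, so case (iii) genuinely fails and $\MC{G,s,t}$ does not partition $D$---you conclude that its collapse ``must be $\MC{G',s,t}$ by uniqueness of the canonical cut''. But minimum \cus are not unique; only the smallest one is canonical, so knowing the collapsed cut is \emph{a} minimum \cu of $G'$ of the right value does not identify it with $\MC{G',s,t}$. You need the two-sided argument the paper uses: by Lemma~\ref{lem:coll}(1) both the collapse of $\MC{G,s,t}$ and the un-collapse of $\MC{G',s,t}$ are minimum \cus of the respective graphs, so Lemma~\ref{lem:1} applied in $G'$ gives $\MC{G',s,t} \sse \MC{G,s,t}(z/D)$ and applied in $G$ gives $\MC{G,s,t} \sse \MC{G',s,t}(D/z)$, whence the two canonical cuts correspond exactly and parity and value transfer back. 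With these two repairs your induction is essentially the paper's proof.
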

 
\begin{proof}
  Fix a basic minimum odd marked cut $C$ of $G$.  Figure~\ref{fig:witnessMC} describes
  the algorithm $\WMOC$ that computes vertices $s$ and $t$ witnessing the claim
  of the theorem from $C$ by iteratively collapsing $G$.  To prove the theorem
  it suffices to argue the algorithm halts and produces $(s,t) \in V^2$ such
  that the canonical minimum \cu $\MC{G,s,t}$ is a minimum odd marked cut.

\algxio{\WMOC}{fig:witnessMC}{An algorithm producing a witness for a
  minimum odd marked cut.}{G,C}{A symmetric graph $G = (V,c,M)$ and a minimum
    odd marked cut $C$ of $G$.\\}{$(s,t) \in M^2$ such that $\MC{G,s,t}$ is a minimum
    odd marked cut.}{ 
    \State $i \gets 0$.
    \State $G^0 \de (V^0,c^0,M^0) \gets (V,c,M)$.  
    \State $C^0 \gets C$.
    \While{true} \label{wmoc:guard}
      \State Let $D^i$ be a basic minimum marked cut of $G^i$ such that
    $C^i \spe D^i$ or $C^i \cap D^i = \es$. 
      \If{$D^i$ is an odd marked cut of $G^i$} \label{wmoc:test}
      \State \Return $(s,t)$ with $s \in D^i \cap M^i$ and $t \in M^i \bs
        D^i$. \label{wmoc:output} \EndIf
      \State $G^{i+1} \gets$ $\Coll(G^i,D^i,z^i)$.
      \State $C^{i+1} \gets C^i(z^i/D^i)$.
      \State $i \gets i+1$.   
    \EndWhile
  }

  As the algorithm runs it maintains the invariant that $C^i$ is a basic minimum
  odd marked cut of $G^i$.  Observe that this is initially true for $C^0 = C$
  because $M^0 = M$ and $C$ is a basic minimum odd marked cut of $G$.  Suppose $C^i$ is
  a basic minimum odd marked cut of $G^i$.  The basic minimum marked cut $D^i$
  of $G^i$ with $C^i \spe D^i$ or $C^i \cap D^i = \es$ is guaranteed to exist
  by Lemma~\ref{lem:4} (if the cut given by that lemma is not basic
  there must be a basic minimum marked cut strictly within it that continues to
  satisfy the intersection properties with $C^i$).  If $D^i$ is an odd marked
  cut, the algorithm halts at line~\ref{wmoc:output}.  Otherwise the graph $G^i$
  and cut $C^i$ are collapsed relative to $D^i$.  The second property of
  Lemma~\ref{lem:coll} implies that $C^{i+1}$ is a basic minimum odd marked cut
  of $G^{i+1}$.  Thus the invariant holds. 

  Lemma~\ref{lem:coll} also implies that as the algorithm runs, $|M^i|$ is even
  and $M^{i+1} \ssn M^i$.  The invariant and $M^0 = V$ imply that the test in
  line~\ref{wmoc:test} will be successful and cause the algorithm to halt within
  $\frac{|V|}{2}$ iterations.  Because $D^i$ is a marked cut of $G^i$, when
  line~\ref{wmoc:output} is reached $D^i \cap M^i$ and $M^i\bs D^i$ are non-empty
  disjoint sets.  This means that distinct $s$ and $t$ exist and are returned by
  the algorithm.  Let $r$ be the value of $i$ when the algorithm halts.  Fix any
  $s \in D^r \cap M^r$ and $t \in D^r \bs M^r$.  We use the shorthand $K^i$ to
  denote the canonical minimum \cu $\MC{G^i,s,t}$ for $0 \le i \le r$.  It
  remains to argue that $K^0 = \MC{G,s,t}$ is a minimum odd marked cut.

  Since the cut $D^r$ is a minimum marked cut of $G^r$ and $s,t \in M^r$, $D^r$
  is also a minimum \cu of $G^r$ and it has the same value as the canonical
  minimum \cu $K^r$.  This implies that $K^r$ is a minimum
  marked cut of $G^r$ because $s,t \in M^r$.  By Lemma~\ref{lem:1}, $D^r$
  contains $K^r$, but $D^r$ is also basic, so we conclude that $D^r
  = K^r$.

  The first property of Lemma~\ref{lem:coll} implies that $\val(C^i(z^i/D^i))
  = \val(C^{i+1})$ for all $0 \le i < r$, and hence that $\val(C^i) = \val(C)$
  for all $0 \le i \le r$.  Since $K^r$ is a minimum marked cut of
  $G^r$ and $C^r$ is a marked cut of $G^r$, $\val(K^r) \le \val(C^r)
  = \val(C)$.  The first part of Lemma~\ref{lem:coll} also implies that
  $\val(K^{i+1}(D^i/z^i)) = \val(K^{i+1})$ for all $0 \le i <
  r$.  Since $K^{i+1}(D^i/z^i)$ is an \cu of $G^i$,
  $\val(K^i) \le \val(K^{i+1}(D^i/z^i))$ for all $0 \le i <
  r$.  Hence we conclude that $\val(K^0) \le \val(K^r) \le
  \val(C)$.

  It remains to argue that $K^0$ is an odd marked cut.  Since $K^r$ is an odd
  marked cut, it suffices to show that $K^i$ is an odd marked cut if $K^{i+1}$
  is an odd marked cut, for all $0 \le i < r$.  To this end assume that
  $K^{i+1}$ is an odd marked cut.  Apply Lemma~\ref{lem:3} with $G^i$, $K^i$ and
  $D^i$; we note that (i) $K^i \spe D^i$, (ii) $K^i \cap D^i = \es$, or (iii)
  $\set{s,t} \cap D^i \neq \es$.  Property (iii) cannot hold because $s$ and $t$
  are selected after $D^i$ was collapsed.  This means that $K^i$ either contains
  all of $D^i$ or is disjoint from $D^i$.  Because the $K^i$ and $K^{i+1}$ are
  canonical \cus, $$K^i \sse K^{i+1}(D^i/z^i) \sse (K^i(z^i/D^i))(D^i/z^i) =
  K^i.$$ As the algorithm did not halt at step $i$, $|D^i \cap M^i|$ is even and
  thus $K^i$ is an odd marked \cu.

  We conclude that $K^0$ is an odd marked cut with value at most that of a
  minimum odd marked cut $C$ of $G$.  Therefore $K^0 = \MC{G,s,t}$ is a minimum odd
  marked cut.
\end{proof}

Consider the following procedure for locating a set of minimum odd marked cuts
in a marked symmetric graph $G = (V,c,M)$: For all distinct $s, t \in M$ compute
the canonical minimum \cu $\MC{G,s,t}$, eliminate those cuts which are not odd,
then eliminate those cuts which are not minimal.  Theorem~\ref{thm:existscut}
indicates that some cuts remain and that those cuts are minimum odd cuts of $G$.
Note that the algorithm \WMOC in the proof of Theorem~\ref{thm:existscut} is
used only in the analysis and not actually run during the above procedure.  This
simple procedure for defining a non-empty set of minimum odd cuts is critical to
expressing the separation problem for the matching polytope in \FPC.

\section{Application: Maximum Matching}\label{sec:matching}

Let $G = (V,E)$ be an undirected graph.  A \emph{matching} $M \subseteq E$ is
defined by the property that no two edges in $M$ are incident to the same
vertex.  A matching $M$ is \emph{maximum} if no matchings with size larger than
$M$ exist.  A maximum matching is \emph{perfect} if every vertex in $G$ is
incident to some edge in the matching (i.e., $|M| = \frac{|V|}{2}$).

\subsection{Maximum Matching Program}

Maximum matching has an elegant representation as a linear program.  In fact, it
is an instance of a slightly more general problem: $b$-matching.  Let $c \in
\nnQQ^E$, $b \in \NN^V$ and $A \in \set{0,1}^{V \times E}$ be the incidence
matrix of the undirected graph $G = (V,E)$: the columns of $A$ correspond to the
edges $E$ and the rows to the vertices $V$, and $A_{ve}=1$ if edge $e$ is
incident on vertex $v$.  Alternatively we view edges $e \in E$ as two-element
subsets of $V$.  The goal of the $b$-matching problem is to determine an optimum
of the following \emph{integer} linear program
\begin{equation}
  \label{eqn:matchingLP0}
    \max\:  c^\top y \quad \text{ subject to} \quad  Ay \le b, \:
     y \ge 0^E. 
\end{equation}
We obtain the usual maximum matching problem in the special case where $b = 1^V$
and $c = 1^E$.

Generically, integer programming is \NP-complete, so instead of trying to
directly solve the above program we consider the following relaxation as a
rational   
linear program.
\begin{equation}
  \label{eqn:matchingLP}
  \begin{aligned}
    \max\;& c^\top y \quad\quad \text{ subject to} \\ 
    & Ay \le b, \\
    & y \ge 0^E, \\
    & y(W) \le \frac{1}{2}(b(W) - 1),\;\; \forall W \subseteq V \text{ with }
    b(W) \text{ odd},
  \end{aligned}
\end{equation}
where $y(W) \de \sum_{e \in E, e \sse W} y_e$ and $b(W) \de \sum_{v \in W} b_v.$
Here we have added a new set of constraints over subsets of the vertices.  The
integral points which satisfy \eqref{eqn:matchingLP0}, also satisfy the
additional constraints that are added in \eqref{eqn:matchingLP}.  To see this,
let $y$ be a feasible integral solution, consider some set $W$ with $b(W)$ odd.
If $|W| = 1$, then $y(W) = 0$ because no edges have both endpoints in $W$, so
assume $|W| \ge 2$.  It follows that $2y(W) \le b(W)$, by summing the constraints
of $Ay \le b$ over $W$ with respect to only the edges with \emph{both} endpoints
in $W$.  Since $b(W)$ is odd, $\frac{1}{2}b(W)$ is half integral, but $y(W)$ is
integral because $y$ is an integral solution; this means the constraint $y(W)
\le \frac{1}{2}(b(W) - 1)$ is a valid constraint for all integral solutions.
In fact \cite{E65} shows something stronger.
% the extremal points of the linear program \eqref{eqn:matchingLP} are exactly the maximum $b$-matchings of $A$.

\begin{lem}[{\cite[Theorem P]{E65}}]
  \label{lem:int}
  The extremal points of the linear program \eqref{eqn:matchingLP} are integral
  and are the extremal solutions to the $b$-matching problem.
\end{lem}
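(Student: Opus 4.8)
Write $P$ for the polytope defined by the three constraint families of \eqref{eqn:matchingLP}. The discussion preceding the lemma already shows that every $b$-matching --- every integral $y \ge 0^E$ with $Ay \le b$ --- lies in $P$; conversely the integral points of $P$ are exactly the $b$-matchings, since the odd-set constraints only exclude non-integral points. Hence $\conv\{b\text{-matchings}\} \sse P$, and it suffices to prove the single assertion that \emph{every vertex of $P$ is integral}: this forces $P = \conv\{b\text{-matchings}\}$, whence the extremal points of $P$ are precisely the extremal $b$-matchings.

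The plan is a minimal-counterexample argument. Suppose the assertion fails and choose $(G,b)$ with $|V|+|E|$ minimum such that $P = P(G,b)$ has a fractional vertex $y^*$. First I would make the routine reductions: deleting isolated vertices does not change $P$ up to the evident projection, and a vertex of $P$ for disconnected $G$ is a product of vertices of the component polytopes, so by minimality $G$ is connected with at least one edge; moreover $y^*_e>0$ for every edge $e$, for otherwise the restriction of $y^*$ to $E\setminus\{e\}$ is a fractional vertex of $P(G-e,b)$, contradicting minimality (and then $b_v\ge 1$ for all $v$). Being a vertex, $y^*$ is the unique solution of its tight constraints, $|E|$ of which are linearly independent; by the previous step none is a non-negativity constraint, so each is a degree constraint $\sum_{e\ni v}y_e=b_v$ or a proper odd-set constraint $y(W)=\tfrac12(b(W)-1)$ with $|W|\ge 2$ (singleton odd-set constraints give the identically-zero functional and cannot occur in an independent system).

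Case one: no proper odd-set constraint is tight. Then $y^*$ is pinned down by at most $|V|$ linearly independent degree constraints, so $|E|\le|V|$ and $G$ is a tree or unicyclic. If $G$ is a tree its incidence matrix is totally unimodular, so the nonsingular square degree subsystem has an integral inverse and $y^*$ is integral --- a contradiction. If $G$ is unicyclic it is non-bipartite with an odd cycle, and all $|V|$ degree constraints are tight; summing them gives $y^*(V)=\tfrac12 b(V)$, so if $b(V)$ is odd then $y^*(V)>\tfrac12(b(V)-1)$ violates the odd-set constraint for $W=V$, while if $b(V)$ is even a short parity computation --- solving the tight degree system outward along the hanging trees and then around the odd cycle, where the coefficient matrix is $I+P$ for the cyclic shift $P$ and $(I+P)^{-1}$ has entries in $\tfrac12\ZZ$ --- shows $y^*$ is integral, again a contradiction.

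Case two: some proper odd-set constraint is tight. This is the crux. I would uncross the family $\mathcal{F}$ of tight odd sets in the chosen independent subsystem: using the supermodularity of $W\mapsto y(W)$ together with the degree constraints, a crossing pair $W_1,W_2\in\mathcal{F}$ can be replaced by two tight sets that cross strictly less --- by $W_1\cap W_2$ and $W_1\cup W_2$ when $b(W_1\cap W_2)$ is odd, and otherwise by $W_1\setminus W_2$ and $W_2\setminus W_1$, whose $b$-weights are then odd --- so $\mathcal{F}$ may be taken laminar. Pick $W\in\mathcal{F}$ inclusion-minimal, with $W\neq V$ (if the minimal element is $V$ then $\mathcal{F}=\{V\}$ and Case one applies). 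Summing the degree constraints over $W$ together with $y^*(W)=\tfrac12(b(W)-1)$ shows the total weight on $\delta(W)$ is at most $1$, so contracting $W$ to a single new vertex $z$ with $b(z):=1$ and deleting the coordinates inside $W$ produces a point $y'$ of $P$ for the strictly smaller contracted instance. One then argues that $y'$ and the restriction $y^*|_{E[W]}$ are vertices of their respective (smaller) polytopes --- the latter of the face of $P(G[W],b|_W)$ cut out by $y(E[W])=\tfrac12(b(W)-1)$ --- hence integral by the inductive hypothesis, so that $y^*$ is integral, a contradiction. The main obstacle is exactly this case: the uncrossing, and the constraint-by-constraint check that the contracted constraints follow from those satisfied by $y^*$ (an odd set through $z$ must pull back, using $b(W)$ odd and $b(z)=1$, to an odd set of $G$) and that the two restrictions really are vertices; the parity conditions on $b$, rather than on cardinalities as in the ordinary $b=1$ matching polytope, have to be threaded through with care. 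Should that bookkeeping prove too delicate, the fallback is to prove instead that $\max\{c^\top y:y\in P\}$ has an integral optimum for every $c$, by exhibiting a primal--dual (blossom) certificate for $b$-matchings; since $P$ is bounded, every vertex is the unique maximiser for some $c$, so this too gives integrality of all vertices.
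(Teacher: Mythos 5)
The paper offers no proof of this lemma at all: it is imported wholesale from Edmonds \cite{E65} (``Theorem P''), and the fallback you mention at the end --- an integral primal--dual (blossom) certificate for every objective $c$, combined with boundedness of the polytope --- is in essence Edmonds' own argument, i.e., the proof that the citation points to. Your main plan is instead the standard polyhedral route (minimal counterexample, only degree and odd-set constraints tight at a fractional vertex, uncrossing the tight odd sets to a laminar family, contracting a minimal tight set and inducting), which is how the $b$-matching polytope theorem is proved in the later literature (Pulleyblank--Edmonds, Schrijver). So the strategy is legitimate and necessarily different from the paper's (non-)proof, but as submitted it is an outline with genuine gaps rather than a proof.

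The gaps sit exactly where you say the crux is. (a) That the contracted point $y'$ and the restriction of $y^*$ to $E[W]$ are vertices of the smaller polytopes is asserted, not shown; a contraction or restriction of a vertex need not be a vertex, and the usual repair --- if one of them were a proper convex combination, lift the two witnesses back to points of $P$ averaging to $y^*$ --- is nontrivial here: a perturbation inside $W$ need not respect the degree constraints at vertices of $W$ once the boundary edges $\delta(W)$ (whose total weight is only bounded by $1$, with no further structure) are added back, and a perturbation of $y'$ must be extended across the cut into $E[W]$ without destroying tightness of $y(W)=\tfrac12(b(W)-1)$. (b) Uncrossing must preserve more than tightness: you need the resulting laminar family, together with the tight degree constraints, to still span the original independent tight system, so that $y^*$ remains its unique solution; this span-preservation step is standard but unaddressed. (c) The corner case where the minimal laminar member is $V$ is not covered --- there a proper odd set \emph{is} tight, so ``Case one applies'' is false as stated and a separate short argument is needed; likewise the unicyclic subcase of Case one needs the hanging-tree elimination and the parity of the alternating sums around the odd cycle written out. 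None of this looks fatal, but filling it in amounts to reproducing a known textbook proof; for the purposes of this paper the citation to Edmonds carries the entire burden, and of your two routes the fallback is the one that matches it.
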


Thus to solve $b$-matching it suffices to solve the relaxed linear program
\eqref{eqn:matchingLP}.  As mentioned before, it will not be possible to show
that \FPC can generally define a particular maximum matching, there can be
simply too many.  However, the above lemma means that the existence of a (likely
non-integral) feasible point $y$ of \eqref{eqn:matchingLP} with value $c^\top y$
witnesses the existence of a maximum $b$-matching with value at least $c^\top
y$.  In addition, the number of constraints in this linear program is
exponential in the size of the graph $G$.  Thus, we cannot hope to interpret
this linear program directly in $G$, using \FPC.  Rather what we can show is
that there is an \FPC interpretation which, given $G$, $b$ and $c$, expresses
the separation problem for the $b$-matching polytope in the linear
program~\eqref{eqn:matchingLP}.  Combining this with
Theorem~\ref{thm:opt-to-sep} gives an \FPC interpretation expressing the
$b$-matching optimum.

\subsection{Expressing Maximum Matching in \FPC}

%% We assume that instances of $b$-matching are given by a matrix $A \in
%% \set{0,1}^{V \times E}$, a bound vector $b \in \NN^V$ and an objective vector
%% $c \in \nnQQ^E$ for a vertex set $V$ and edge set $E$.\
%footnote{We
%  could instead take the matrix in the more standard vertex-vertex adjacency
%  form.  This would have complicated the discussion.   It is an easy exercise
%  to show that the two versions are equivalent in \FPC.}  
The $b$-matching
polytopes have a natural representation over $\vocmatch \de \vocmat \uplus
\vocvec$.  Although the number of constraints in the $b$-matching
  polytope may be large, the
individual constraints have size at most a polynomial in the size of the
matching instance.  Thus this representation is well-described.

We now describe an \FPC interpretation expressing the separation problem for the
$b$-matching polytope given a $\vocmatch$-structure coding the matrix $A$ and
bound vector $b$. As in the explicit
constraint setting, our approach is to come up with a definable set of violated
constraints iff the candidate point is infeasible.  We then define a canonical
violated constraint by summing this definable violated set.  Identifying violated
vertex and edge constraints can easily be done in \FPC as before.  However, it is
not immediately clear how to do this for the odd set constraints.

To overcome this hurdle we follow the approach of \cite{PR82}.  Let $y$ be
point which we wish to separate from the matching polytope.  Define $s \de b -
Ay$ to be the slack in the constraints $Ay \le b$.  Analogous to $b(W)$, define
$s(W) \de \sum_{v \in W} s_v$.  Observe that $2y(W) + y(W : V\bs W) + s(W) =
b(W)$ (here $y(W : V \bs W)$ is sum of edge variables with one endpoint in $W$
and one in $V \bs W$).  This translates the constraints $y(W) \le
\frac{1}{2}(b(W)-1)$ exactly to $y(W : V \bs W) + s(W) \ge 1$.  This means to
find a violated constraint of this type it suffices to find $W$ such that $y(W :
V \bs W) + s(W) < 1$.

Define a marked symmetric graph $H$ over vertex set $U \de V \cup \set{z}$
where $z$ is a new vertex.  Let $H$ have symmetric capacity $d$: $d(u,v) \de
y_e$ when $u,v \in V$ and $u,v \in e$, and $d(u,v) \de s_v$ when $u = z$ and
$v \in V$.
%% $$
%% c(u,v) \de
%% \begin{cases}
%%   y_e & e \in E \text{ with } u,v \in e, \\
%%   s_v, & v \in V, u = z.
%% \end{cases}
%% $$ 
Let $M \de \condset{v \in V}{b_v \text{ is odd}}$.  If $|M|$ is odd, add $z$ to $M$.
Thus we have a marked symmetric graph $H = (U,d,M)$.  Consider any odd marked cut $W$
of $H$, without loss of generality $z \not\in W$ (otherwise, take the
complement).  Observe that the value of edges crossing the cut is exactly $y(W :
V \bs W) + s(W)$; also note that $s(W)$ is odd.  Thus there is a minimum
odd marked cut $W$ of $H$ with value less than $1$ iff there is a violated odd set
constraint in \eqref{eqn:matchingLP}.  

By Theorem~\ref{thm:existscut}, there is a violated odd set constraint iff for
some $s,t \in M$ the canonical minimum \cu is an minimum odd marked cut with
value less than $1$.  We conclude, using Theorem~\ref{thm:cut} and
Lemma~\ref{lem:1}, that we can define a family of violated set constraints
within \FPC.  Summing these defined violated constraints produces a canonical
violated constraint which must be non-trivial by
Proposition~\ref{prop:sum-constraint}.  Thus, as in
Theorem~\ref{thm:explicit-sep} there is an \FPC interpretation expressing the
separation problem for the polytope in the linear program
\eqref{eqn:matchingLP}.

\begin{lem}
  \label{lem:sep-matching}
  There is an \FPC interpretation of $\vocvec$ in $\vocmatch \uplus \vocvec$
  expressing the separation problem for the $b$-matching polytopes
  with respect to  their natural representation as $\vocmatch$-structures.
\end{lem}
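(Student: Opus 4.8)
\noindent
The plan is to follow the template of Theorem~\ref{thm:explicit-sep}. Given the point $y \in \QQ^E$ that is to be separated from the $b$-matching polytope $P$ of~\eqref{eqn:matchingLP}, I would define in \FPC the set $\mathcal{S}$ of all constraints of~\eqref{eqn:matchingLP} violated at $y$, output $0^E$ when $\mathcal{S} = \es$, and otherwise output $\tfrac{1}{\inorm{a}}\,a$ where $a \in \QQ^E$ is the sum of the constraint rows indexed by $\mathcal{S}$. Since $0^E \in P$ (the entries of $b$ and all capacities are nonnegative), $P$ is nonempty, so Proposition~\ref{prop:sum-constraint} guarantees that whenever $\mathcal{S} \neq \es$ the vector $a$ is itself a violated constraint with $a \neq 0^E$, whence $\tfrac{1}{\inorm{a}}\,a$ is a legitimate separating hyperplane; by Lemma~\ref{lem:int} this is a separation oracle for the $b$-matching polytope. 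The degree constraints $(Ay)_v \le b_v$ (one per $v \in V$) and the nonnegativity constraints $-y_e \le 0$ (one per $e \in E$) are polynomially many and indexed by tuples over the universe, so recognising which of them lie in $\mathcal{S}$ and summing them is routine rational arithmetic over unordered index sets, available in \FPC by~\cite{H10}. The whole difficulty is to capture the violated \emph{odd-set} constraints $y(W) \le \tfrac{1}{2}(b(W)-1)$ in \FPC: there are exponentially many of them, and the logic cannot ``choose'' one.

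For these I would follow Padberg and Rao~\cite{PR82}. Writing $\mathbf{s} \de b - Ay$ for the slack and using $2y(W) + y(W:V\bs W) + \mathbf{s}(W) = b(W)$, the odd-set constraint for $W$ is violated at $y$ exactly when $y(W:V\bs W) + \mathbf{s}(W) < 1$. From $(A,b,y)$ one builds, by a routine \FPC-interpretation, a structure coding the marked symmetric capacitated graph $H = (U,d,M)$ on $U \de V \cup \set{z}$ for a fresh vertex $z$, with symmetric capacities $d(u,v) \de y_e$ when $\set{u,v} = e \in E$ and $d(z,v) \de \mathbf{s}_v$ for $v \in V$, and marking $M \de \condset{v \in V}{b_v \text{ odd}}$, to which $z$ is added if this set has odd cardinality. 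Every odd marked cut $W$ of $H$ with $z \notin W$ has $b(W)$ odd and value $y(W:V\bs W) + \mathbf{s}(W)$ in $H$, and conversely every $W \sse V$ with $b(W)$ odd is an odd marked cut of $H$; hence $H$ has a minimum odd marked cut of value $<1$ iff $y$ violates some odd-set constraint of~\eqref{eqn:matchingLP}. (If no $b_v$ is odd there are no odd-set constraints and this step is vacuous; otherwise $|M|$ after the possible addition of $z$ is even and positive, as Theorem~\ref{thm:existscut} needs.)

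Next I would apply Theorem~\ref{thm:existscut} to $H$: if $H$ has a minimum odd marked cut of value $<1$, then for some $s,t \in M$ the canonical minimum $(s,t)$-cut $\MC{H,s,t}$ is itself a minimum odd marked cut of $H$ and so has value $<1$. By Theorem~\ref{thm:cut} and Lemma~\ref{lem:1}, the formula $\xi(x,s,t)$ defines $\MC{H,s,t}$ uniformly in $s,t$, so the family
$$\mathcal{W} \de \bigcondset{\MC{H,s,t}}{s,t \in M,\ \MC{H,s,t}\text{ is an odd marked cut of }H\text{ of value }<1}$$
is \FPC-definable: membership of a vertex in $\MC{H,s,t}$, the parity of $|\MC{H,s,t} \cap M|$, and the comparison of its value with $1$ are all first-order or arithmetic over $\xi$ and $d$. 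For $W \in \mathcal{W}$ let $W^{\circ}$ be $W$ if $z \notin W$ and $U \bs W$ otherwise (both are odd marked cuts of $H$ of equal value, as $H$ is symmetric and $|M|$ even); the matching constraint $W$ witnesses is the row $a_W \in \QQ^E$ with $(a_W)_e = 1$ iff $e \sse W^{\circ}$, with bound $\tfrac{1}{2}(b(W^{\circ})-1)$ --- again \FPC-definable, indexed by the witnessing pair $(s,t)$. Taking $\mathcal{S}$ to be the violated degree and nonnegativity constraints together with this family of odd-set constraints, and applying the recipe of the first paragraph, yields the interpretation; correctness is exactly Proposition~\ref{prop:sum-constraint} together with the nonemptiness of $P$.

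The main obstacle is precisely these odd-set constraints: \FPC can neither enumerate the exponentially many candidate sets $W$ nor single out a violated one. All of the weight is carried by the graph-theoretic Theorem~\ref{thm:existscut}, which replaces that search by the \FPC-definable canonical minimum $(s,t)$-cuts of the auxiliary graph $H$, and by Proposition~\ref{prop:sum-constraint}, which makes summing \emph{all} the violated constraints harmless because $P$ is nonempty. Everything else --- building $H$ from $(A,b,y)$, reading off the rows $a_W$, and forming and normalising the sum --- is bookkeeping with rational vectors and matrices indexed by unordered sets, which \FPC handles by~\cite{H10}.
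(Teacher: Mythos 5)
Your proposal is correct and follows essentially the same route as the paper: the Padberg--Rao auxiliary marked graph $H$ built from the slack $b-Ay$, Theorem~\ref{thm:existscut} plus Theorem~\ref{thm:cut} and Lemma~\ref{lem:1} to define a non-empty \FPC-definable family of violated odd-set constraints via canonical minimum $(s,t)$-cuts, and then summing all violated constraints and normalising as in Theorem~\ref{thm:explicit-sep}, justified by Proposition~\ref{prop:sum-constraint} and the nonemptiness of the polytope. The handling of the degree and nonnegativity constraints, the complementation when $z$ lies in the cut, and the edge case $M=\es$ all match the paper's argument.
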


Like the maximum flow problem in Section~\ref{sec:maxflow}, the $b$-matching polytope
is both compact and nonempty.  
By combining Lemma~\ref{lem:sep-matching} and Theorem~\ref{thm:opt-to-sep} with
respect to the natural well-described representation of $b$-matching polytopes,
we conclude that there is an \FPC interpretation expressing the value of the
maximum $b$-matching of a graph.

\begin{thm}
  There is an \FPC interpretation of $\vocnum$ in $\vocmatch \uplus \vocvec$
  which takes a $\vocmatch \uplus \vocvec$-structure coding a $b$-matching
  polytope $P$ and a vector $c$ to a rational number $m$ indicating the value of
  the maximum $b$-matching of $P$ with respect to $c$.
\end{thm}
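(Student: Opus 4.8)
The plan is to assemble the statement from the two main ingredients already established. By Lemma~\ref{lem:sep-matching} there is an \FPC interpretation of $\vocvec$ in $\vocmatch \uplus \vocvec$ expressing the separation problem for the class of $b$-matching polytopes with respect to their natural representation as $\vocmatch$-structures, and this representation is well described (the constraint sizes are polynomial in the matching instance, even though the number of constraints is exponential). Feeding this separation interpretation into Theorem~\ref{thm:opt-to-sep}, with $\tau \de \vocmatch$, produces an \FPC interpretation $\Xi$ of $\vocnum \uplus \vocvec$ in $\vocmatch \uplus \vocvec$ that, on input a $\vocmatch$-structure coding $(A,b)$ together with an objective $c \in \QQ^E$, outputs a pair $(f,y)$ solving the linear optimisation problem for the polytope $P$ of the linear program \eqref{eqn:matchingLP}.

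Next I would rule out the degenerate outputs. The polytope $P$ is bounded, since for every edge $e$ with endpoint $v$ the constraints $Ay \le b$ and $y \ge 0^E$ force $0 \le y_e \le b_v$; hence $P$ is never unbounded in the direction of $c$, and $\Xi$ always returns $f = 0$. The polytope $P$ is also nonempty, since $0^E$ satisfies $Ay \le b$, $y \ge 0^E$, and every odd-set constraint $y(W) \le \tfrac{1}{2}(b(W)-1)$ (as $b(W)$ odd implies $b(W) \ge 1$). Consequently the vector $y$ returned by $\Xi$ lies in $P$ and satisfies $c^\top y = \max\condset{c^\top x}{x \in P}$. Now I invoke Lemma~\ref{lem:int}: since $P$ is a nonempty bounded, hence pointed, rational polytope, the linear functional $x \mapsto c^\top x$ attains its maximum over $P$ at an extremal point of $P$, and by Lemma~\ref{lem:int} that point is an integral $b$-matching; therefore $\max\condset{c^\top x}{x \in P}$ equals the value of a maximum $b$-matching of $G$ with respect to $c$. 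Finally, using the \FPC-definability of dot products of rational vectors from \cite{H10}, I extend $\Xi$ by one further definable step producing the $\vocnum$-structure coding $m \de c^\top y$; since \FPC interpretations compose, this yields the required \FPC interpretation of $\vocnum$ in $\vocmatch \uplus \vocvec$.

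I expect no serious obstacle: essentially all of the work has already been done in Lemma~\ref{lem:sep-matching} (the \FPC separation oracle, which is where the minimum-odd-cut machinery of Theorem~\ref{thm:existscut} and the canonical minimum cut of Theorem~\ref{thm:cut} enter) and in Theorem~\ref{thm:opt-to-sep} (the folding-based reduction of optimisation to separation). The only points needing care are \emph{(i)} verifying that $P$ meets the hypotheses of Theorem~\ref{thm:opt-to-sep}, namely well-describedness, boundedness and nonemptiness, all of which are immediate from the linear program \eqref{eqn:matchingLP}; and \emph{(ii)} checking that the representation of $b$-matching polytopes used in Lemma~\ref{lem:sep-matching} is exactly the one expected by Theorem~\ref{thm:opt-to-sep}, so that the separation interpretation plugs in directly and the whole composition stays inside \FPC.
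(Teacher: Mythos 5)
Your proposal is correct and follows essentially the same route as the paper: combine Lemma~\ref{lem:sep-matching} with Theorem~\ref{thm:opt-to-sep} for the well-described natural representation, use boundedness and nonemptiness of the $b$-matching polytope to rule out the degenerate outputs, and appeal to Lemma~\ref{lem:int} to identify the LP optimum with the maximum $b$-matching value. The only addition is your explicit final step of defining $m = c^\top y$ in \FPC, which the paper leaves implicit.
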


%% \begin{thm}
%%   Let $A$ be a relation interpreted as the vertex-edge adjacency matrix of an
%%   undirected graph.  Let $b$ be a relation interpreted as associating positive
%%   integers with the vertices of the graph.  Let $c$ be a relation interpreted as
%%   associating non-negative real numbers to the edges of the graph.  There is a
%%   \FPC formula on $\set{A,b,c}$-structures defining the value of the maximum
%%   $b$-matching of graphs.
%% \end{thm}

\section{Conclusion}

Our main result is that the linear programming problem can be expressed
in fixed-point logic with counting---indeed, that the linear optimisation problem can be expressed in \FPC
for any class of polytopes for which the separation
problem can be defined in \FPC.  As a consequence, we solve an open problem of \cite{BGS99} concluding
that there is a formula of fixed-point logic with counting which defines the size of a maximum $b$-matching in a graph.  This
is one demonstration of the power of the ellipsoid method and linear
optimisation that can be brought to bear even in the setting
of logical definability.  From here, there are number of natural research directions to
consider.

\paragraph{Convex programming}  A polytope is an instance of much more %
general geometric object: a convex set.  The robust
nature of the ellipsoid method means it has
been extended to help solve more general optimisation
problems, e.g., semi-definite programs and quadratic programs.  It
seems likely that our methods can be extended to these settings.

%% \marginnote{M: I haven't thought about this at length with respect to the new
%% proof in Section~\ref{sec:opt-sep}.  It seems quite likely that the proof should
%% extend to general convex set because the main property used is
%% convexity.
%% A: I have modified the last sentence to reflect this.}

%% difficult optimization problems.  For example, the standard method to solve the
%% separation problem for semi-definite programs involves locating eigenvectors
%% with nonpositive eigenvalues (indeed, perhaps an eigenbasis) to witness
%% violations of positive-semi-definitiness constraints.  Such choices cannot be
%% expressed in the logic \FPC. 

 %% Note that the proof of Theorem~\ref{thm:opt-to-sep}
%% skirts the choice of basis issue by working simultaneously with a set indistinguishable vectors forming part of a basis. % derived from the matrix defining the ellipsoid and indexed by the variable set.

\paragraph{Completeness}  Linear programming is complete for polynomial time
under logspace reductions \cite{DLR79}.  It follows from our results
that it cannot be complete for \PT under logical reductions such as
first-order interpretations, since this would imply that \PT is
contained in \FPC.  Could it still be the case that linear
programming is complete for \FPC under such weak reductions?  Or
perhaps \FOC reductions?
Even if linear programming is not complete, 
there may be other interesting combinatorial problems that
can be expressed in \FPC via reduction to linear programming. 
%(see \cite{KV12} for some possibilities).  
There has also been 
%considerable 
some work
examining generalisations and improvements to the $b$-matching approach we
followed 
%\cite{PR82} 
(e.g., \cite{CF96%,LRT04
}), and 
it is possible these results
%it is worth examining how much of this
can also be replicated in \FPC.

\paragraph{LP hierarchies and integrality gaps}  Another intriguing connection
between counting logics and linear programming is established
in~\cite{AM12,GO12} where it is shown that the hierarchy of
Sherali-Adams relaxations \cite{SA90} 
of the graph isomorphism integer
program interleaves with equivalence in $k$-variable logic with
counting ($C^k)$.  It is suggested~\cite{AM12} that inexpressibility results for
$C^k$ could be used to derive integrality gaps for such relaxations.
It is a consequence of the results in this paper
that the Sherali-Adams approximations of not only isomorphism, but of
other combinatorial problems can be expressed in \FPC.  Do our results provide another
route to using inexpressibility in \FPC to prove
integrality gaps?

%% In particular, these study
%% the tight interleaving of the successive approximations to graph isomorphism by
%% the Sherali-Adams hierarchies \cite{SA90} of linear programs and pebble-game
%% equivalence with counting. \emph{BH: Mention here \FPC-definability of $C^k$-equivalence?}\marginnote{Ref 2}

\section*{Acknowledgments}

The authors would like to thank Siddharth Barman for his helpful comments on an
early draft of this paper and the anonymous reviewers for their constructive
suggestions.
%%  and
%% kind comments.

%\section{Biblography}
\bibliographystyle{amsalpha}
\bibliography{ChoicelessLP}

%\appendices
% \input{ellipsoid_long}
%\input{ellipsoid_fulldim_appendix}
%\input{appendix}

\end{document}